\begin{document}

\newtheorem{lem}{Lemma}
\newtheorem{prop}{Proposition}
\newtheorem{cor}{Corollary}
\newtheorem{remark}{Remark}
\newtheorem{defin}{Definition}
\newtheorem{thm}{Theorem}

\newcounter{MYtempeqncnt}

\title{Joint Wireless Information and Energy Transfer in a $K$-User MIMO Interference Channel}

\author{Jaehyun Park,~\IEEEmembership{Member,~IEEE,}
     Bruno Clerckx,~\IEEEmembership{Member,~IEEE,}
\thanks{J. Park and B. Clerckx are with the Department of Electrical and Electronic
Engineering, Imperial College London, South Kensington Campus
London SW7 2AZ, United Kingdom (e-mail:\{j.park, b.clerckx\}@imperial.ac.uk)}
}

\maketitle

\begin{abstract}
Recently, joint wireless information and energy
transfer (JWIET) methods have been proposed to relieve the battery limitation of wireless devices. However, the JWIET in a general K-user MIMO interference channel (IFC) has been unexplored so far. In this paper, we investigate for the first time the JWIET in K-user MIMO IFC, in which
receivers either decode the incoming information data (information
decoding, ID) or harvest the RF energy (energy harvesting, EH). In the
K-user IFC, we consider three different scenarios
according to the receiver mode -- i) multiple EH receivers and a single ID receiver, ii) multiple IDs and a single EH, and iii) multiple IDs and multiple EHs. For all scenarios, we have found a common necessary condition of the optimal
transmission strategy and, accordingly, developed the transmission strategy that satisfies the common necessary condition, in which all the transmitters transferring energy exploit a rank-one energy beamforming. Furthermore, we have also proposed an iterative algorithm to optimize the covariance matrices of the transmitters that transfer information and the powers of the energy beamforming transmitters simultaneously, and identified the corresponding achievable rate-energy tradeoff region. Finally, we have shown that by selecting EH receivers according to their signal-to-leakage-and-harvested energy-ratio (SLER), we can improve the achievable rate-energy region further.
\end{abstract}

\begin{keywords}
Joint wireless information and energy transfer, K-user MIMO interference
channel, Rank-one beamforming
\end{keywords}

\section{Introduction}
\label{sec:intro}
One of the main challenges in modern wireless communication system is that wireless devices are resource-constrained, mainly due to battery limitation. Following the popularity of smart phones and various heavy-battery-consuming applications, 4th generation (4G) and beyond 4G standards also consider ways to address the battery limitation of wireless devices (e.g. device-to-device communications) \cite{3GPPMTC}. During the last decade, there has been a lot of
interest to transfer energy wirelessly and recently,
radio-frequency (RF) radiation has become a viable source for
energy harvesting. It is nowadays possible to transfer the energy
wirelessly with a reasonable efficiency over small distances and,
furthermore, wireless sensor networks (WSNs) in which the
sensors are capable of harvesting RF energy to power their own
transmissions have been introduced in the industry (\cite{Soljacic,
Yates, Vullers, Fiez} and references therein).

Until now, wireless energy transfer has been developed
independently from the wireless information transfer.
Interestingly, because RF signals carry information as well as
energy, ``joint wireless information and energy transfer (JWIET)'' has
attracted significant attention very recently \cite{Zhang1, Zhang2, KHuang1,
Ozel, RRajesh, RRajesh1, KIshibashi, ANasir, YLuo}. Previous works
have studied the fundamental performance limits and the optimal
transmission strategies of the JWIET in various communication scenarios such as the downlink of a cellular system with a single base
station (BS) and multiple mobile stations (MSs) \cite{KHuang1}, the cooperative relay system \cite{ANasir} and the broadcasting system \cite{Zhang1, Zhang2} with a single energy
receiver and a single information receiver when they are
separately located or co-located.
Recently, considering multi-user MISO scenario, several transmission strategies and power allocation methods have been proposed \cite{XuZhang, ZhouZhangHo2, KwanSchober}. Furthermore, there have been several studies of JWIET in the interference channel (IFC)
\cite{Tutuncuoglu1, Tutuncuoglu2, KHuang2, ChenLiChang, ParkBruno}. In \cite{Tutuncuoglu1,
Tutuncuoglu2},  the optimal power scheduling at the energy harvesting transmitters are proposed for two-user single-input single-output (SISO) IFC such that the sum-rate is maximized for given harvested energy constraints. In \cite{KHuang2}, JWIET in multi-cell cellular networks is investigated, where all the BSs and MSs have a single antenna. In \cite{ChenLiChang}, by considering two-user single-input
multiple-output (SIMO) IFC, the system throughput is maximized subject to individual energy harvesting constraints and power constraints and extended it to K-user MISO IFC. Note that because the interference has different impacts on the performances of information decoding (ID) (negatively) and energy harvesting (EH) (positively) at the receivers, the transmission strategy for JWIET is a critical issue especially in IFC.
To the best of the authors¡¯ knowledge, JWIET in the general K-user MIMO IFC (which describes modern advanced communication systems) has not been addressed so far. Recently, in \cite{ParkBruno}, a JWIET in a two-user MIMO IFC has been studied and a necessary condition of the optimal transmission strategy for the two-user MIMO IFC has been derived. That is, in a two-user MIMO IFC, the energy transmitter may create a rank-one beam with the aim to either maximize the energy harvested at the EH receiver or minimize the interference at the ID receiver. Alternatively, it may generate multi-rank beams allocating its power on both directions. However, in \cite{ParkBruno}, it is proved that to achieve the optimal rate-energy (R-E) performance, the energy transmitter should take a rank-one beamforming strategy with a proper power control.

In this paper, we extend the results obtained in \cite{ParkBruno} and investigate JWIET in a K-user MIMO IFC, in which multiple MIMO transceiver pairs coexist and each receiver either decodes the incoming information data or harvests the RF energy. Throughout the paper, it is assumed that the receivers cannot perform ID and EH operations simultaneously, because existing circuits that harvest energy from the
received RF signal are not yet able to decode the information
carried through the same RF signal \cite{Zhang1, Zhang2,
ZhouZhangHo}. In \cite{Zhang1}, considering this practical issue, two different JWIET methods for MIMO broadcasting system - time switching and power splitting methods - have been proposed. In the time switching method, the receiver switches between ID mode and EH mode over time, while in the power splitting method, the received signal is split into two signals with different power that are the inputs of two disjoint ID and EH circuits, respectively. Because the power splitting method requires higher hardware complexity at the receivers (e.g., RF signal splitter), in this paper, we consider that each receiver switches between ID and EH modes in time-basis. Accordingly, we have three different scenarios
according to the receiver mode -- i) multiple EH receivers and a single ID receiver, ii) multiple IDs and a single EH, and iii) multiple IDs and multiple EHs.
For all scenarios, the optimal achievable R-E trade-off region is not easily identified and the optimal
transmission strategy is still unknown. However, in this paper, we have shown that the optimal energy transmitter's strategies for all three scenarios also become optimal for the {\it{properly-transformed}} two-user MIMO IFC. Therefore, we have found a common necessary condition of the optimal transmission strategy and developed the transmission strategy that satisfies the common necessary condition, in which all the transmitters transferring energy exploit a rank-one energy beamforming. Here, we have modified three different rank-one beamforming schemes, originally developed for two-user MIMO IFC \cite{ParkBruno} - maximum energy beamforming (MEB), minimum leakage beamforming (MLB), and signal-to-leakage-and-energy ratio (SLER) maximization beamforming, suitable to K-user MIMO IFC.
Given the rank-one beamforming at the energy transmitters, we have formulated the optimization problem for the achievable rate-energy region. However, because it is non-convex, we have proposed an iterative algorithm to optimize the covariance matrices of the transmitters that transfer information and the powers of the energy beamforming transmitters, simultaneously. We have shown that the powers of the energy beamforming transmitters converges monotonically, which guarantees the convergence of the proposed algorithm. In addition, when the number of energy transmitters increases, the ID receivers are affected by an increasing number of interfering beams (directions and power) that affect their information rate performance. This leads us to develop a new SLER maximizing beamforming with beam tilting. Here, the beam tilting means that we change the direction of an energy beam without changing its transmit power. Finally, we have proposed an efficient SLER-based EH transceiver selection method that further improves the achievable R-E region.

The rest of this paper is organized as follows. In Section
\ref{sec:systemmodel}, we introduce the system model for K-user
MIMO IFC. In Section \ref{sec:optimal_tx_strat}, we discuss the necessary condition for the optimal transmission strategies in the K-user MIMO IFC. In Section \ref{sec:rankoneBF_REregion}, we investigate the achievable R-E region for K-user MIMO IFC and, after formulating the optimization problem, propose an iterative algorithm to solve it.
In Section \ref{sec:simulation},
we provide several simulation results
and in Section \ref{sec:conc} we give our
conclusions.

Throughout the paper, matrices and vectors are represented by bold
capital letters and bold lower-case letters, respectively. The
notations $({\bf A})^{H} $, $({\bf A})^{\dagger} $, $({\bf A})_i$,
$[{\bf A}]_i$, $tr({\bf A})$, $\det({\bf A})$, and $\sigma_{k}({\bf A})$ denote the
conjugate transpose, pseudo-inverse, the $i$th row, the $i$th
column, the trace, the determinant, and the $k$th largest singular value of a matrix ${\bf A}$,
respectively. The matrix norm $\|{\bf A}\|$ and $\|{\bf A}\|_F$
denote the 2-norm and Frobenius norm of a matrix ${\bf A}$,
respectively, and the vector norm $\|{\bf a}\|$ denotes the 2-norm
of a vector ${\bf a}$. In addition, $(a)^+ \triangleq \max (a, 0)$
and ${\bf A} \succeq {\bf 0}$ means that a matrix ${\bf A}$ is positive
semi-definite. The matrix $diag\{{\bf A}_1,...,{\bf A}_M\}$ is a block diagonal matrix with block diagonal elements ${\bf A}_m$. Finally, ${\bf I}_{M}$ denotes the $M \times M$
identity matrix.

\section{System model}
\label{sec:systemmodel}

\begin{figure}
\begin{center}
\begin{tabular}{c}
\includegraphics[height=5.cm]{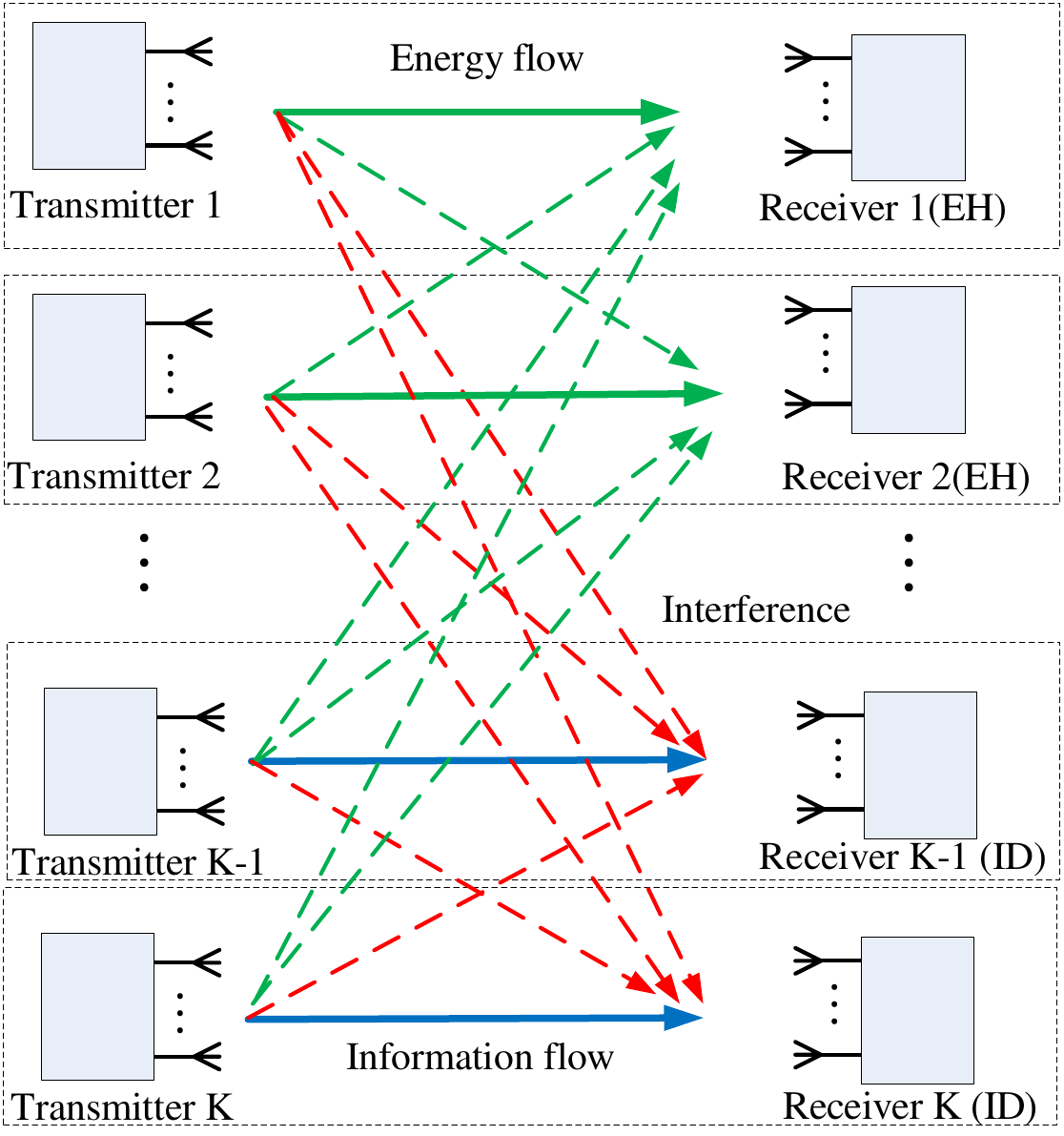}
\end{tabular}
\end{center}
\caption[JWIET_K_userIC_block]
{ \label{JWIET_K_userIC_block} K-user MIMO IFC.}
\end{figure}

We consider a K-user MIMO IFC system where $K$ transmitters,
each with $M_t$ antennas, are simultaneously transmitting their
signals to $K$ receivers, each with $M_r$ antennas, as shown in
Fig. \ref{JWIET_K_userIC_block}. Note that each receiver can
either decode the information or harvest energy from the received
signal, but it cannot execute the ID and EH operations at the same time due to the hardware limitations. That
is, each receiver can switch between ID
mode and EH mode at each frame or time slot. Here, the mode decided by the
receiver is also sent to all the transmitters through the zero-delay and
error-free feedback link at the beginning of the frame.
It is assumed that the transmitters have perfect knowledge of the CSI of their
associated links (i.e. the links between a transmitter and all
receivers) but do not share those CSI among the transmitters.
Furthermore, $M_t = M_r = M$ for simplicity, but it can be
extended to general antenna configurations. Assuming a frequency
flat fading channel, which is static over several frames, the
received signal ${\bf y}_i \in \mathbb{C}^{M \times 1}$ for $i=1,...,
K$ can be written as
\begin{eqnarray}\label{Sys_1}
{\bf y}_i = \sum_{j=1}^{K}{\bf H}_{ij} {\bf x}_j + {\bf
n}_i,
\end{eqnarray}
where ${\bf n}_i \in \mathbb{C}^{M\times 1} $ is a complex white
Gaussian noise vector with a covariance matrix $\sigma_n^2{\bf
I}_{M}$ and ${\bf H}_{ij} \in \mathbb{C}^{M\times M}$ is the
normalized frequency-flat fading channel from the $j$th
transmitter to the $i$th receiver such as $\sum_{l,k= 1}^{M}
|h_{ij}^{(l,k)}|^2 = \alpha_{ij}M$ \cite{Loyka}. Here,
$h_{ij}^{(l,k)}$ is the $(l, k)$th element of ${\bf H}_{ij}$ and
$\alpha_{ij}\in [0,1]$. We
assume that $ {\bf H}_{ij}$ has a full rank. 
The vector ${\bf x}_j \in \mathbb{C}^{M \times 1}$ is the transmit
signal, in which the independent messages can be conveyed, at the
$j$th transmitter with a transmit power constraint for $j= 1,...,K$ as
\begin{eqnarray}\label{Sys_2}
E[\|{\bf x}_j\|^2] \leq P  {\text{ for }} j=1,...,K.
\end{eqnarray}
In this paper, the SNR measured at the $i$th receiver is 
defined as $ SNR_i = \frac{E[\|H_{ii}\|_F^2 \|x\|^2]}{E[\|n\|^2]}
= \frac{\alpha_{ii}P}{\sigma_n^2}$. Throughout the paper, to ease
readability, it is assumed without loss of generality that
$\sigma_n^2 = 1$, unless otherwise stated. General environments,
characterized by other values of the channel/noise power, can be
described simply by adjusting $P$.

When the receiver operates in ID mode, the achievable rate at
$i$th receiver, $R_i$, is given by \cite{Scutari}
\begin{eqnarray}\label{Sys_3_revised}
R_i = \log \det ({\bf I}_{M} +{\bf H}_{ii}^H{\bf R}_{-i}^{-1}{\bf
H}_{ii}{\bf Q}_i ),
\end{eqnarray}
where ${\bf R}_{-i}$ indicates the covariance matrix of noise and
interference at the $i$th receiver, i.e.,
\begin{eqnarray}\nonumber
{\bf R}_{-i} = {\bf I}_{M} + \sum_{j\neq i}^{K} {\bf H}_{ij}{\bf Q}_j {\bf
H}_{ij}^H.
\end{eqnarray}
Here, ${\bf Q}_j = E [{\bf x}_j {\bf x}_j^H ]$ denotes the
covariance matrix of the transmit signal at the $j$th transmitter
and, from (\ref{Sys_2}), $tr({\bf Q}_j) \leq P $.

For EH mode, it can be assumed that the total harvested power
$E_i$ at the $i$th receiver (more exactly, harvested energy
normalized by the baseband symbol period) is given by
\begin{eqnarray}\label{Sys_3}
E_i &=& \zeta_i E[\|{\bf y}_i \|^2 ]~=~ \zeta_i tr\left( \sum_{j=1}^K{\bf H}_{ij}{\bf Q}_j{\bf
H}_{ij}^H +{\bf I}_{M} \right),
\end{eqnarray}
where $\zeta_i$ denotes the efficiency constant for converting the
harvested energy to electrical energy to be stored \cite{Vullers,
Zhang1}. For simplicity, it is assumed that $\zeta_i=1$ and the
noise power is negligible compared to the transferred energy from
each transmitters. \footnote{In this paper, we assume the system
operates in the high signal-to-noise ratio (SNR) regime, which is
also consistent with a practical wireless energy transfer that
requires a high-power transmission.} 
That is,
\begin{eqnarray}\label{Sys_4}E_i &\approx& tr\left( \sum_{j=1}^K{\bf
H}_{ij}{\bf Q}_j{\bf H}_{ij}^H \right) ~=~  \sum_{j=1}^K tr\left({\bf
H}_{ij}{\bf Q}_j{\bf H}_{ij}^H\right),
\end{eqnarray}
where $E_{ij}=tr\left( {\bf H}_{ij}{\bf Q}_j{\bf H}_{ij}^H
\right)$ denotes the energy transferred from the $j$th
transmitter to the $i$th receiver.

Note that, when the receiver decodes the information data from
the associated transmitter under the assumption that the signals
from the other transmitters are not decodable \cite{Shen}, the
signals from the other transmitters become an interference to be
defeated. In contrast, when the receiver harvests the energy, they
become a useful energy-transferring source. In Fig.
\ref{JWIET_K_userIC_block}, the interference denoted by the dashed
red line should be reduced for IDs, while the interference by the dashed green line be maximized for EHs.

\section{A necessary condition for the optimal transmission strategy}\label{sec:optimal_tx_strat}

In \cite{ParkBruno}, a necessary condition of the optimal transmission strategy has been found for the two-user MIMO IFC with one EH and one ID, in which the energy transmitter should take a rank-one energy beamforming strategy with a proper power control. In this section, we first review one EH and one ID in a two-user MIMO IFC, briefly. Then, we will look into the cases of one ID/EH and multiple EHs/IDs. Finally, we consider the case of multiple IDs and multiple EHs.


\subsection{One ID receiver and One EH receiver}\label{ssec:oneIDoneEH}

In this subsection, without loss of generality, we consider the transceiver pair ($Tx_1$, $Rx_1$) operates in EH mode,
while ($Tx_2$, $Rx_2$) in ID mode. Because information decoding is done only at the second receiver,
by letting $R= R_2$ and $E = E_1 =E_{11}+E_{12}$, we can define
the achievable rate-energy region as:
\begin{eqnarray}\label{oneIDoneEH_1}
\!\!\!&\!\!C_{R\!-\!E} (P) \!\triangleq  \!\Biggl\{ \!(R, E)\! : R \!\leq\!
\log \!\det({\bf I}_{M} \!+\! {\bf H}_{22}^H{\bf R}_{-2}^{-1}{\bf
H}_{22}{\bf Q}_2 ),\!E \!\leq \!\displaystyle\sum_{\!j\!=\!1}^{\!2} tr ({\bf H}_{1j} {\bf
Q}_j {\bf H}_{1j}^H), tr({\bf Q}_j)\!\leq \!P, {\bf Q}_j\!\succeq
\!{\bf 0}, j\!=\!1,\!2\! \Biggr\}\!.\!&\!
\end{eqnarray}
The following proposition tells about the rank-one optimality in the two-user MIMO IFC.
\begin{prop}\label{prop1} In the high SNR regime, the optimal ${\bf Q}_1$ at the boundary
of the achievable rate-energy region has a rank one at most. That
is, $rank ({\bf Q}_1) \leq 1$.
\end{prop}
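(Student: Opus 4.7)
The plan is to exploit a structural observation: in the high SNR regime, with ${\bf Q}_2$ held fixed, $R_2$ becomes a strictly convex function of ${\bf Q}_1$, while $E_1$ is always linear in ${\bf Q}_1$. Combined with the fact that the feasible set $S = \{{\bf Q} \succeq {\bf 0}: tr({\bf Q}) \leq P\}$ has only the zero matrix and rank-one matrices as extreme points, this forces any Pareto-optimal ${\bf Q}_1$ to be rank $\leq 1$.

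First I would use the identity $\det({\bf I}+{\bf A}+{\bf B}) = \det({\bf I}+{\bf B})\det({\bf I}+({\bf I}+{\bf B})^{-1}{\bf A})$ with ${\bf A} = {\bf H}_{21}{\bf Q}_1{\bf H}_{21}^H$ and ${\bf B} = {\bf H}_{22}{\bf Q}_2{\bf H}_{22}^H$. In the high SNR regime the eigenvalues of ${\bf B}$ dominate those of ${\bf I}$ and of ${\bf A}$, so $\log\det({\bf I}+{\bf B}) \approx \log\det({\bf B})$ and $\log\det({\bf I}+({\bf I}+{\bf B})^{-1}{\bf A}) \approx 0$. Substituting into $R_2$ and cancelling yields the clean approximation $R_2 \approx \log\det({\bf H}_{22}{\bf Q}_2{\bf H}_{22}^H) - \log\det({\bf I}+{\bf H}_{21}{\bf Q}_1{\bf H}_{21}^H)$. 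The first term is independent of ${\bf Q}_1$, and the second is strictly convex in ${\bf Q}_1$ because $-\log\det(\cdot)$ is strictly convex on the positive-definite cone and the affine map ${\bf Q}_1 \mapsto {\bf I}+{\bf H}_{21}{\bf Q}_1{\bf H}_{21}^H$ is injective on Hermitian matrices (since ${\bf H}_{21}$ is full rank by assumption). Meanwhile, $E_1 = tr({\bf H}_{11}^H{\bf H}_{11}{\bf Q}_1) + tr({\bf H}_{12}{\bf Q}_2{\bf H}_{12}^H)$ is plainly linear in ${\bf Q}_1$.

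Fixing ${\bf Q}_2 = {\bf Q}_2^*$ at its boundary-optimal value, I would then spectrally decompose any candidate ${\bf Q}_1 = \sum_{k=1}^r \lambda_k {\bf v}_k {\bf v}_k^H$ with $\lambda_k>0$ and $\|{\bf v}_k\|=1$, and rewrite it as a convex combination of extreme points of $S$: with $p_k = \lambda_k / P$ and $p_0 = 1 - \sum_k p_k \geq 0$, one has ${\bf Q}_1 = \sum_k p_k (P{\bf v}_k{\bf v}_k^H) + p_0 \cdot {\bf 0}$. Convexity of $R_2$ and linearity of $E_1$ in ${\bf Q}_1$ then give $R_2({\bf Q}_1) \leq \sum_k p_k R_2(P{\bf v}_k{\bf v}_k^H) + p_0 R_2({\bf 0})$ and $E_1({\bf Q}_1) = \sum_k p_k E_1(P{\bf v}_k{\bf v}_k^H) + p_0 E_1({\bf 0})$, so the rank-one inputs, time-shared with weights $\{p_k, p_0\}$, produce a weakly-dominating R-E point.

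Pareto optimality of $({\bf Q}_1^*,{\bf Q}_2^*)$ then forces equality in the convexity inequality, and strict convexity of $R_2$ in ${\bf Q}_1$ pins the decomposition of ${\bf Q}_1^*$ to a single extreme point, so ${\bf Q}_1^*$ is itself either ${\bf 0}$ or of the form $P{\bf v}{\bf v}^H$, i.e., $rank ({\bf Q}_1^*) \leq 1$. The main obstacle I expect is the rigorous justification of the high-SNR simplification of $R_2$ uniformly over ${\bf Q}_1 \in S$ (and the resulting strict convexity on the relevant set); a backup would be to work with the exact $R_2$ via a KKT analysis of ``maximize $E_1$ subject to $R_2 \geq R^*$, ${\bf Q}_1 \in S$'', where complementary slackness ${\bf \Psi}{\bf Q}_1^* = {\bf 0}$ together with a rank bound on the dual multiplier ${\bf \Psi}$ recovers the same conclusion, at the cost of heavier matrix calculus.
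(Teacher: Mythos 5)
There are two genuine gaps here. The first is in the high-SNR simplification. With ${\bf A}={\bf H}_{21}{\bf Q}_1{\bf H}_{21}^H$ and ${\bf B}={\bf H}_{22}{\bf Q}_2{\bf H}_{22}^H$, both matrices scale with $P$, so the term you discard, $\log\det\bigl({\bf I}+({\bf I}+{\bf B})^{-1}{\bf A}\bigr)$, is $O(1)$ rather than vanishing: $({\bf I}+{\bf B})^{-1}{\bf A}$ is a ratio of two $O(P)$ quantities and depends on both the direction and the power of ${\bf Q}_1$. Hence the clean decoupling $R_2 \approx \mathrm{const} - \log\det({\bf I}+{\bf A})$ is not justified, and the residual term can shift the comparison between candidate ${\bf Q}_1$'s at any finite SNR. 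This part is repairable — the exact rate ${\bf X}\mapsto\log\det({\bf I}+{\bf S}({\bf I}+{\bf X})^{-1})$ is convex in the interference covariance ${\bf X}\succeq{\bf 0}$, so convexity of $R_2$ in ${\bf Q}_1$ holds without any approximation — but then the ``constant minus strictly convex'' structure you lean on disappears and strict convexity along the relevant directions has to be re-established.

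The second gap is the decisive one: your contradiction with Pareto optimality comes from domination by a \emph{convex combination} (time-sharing) of rank-one points, but the region in (\ref{oneIDoneEH_1}) is generated by a single covariance pair per point; it is not the time-sharing hull and is in general non-convex (the paper's own simulations show time sharing strictly outperforming single-covariance MEB strategies). A boundary point of a non-convex region can be dominated by a convex combination of achievable points without any contradiction, so ``Pareto optimality forces equality in the convexity inequality'' does not follow. A fix that preserves your idea: the boundary-optimal ${\bf Q}_1$ (for fixed ${\bf Q}_2$) maximizes the convex function $R_2$ over the convex set $\{{\bf Q}\succeq{\bf 0}: tr({\bf Q})\leq P,\ tr({\bf H}_{11}{\bf Q}{\bf H}_{11}^H)\geq \bar E_{11}\}$; a convex function attains its maximum at an extreme point of this set, and extreme points of the PSD cone cut by two linear constraints have rank $r$ with $r(r+1)/2\leq 2$, i.e., $r\leq 1$. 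The paper takes a different route that avoids both issues: it parametrizes ${\bf Q}_1$ through the GSVD of $({\bf H}_{11},{\bf H}_{21})$, invokes the interlacing theorem to write the high-SNR rate as $f({\bf H}_{22})-\log\prod_i(1+\sigma_{x,i}^2)$ subject to the fixed-energy equality $\sum_i\alpha_i\sigma_{x,i}^2=\bar E_{11}$, and shows the product is minimized by concentrating all weight on one singular value — a per-energy-level comparison that never appeals to time sharing.
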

\begin{proof}
The detailed proof is given in \cite{ParkBruno}, but here is its brief sketch. If the energy $\bar E$ at the boundary point of the achievable rate-energy is small enough such that $\bar E \leq tr({\bf H}_{12}{\bf Q}_2{\bf H}_{12}^H)$, then $rank({\bf Q}_1) = 0$ (i.e., the first transmitter does not need to transmit any signal causing the interference to the ID receiver).

If $\bar E > tr({\bf H}_{12}{\bf Q}_2{\bf H}_{12}^H)$, we then assume $m = rank({\bf Q}_1)\geq 1$. Based on the generalized singular value decomposition (GSVD) of (${\bf H}_{11}$, ${\bf H}_{21}$) and the interlacing theorem (Theorem 3.1 in \cite{RHorn}), with ${\bf Q}_1$ satisfying the required harvesting energy, the achievable rate $\bar R$ at high SNR can be approximated as \cite{ParkBruno}
\begin{eqnarray}\label{oneIDoneEH_1_rev}
{\bar R} \approx f({\bf H}_{22}) -  \log\left({
\prod_{i=1}^{m}(1+\sigma_{x,i}^2)}\right),
\end{eqnarray}
where $\sigma_{x,i}$ is the singular values of an arbitrary matrix ${\bf X}$ with ${\bf Q}_1 = {\bf T}{\bf X}{\bf X}^H {\bf T}^H$. Here, ${\bf T}$ is an invertible matrix obtained from GSVD of (${\bf H}_{11}$, ${\bf H}_{21}$) and
$\sigma_{x,1}\geq,...,\geq\sigma_{x,m}\geq 0$ such that $\sum_{i=1}^m \alpha_i \sigma_{x,i}^2 =\bar E_{11}$ with $\alpha_{1}\geq,...,\geq\alpha_{m}\geq 0$ and a fixed constant $\bar E_{11}$. Then, we can easily find that $\bar R$ is maximized when $m=rank({\bf Q}_1)=1$.
\end{proof}

From Proposition \ref{prop1}, when transferring the energy in
the IFC, the energy transmitter's optimal strategy is either a rank-one
beamforming or no transmission according to the energy harvested
from the information transmitter. Such strategy increases the harvested energy
at the corresponding EH receiver and simultaneously reduces the
interference at the other ID receiver. Intuitively, from the power transfer point of view, ${\bf Q}_1$
should be as close to the dominant eigenvector of ${\bf
H}_{11}^H{\bf H}_{11}$ as possible, which implies that the rank
one is optimal for power transfer. From the information transfer
point of view, when SNR goes to infinity, the rate maximization is
equivalent to the DOF maximization. That is, a larger rank for
${\bf Q}_1$ means that more dimensions at the second receiver will
be interfered. Therefore, a rank one for ${\bf Q}_1$ is optimal
for both information and power transfer. Note that Proposition \ref{prop1} is based on the high SNR regime, but the rank-one optimality is also valid in the low SNR regime as discussed in Section VI.A of \cite{ParkBruno}.


\subsection{One ID and multiple EHs in a K-user IFC}\label{ssec:oneIDmultiEH}
Without loss of generality, the transceiver pairs $(Tx_k, Rx_k)$, $k=1,...,K-1$ operate in EH mode, while $(Tx_K, Rx_K)$ in ID mode.
Note that energy transmitters optimize their transmission strategies in a distributed manner.

Because information decoding is done only at $Rx_K$,
by letting\footnote{To consider different priorities for either energy or rate, the weighted sum-rate or weighted sum-energy can be used as the objective functions (similarly done for the information transfer only \cite{QShi}) and inspired by \cite{QShi}, our current approaches can be extended to the weighted objective functions.} $R= R_K$ and $E = \sum_{i=1}^{K-1}E_i$ with $E_i = \sum_{j=1}^{K} E_{ij}$, we can define
the achievable rate-energy region as:
\begin{eqnarray}\label{oneIDmultiEH_1}
\!\!&\!C_{R\!-\!E} (P) \!\triangleq  \!\Biggl\{ \!(R, E) : R \leq
\log \det({\bf I}_{M} + {\bf H}_{KK}^H{\bf R}_{-K}^{-1}{\bf
H}_{KK}{\bf Q}_K ),\!&\!\nonumber\\
\!\!&\!E \!\leq \!\sum_{\!i\!=\!1}^{\!K-1} \sum_{\!j\!=\!1}^{\!K} tr ({\bf H}_{ij} {\bf
Q}_j {\bf H}_{ij}^H), tr({\bf Q}_j)\!\leq \!P, {\bf Q}_j\!\succeq
\!{\bf 0}, j\!=\!1,...,\!K\! \Biggr\}\!.\!&\!
\end{eqnarray}
Then, we have the following proposition.
\begin{prop}\label{prop_oneIDmultiEH} All the optimal ${\bf Q}_k$ at the boundary
of the achievable rate-energy region for (\ref{oneIDmultiEH_1}) become optimal solutions for the boundary of
\begin{eqnarray}\label{oneIDmultiEH_2}
\!\!&\!C_{R\!-\!E,k} (P) \!\triangleq  \!\Biggl\{ \!(R, E) : R \leq
\log \det({\bf I}_{M} + \tilde{\bf H}_{22}^H({\bf R}_{-2}^{(k)})^{-1}\tilde{\bf H}_{22}\tilde{\bf Q}_2 ),\!&\!\nonumber\\
\!\!&\!E \!\leq \! tr (\tilde{\bf H}_{11}^{(k)} {\bf
Q}_k (\tilde{\bf H}_{11}^{(k)})^H) + tr (\tilde{\bf H}_{12} {\bf
Q}_K \tilde{\bf H}_{12}^H) , tr({\bf Q}_j)\!\leq \!P, {\bf Q}_j\!\succeq
\!{\bf 0},  j\!=\!k,K\!\Biggr\}\!\!&\!
\end{eqnarray}
for all $k=1,...,K-1$, 
where
\begin{eqnarray}\label{oneIDmultiEH_4}
\tilde{\bf H}_{11}^{(k)} = \left[\begin{array}{c}{\bf H}_{1k} \\ \vdots \\ {\bf H}_{K\!-\!1k}  \end{array}\right], \quad \tilde{\bf H}_{12} = \left[\begin{array}{c}{\bf H}_{1K} \\ \vdots \\ {\bf H}_{K\!-\!1K}  \end{array}\right] \in \mathbb{C}^{(K-1)M\times M},
\end{eqnarray}
\begin{eqnarray}\label{oneIDmultiEH_3}
\tilde{\bf H}_{21}^{(k)} = {\bf H}_{Kk}, ~~\tilde{\bf H}_{22}= {\bf H}_{KK},
\end{eqnarray}
and
\begin{eqnarray}\label{oneIDmultiEH_3_1}
\tilde{\bf Q}_{2}= {\bf Q}_{K},~~{\bf R}_{-2}^{(k)}= {\bf I}_{M} +  {\bf C}_{\bar k} + \tilde{\bf H}_{21}^{(k)}{\bf Q}_k (\tilde{\bf H}_{21}^{(k)})^H.
\end{eqnarray}
Here,  ${\bf C}_{\bar k}$ is the covariance matrix of the interference from the other energy transmitters given as ${\bf C}_{\bar k} =  \sum_{\!j\!=\!1, j\!\neq\! k}^{\!K-1}   \tilde{\bf H}_{21}^{(j)} {\bf
Q}_j (\tilde{\bf H}_{21}^{(j)})^H$.
\end{prop}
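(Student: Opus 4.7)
The plan is a freeze-and-reduce argument that turns the multi-EH K-user R-E problem into the two-user R-E problem of (\ref{oneIDmultiEH_2}), after which the equivalence of Pareto boundaries is immediate. First I would pick any point $(R^*, E^*)$ on the boundary of $C_{R-E}(P)$ together with an achieving tuple $\{{\bf Q}_j^*\}_{j=1}^K$, fix an index $k \in \{1,\ldots,K-1\}$, and freeze ${\bf Q}_j = {\bf Q}_j^*$ for every $j \in \{1,\ldots,K-1\}\setminus\{k\}$. Viewing $({\bf Q}_k,{\bf Q}_K)$ as the only free variables reduces (\ref{oneIDmultiEH_1}) to a two-variable problem whose rate, energy, power, and positive-semidefiniteness constraints I would then match term-by-term with (\ref{oneIDmultiEH_2}).

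For the rate, splitting the interference sum at $Rx_K$ into the $j=k$ term and the frozen remainder gives
\begin{equation*}
{\bf R}_{-K} = {\bf I}_M + {\bf H}_{Kk}{\bf Q}_k{\bf H}_{Kk}^H + \sum_{j\in\{1,\ldots,K-1\}\setminus\{k\}} {\bf H}_{Kj}{\bf Q}_j^*{\bf H}_{Kj}^H,
\end{equation*}
which is exactly ${\bf R}_{-2}^{(k)}$ of (\ref{oneIDmultiEH_3_1}) under the identifications $\tilde{\bf H}_{21}^{(k)} = {\bf H}_{Kk}$ and the prescribed ${\bf C}_{\bar k}$. Together with $\tilde{\bf H}_{22}={\bf H}_{KK}$ and $\tilde{\bf Q}_2={\bf Q}_K$, the two rate expressions agree. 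For the energy, splitting the double sum in (\ref{oneIDmultiEH_1}) into the $j=k$ block, the $j=K$ block, and a constant remainder, and then collapsing the $i$-sums by stacking rows as in (\ref{oneIDmultiEH_4}), I obtain
\begin{equation*}
E = tr\bigl(\tilde{\bf H}_{11}^{(k)}{\bf Q}_k(\tilde{\bf H}_{11}^{(k)})^H\bigr) + tr\bigl(\tilde{\bf H}_{12}{\bf Q}_K\tilde{\bf H}_{12}^H\bigr) + \kappa,
\end{equation*}
where $\kappa = \sum_{i=1}^{K-1}\sum_{j\in\{1,\ldots,K-1\}\setminus\{k\}} tr({\bf H}_{ij}{\bf Q}_j^*{\bf H}_{ij}^H)$ is independent of $({\bf Q}_k,{\bf Q}_K)$. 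Hence the reduced K-user achievable region coincides with $C_{R-E,k}(P)$ up to a fixed shift along the energy axis, so their Pareto boundaries coincide as well.

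To close, I would argue by contradiction. If $({\bf Q}_k^*,{\bf Q}_K^*)$ were not on the boundary of $C_{R-E,k}(P)$, there would exist $({\bf Q}_k',{\bf Q}_K')$ satisfying all constraints of (\ref{oneIDmultiEH_2}) and Pareto-dominating $({\bf Q}_k^*,{\bf Q}_K^*)$ in the $(R,E)$ plane. Substituting this pair back into the K-user problem while keeping the frozen ${\bf Q}_j^*$ unchanged produces a feasible tuple with K-user rate and energy $(R^* + \Delta R,\, E^* + \Delta E)$ in which at least one increment is strictly positive, contradicting the boundary status of $(R^*,E^*)$. The main obstacle I foresee is precisely this bookkeeping: cleanly verifying that the frozen transmitters contribute only to the additive constant $\kappa$ in the energy and to the additive term ${\bf C}_{\bar k}$ in ${\bf R}_{-2}^{(k)}$, with no residual coupling to $({\bf Q}_k,{\bf Q}_K)$ through any other channel. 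Once this is established, Proposition \ref{prop_oneIDmultiEH} follows, and combined with Proposition \ref{prop1} it immediately yields rank-one optimality of each energy transmitter in the K-user setting.
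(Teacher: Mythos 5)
Your proposal is correct and follows essentially the same route as the paper: you identify ${\bf R}_{-K}$ with ${\bf R}_{-2}^{(k)}$ by splitting off the $j=k$ interference term, observe that the frozen energy transmitters contribute only a constant shift $\kappa$ along the $E$-axis, and conclude that boundary optimality transfers between the two regions. Your explicit Pareto-domination contradiction at the end merely spells out what the paper leaves implicit, so there is nothing further to add.
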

\begin{proof}
Note that ${\bf R}_{-K}$ in (\ref{oneIDmultiEH_1}) can be rewritten as
\begin{eqnarray}\label{oneIDmultiEH_5_rev2_1}
&{\bf R}_{-K}={\bf I}_M + \sum_{j\neq K, k}{\bf H}_{Kj}\bar{\bf Q}_j{\bf H}_{Kj}^H +{\bf H}_{Kk}{\bf Q}_k{\bf H}_{Kk}^H,&
\end{eqnarray}
and is exactly the same as ${\bf R}_{-2}^{(k)}$ in (\ref{oneIDmultiEH_3_1}). Accordingly, the rate at the boundary of (\ref{oneIDmultiEH_1}) can be rewritten as that of (\ref{oneIDmultiEH_2}). Furthermore, in (\ref{oneIDmultiEH_1}),
\begin{eqnarray}\label{oneIDmultiEH_7}
\!\!\sum_{\!i\!=\!1}^{\!K-1} \sum_{\!j\!=\!1}^{\!K} tr ({\bf H}_{ij} {\bf
Q}_j {\bf H}_{ij}^H) = tr (\tilde{\bf H}_{11}^{(k)} {\bf
Q}_k (\tilde{\bf H}_{11}^{(k)})^H) + \sum_{\!j\!=\!1, j\!\neq\! k}^{\!K\!-\!1}\!   tr (\tilde{\bf H}_{11}^{(\!j\!)}\! {\bf
Q}_j (\tilde{\bf H}_{11}^{(\!j\!)}\!)^H\!) \!+\!   tr (\tilde{\bf H}_{12} \!{\bf
Q}_K\! \tilde{\bf H}_{12}^H\!),\!\!
\end{eqnarray}
and $\sum_{\!j\!=\!1, j\!\neq\! k}^{\!K-1}   tr (\tilde{\bf H}_{11}^{(j)} {\bf
Q}_j (\tilde{\bf H}_{11}^{(j)})^H) $ shifts the trade-off curve of (\ref{oneIDmultiEH_1}) along the $E$-axis as in Fig. \ref{figK_user_OneID}. Therefore, all the optimal $\bar{\bf Q}_k$ for all boundary points $(\bar R, \bar E)$ of (\ref{oneIDmultiEH_1}) become solutions for the boundary of (\ref{oneIDmultiEH_2}), even though the boundaries of (\ref{oneIDmultiEH_1}) and (\ref{oneIDmultiEH_2}) are different.
\end{proof}

\begin{figure}
\begin{center}
\begin{tabular}{c}
\includegraphics[height=4.6cm]{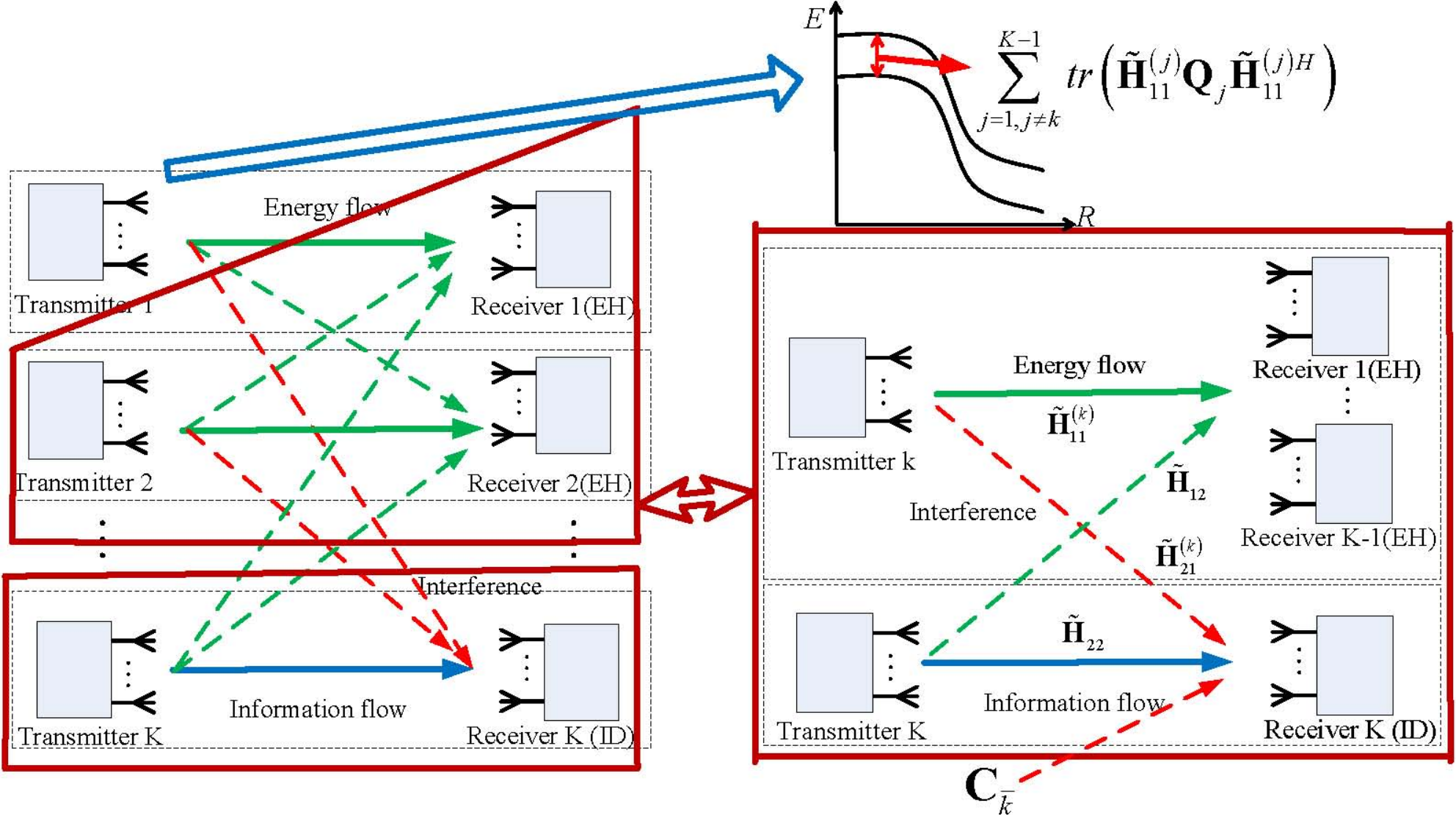}
\end{tabular}
\end{center}
\caption[K_user_OneID]
{ \label{figK_user_OneID} One ID and multiple EHs in a K-user IFC.}
\end{figure}

\begin{remark}\label{remark_oneIDmultiEH}
Note that (\ref{oneIDmultiEH_2}) can be regarded as the achievable R-E tradeoff region for the two-user MIMO IFC with an effective channel set of ($\tilde{\bf H}_{11}^{(k)}$, $\tilde{\bf H}_{12}$, $\tilde{\bf H}_{21}^{(k)}$,  $\tilde{\bf H}_{22}$) as in Fig. \ref{figK_user_OneID}. But, compared to the conventional two-user MIMO IFC, the effective two-user MIMO IFC for (\ref{oneIDmultiEH_2}) has the interference from other energy transmitters to the ID receiver which is unknown to the $k$th energy transmitter. The following lemma gives an important insight into the necessary condition for the optimal boundary for one ID and multiple EHs.
\begin{lem}\label{lem3}(\cite{ParkClerckxJSAC}, Lemma 1) For a positive semi-definite matrix ${\bf X}$ and a positive definite matrix ${\bf S}$, let
\begin{eqnarray}\label{lem3_1}
f({\bf X}) \triangleq \log \det({\bf I}_{M} + {\bf S}({\bf I}_M + {\bf X})^{-1} ).
\end{eqnarray}
Then, the maximization of $f({\bf X})$ with respect to ${\bf X}$ is equivalent with the minimization of $\det({\bf I}_{M} +{\bf X})$ with respect to ${\bf X}$.
\end{lem}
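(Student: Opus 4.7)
My plan is to establish the equivalence by showing that $f({\bf X})$ and $\det({\bf I}_M + {\bf X})$ are monotone functions of ${\bf X}$ in the Loewner (PSD) ordering, but in opposite senses. Since $\det({\bf I}_M + {\bf X})$ is strictly increasing in ${\bf X}$ while $f$ turns out to be strictly decreasing, the two problems are driven to the same PSD-smallest feasible ${\bf X}$, and are therefore equivalent.

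First, I would recast $f({\bf X})$ in a form that makes the monotonicity transparent. Applying Sylvester's identity $\det({\bf I}+{\bf A}{\bf B})=\det({\bf I}+{\bf B}{\bf A})$ with ${\bf A}={\bf S}^{1/2}$ and ${\bf B}={\bf S}^{1/2}({\bf I}_M+{\bf X})^{-1}$ yields
\begin{equation*}
f({\bf X}) \;=\; \log\det\!\bigl({\bf I}_M + {\bf S}^{1/2}({\bf I}_M+{\bf X})^{-1}{\bf S}^{1/2}\bigr),
\end{equation*}
which exhibits $f$ as $\log\det({\bf I}_M+{\bf Z}({\bf X}))$ for the Hermitian PD matrix ${\bf Z}({\bf X})\triangleq{\bf S}^{1/2}({\bf I}_M+{\bf X})^{-1}{\bf S}^{1/2}$.

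Next, I would prove the antitone property of $f$. If ${\bf X}_1\preceq{\bf X}_2$, then ${\bf I}_M+{\bf X}_1\preceq{\bf I}_M+{\bf X}_2$, and because matrix inversion reverses the Loewner order on positive-definite matrices, $({\bf I}_M+{\bf X}_1)^{-1}\succeq({\bf I}_M+{\bf X}_2)^{-1}$. Congruence by ${\bf S}^{1/2}$ preserves this inequality, giving ${\bf Z}({\bf X}_1)\succeq{\bf Z}({\bf X}_2)$. Finally, $\log\det({\bf I}_M+\cdot)$ is itself monotone increasing in its PSD argument (every eigenvalue of the argument grows), so $f({\bf X}_1)\geq f({\bf X}_2)$. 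Combined with the elementary fact that $\det({\bf I}_M+{\bf X})$ is monotone increasing in ${\bf X}$ (its eigenvalues are $1+\lambda_i({\bf X})$, each nondecreasing in ${\bf X}$ by Weyl's monotonicity theorem), maximizing $f$ and minimizing $\det({\bf I}_M+{\bf X})$ both select the PSD-smallest element of whatever feasible set is at hand, hence are equivalent.

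The main obstacle I anticipate is interpretational rather than technical: Loewner monotonicity only guarantees a common optimizer when the feasible set admits a minimum in the PSD order. In the way this lemma is invoked in the paper, the relevant ${\bf X}$ is an interference covariance of the form ${\bf X}=\sum_j \tilde{\bf H}_{21}^{(j)}{\bf Q}_j(\tilde{\bf H}_{21}^{(j)})^H$ where each ${\bf Q}_j$ is a rank-one energy beamformer scaled by a scalar power, so the effective degree of freedom is essentially scalar and a PSD-minimum does exist. Under this structural assumption the monotonicity established above yields the desired strict equivalence between the two optimization problems.
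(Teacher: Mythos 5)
First, a point of reference: the paper does not prove this lemma at all — it is imported verbatim by citation from [ParkClerckxJSAC, Lemma~1] — so there is no in-paper proof to compare yours against. Judged on its own terms, your two monotonicity computations are correct: writing $f({\bf X})=\log\det({\bf I}_M+{\bf S}^{1/2}({\bf I}_M+{\bf X})^{-1}{\bf S}^{1/2})$ via Sylvester's identity, using order reversal of inversion on positive definite matrices, congruence invariance, and Weyl monotonicity, you correctly show that $f$ is antitone and $\det({\bf I}_M+{\bf X})$ is isotone in the Loewner order.

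The gap is exactly the one you flag and then wave away. Opposite monotonicity with respect to a \emph{partial} order does not make the two problems equivalent: it forces a common optimizer only when the feasible set contains a Loewner-least element, and in general $f$ is not a function of $\det({\bf I}_M+{\bf X})$ at all. Concretely, take $M=2$, ${\bf S}=diag\{10,\,0.1\}$, ${\bf X}_1=diag\{1,\,0\}$, ${\bf X}_2=diag\{0,\,1.1\}$; then $\det({\bf I}+{\bf X}_1)=2<2.1=\det({\bf I}+{\bf X}_2)$ while $f({\bf X}_1)=\log 6.6<\log 11.52\approx f({\bf X}_2)$, so the determinant minimizer is not the $f$ maximizer. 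Your structural patch — that in the paper's application the effective degree of freedom is ``essentially scalar,'' so a PSD-minimum exists — does not hold: in Corollary~1 and Proposition~3 the lemma is invoked with ${\bf X}={\bf C}_{\bar k}+\tilde{\bf H}_{21}^{(k)}{\bf Q}_k(\tilde{\bf H}_{21}^{(k)})^H$ where the \emph{direction} of the rank-one ${\bf Q}_k$ is a free variable, different directions yield Loewner-incomparable ${\bf X}$, and identifying the best direction is precisely the nontrivial step that follows. So your argument establishes a strictly weaker statement than the lemma as written; closing the gap requires either restricting the claim to Loewner-comparable variations of ${\bf X}$ (which is how the boundary/trade-off argument effectively uses it) or recovering the additional hypotheses from the cited reference rather than asserting them.
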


Accordingly, we can have the following important corollary.
\begin{cor}\label{cor_oneIDmultiEH} (One ID and multiple EHs) For $k=1,...,K-1$, the optimal ${\bf Q}_k$ at the boundary of the achievable rate-energy region (\ref{oneIDmultiEH_1}) has a rank one at most. That
is, $rank ({\bf Q}_k) \leq 1$ for $k=1,...,K-1$.
\end{cor}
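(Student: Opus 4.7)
The plan is to combine Proposition \ref{prop_oneIDmultiEH} with Proposition \ref{prop1} by viewing each energy transmitter's problem as a rank-one optimization over an \emph{equivalent} two-user MIMO IFC in which the ID receiver sees colored noise. Fix any $k\in\{1,\dots,K-1\}$. By Proposition \ref{prop_oneIDmultiEH}, every optimal ${\bf Q}_k$ at the boundary of the $K$-user region in (\ref{oneIDmultiEH_1}) is also optimal at the boundary of the two-user region $C_{R\!-\!E,k}(P)$ in (\ref{oneIDmultiEH_2}). Thus it suffices to establish the rank-one property for the equivalent two-user problem defined by the channel set $(\tilde{\bf H}_{11}^{(k)},\tilde{\bf H}_{12},\tilde{\bf H}_{21}^{(k)},\tilde{\bf H}_{22})$, with the caveat that the interference-plus-noise covariance at the ID receiver is ${\bf R}_{-2}^{(k)}={\bf I}_M+{\bf C}_{\bar k}+\tilde{\bf H}_{21}^{(k)}{\bf Q}_k(\tilde{\bf H}_{21}^{(k)})^H$ rather than ${\bf I}_M+\tilde{\bf H}_{21}^{(k)}{\bf Q}_k(\tilde{\bf H}_{21}^{(k)})^H$.

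To reduce to the exact setting of Proposition \ref{prop1}, I would pre-whiten the ID receiver by $({\bf I}_M+{\bf C}_{\bar k})^{-1/2}$, defining $\bar{\bf H}_{21}^{(k)}\triangleq ({\bf I}_M+{\bf C}_{\bar k})^{-1/2}\tilde{\bf H}_{21}^{(k)}$ and $\bar{\bf H}_{22}\triangleq ({\bf I}_M+{\bf C}_{\bar k})^{-1/2}\tilde{\bf H}_{22}$. The rate in (\ref{oneIDmultiEH_2}) then takes the standard form $\log\det({\bf I}_M+\bar{\bf H}_{22}^H({\bf I}_M+\bar{\bf H}_{21}^{(k)}{\bf Q}_k(\bar{\bf H}_{21}^{(k)})^H)^{-1}\bar{\bf H}_{22}\tilde{\bf Q}_2)$, while the energy constraint still depends on ${\bf Q}_k$ only through the \emph{unwhitened} $\tilde{\bf H}_{11}^{(k)}$ (the $tr(\tilde{\bf H}_{12}{\bf Q}_K\tilde{\bf H}_{12}^H)$ term being independent of ${\bf Q}_k$). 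This is precisely the two-user MIMO IFC setting treated in Section \ref{ssec:oneIDoneEH}, with ${\bf H}_{11}\leftarrow \tilde{\bf H}_{11}^{(k)}$ and ${\bf H}_{21}\leftarrow \bar{\bf H}_{21}^{(k)}$, so Proposition \ref{prop1} applies and yields $rank({\bf Q}_k)\le 1$.

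Lemma \ref{lem3} enters to justify the reduction at the boundary: for fixed $\tilde{\bf Q}_2={\bf Q}_K$, maximizing the rate over ${\bf Q}_k$ subject to a lower bound on the harvested energy is equivalent to minimizing $\det({\bf I}_M+\bar{\bf H}_{21}^{(k)}{\bf Q}_k(\bar{\bf H}_{21}^{(k)})^H)$ subject to a lower bound on $tr(\tilde{\bf H}_{11}^{(k)}{\bf Q}_k(\tilde{\bf H}_{11}^{(k)})^H)$, which is exactly the optimization the GSVD-plus-interlacing argument of Proposition \ref{prop1} dispatches. The main obstacle I foresee is checking that the GSVD of $(\tilde{\bf H}_{11}^{(k)},\bar{\bf H}_{21}^{(k)})$ is still well-defined and that the interlacing bound on the singular values of the feasible ${\bf X}$ (with ${\bf Q}_k={\bf T}{\bf X}{\bf X}^H{\bf T}^H$) survives the asymmetric whitening---only the leakage side is whitened, while the energy side retains the stacked channel $\tilde{\bf H}_{11}^{(k)}\in\mathbb{C}^{(K-1)M\times M}$. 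Since the full-rank assumption on each ${\bf H}_{ij}$ ensures both $\tilde{\bf H}_{11}^{(k)}$ and $\bar{\bf H}_{21}^{(k)}$ have full column rank almost surely, the GSVD argument of Proposition \ref{prop1} carries over without modification, delivering $rank({\bf Q}_k)\le 1$ for every $k=1,\dots,K-1$.
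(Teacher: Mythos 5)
Your proposal is correct and follows the same skeleton as the paper's proof --- the reduction via Proposition \ref{prop_oneIDmultiEH} to the effective two-user region (\ref{oneIDmultiEH_2}), and the use of Lemma \ref{lem3} to turn rate maximization at the boundary into minimization of $\det({\bf I}_M+{\bf C}_{\bar k}+\tilde{\bf H}_{21}^{(k)}{\bf Q}_k(\tilde{\bf H}_{21}^{(k)})^H)$ --- but it closes the argument differently. Your explicit pre-whitening by $({\bf I}_M+{\bf C}_{\bar k})^{-1/2}$ is, up to a unitary rotation, exactly the paper's EVD factorization in (\ref{oneIDmultiEH_2rev3}), which pulls out the ${\bf Q}_k$-independent factor $\det({\bf I}_M+{\bf\Sigma}_C)$ and leaves a determinant in the whitened leakage channel ${\bf U}_C({\bf I}_M+{\bf\Sigma}_C)^{-1/2}$-applied-to-$\tilde{\bf H}_{21}^{(k)}$. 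Where you then hand the whitened problem back to Proposition \ref{prop1} and its GSVD-plus-interlacing machinery, the paper instead re-proves the rank-one property directly for the colored-interference setting: it identifies the determinant-minimizing direction ${\bf v}_I$ and the energy-maximizing direction ${\bf v}_E$ in (\ref{oneIDmultiEH_2rev4})--(\ref{oneIDmultiEH_2rev5}) and shows any rank-$m$ covariance is dominated by a rank-one covariance supported on the span of $[{\bf v}_I\;{\bf v}_E]$. Your modular reduction is cleaner and correctly flags the only real technical point (the asymmetric whitening leaves $\tilde{\bf H}_{11}^{(k)}$ tall and unwhitened, which the GSVD tolerates); its costs are that it inherits the high-SNR approximation baked into Proposition \ref{prop1}'s proof, and that the whitened channel now depends on ${\bf C}_{\bar k}$, which itself scales with the transmit power, so the high-SNR limit must be taken with a little care. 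The paper's more explicit route also has a side benefit you would lose: the closed-form characterization of ${\bf v}_I$ as a singular vector of $(\tilde{\bf H}_{21}^{(k)})^H{\bf U}_C({\bf I}_M+{\bf\Sigma}_C)^{-1}{\bf U}_C^H\tilde{\bf H}_{21}^{(k)}$ is reused later (Remark \ref{remark_multiIDmultiEH} and the beam-tilting discussion) to explain how the optimal beam direction couples to the other energy transmitters' beams.
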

\begin{proof}
First we show that in the effective two user-MIMO IFC with any external interference to the ID receiver that is unknown to the energy transmitter, the optimal ${\bf Q}_k$ of the $k$th energy transmitter at the boundary of the achievable rate-energy region (\ref{oneIDmultiEH_2}) has a rank one at most.

Let us consider the boundary point $(\bar R, \bar E)$ of the achievable rate-energy for (\ref{oneIDmultiEH_2}) with any $\tilde{\bf Q}_2$, the covariance matrix of the information transmitter and $\bar{\bf Q}_j$, $j\neq k$, the covariance matrices of other energy transmitters on the boundary point. Furthermore, let there be ${\bf Q}_k \triangleq {\bf U}_k{\bf \Sigma}_k{\bf U}_k^H$ with $m = rank({\bf Q}_k)>1 $, ${\bf \Sigma}_k = diag \{ P_{k1},...,P_{km}\}$, ${\bf U}_k^H{\bf U}_k ={\bf I}_{m}$, and $\sum_{i=1}^{m}P_{ki} = P_k$ which corresponds to the boundary point $(\bar R, \bar E)$. Then, given the harvested energy $\bar E$ (the boundary point) and $\tilde{\bf Q}_2$, the covariance matrix ${\bf Q}_k$ exhibits
\begin{eqnarray}\label{oneIDmultiEH_2rev2}
\bar R &=& \log \det({\bf I}_{M} + \tilde{\bf H}_{22}\tilde{\bf Q}_2\tilde{\bf H}_{22}^H({\bf I}_{M} + {\bf C}_{\bar k}+ \tilde{\bf H}_{21}^{(k)}{\bf Q}_k (\tilde{\bf H}_{21}^{(k)})^H)^{-1} ),\nonumber\\
\bar E_{11}^{(k)} &=&  tr (\tilde{\bf H}_{11}^{(k)} {\bf Q}_k (\tilde{\bf H}_{11}^{(k)})^H),
\end{eqnarray}
where $\bar E_{11}^{(k)}\triangleq \bar E -  tr (\tilde{\bf H}_{12} \tilde{\bf
Q}_2 \tilde{\bf H}_{12}^H) $. From Lemma \ref{lem3}, at the boundary point, $\det({\bf I}_{M} +  {\bf C}_{\bar k}+ \tilde{\bf H}_{21}^{(k)}{\bf Q}_k (\tilde{\bf H}_{21}^{(k)})^H)$ is minimized. Because ${\bf C}_{\bar k}$ has the EVD as ${\bf C}_{\bar k} = {\bf U}_{C}{\bf \Sigma}_C {\bf U}_{C}^H$, we can have
\begin{eqnarray}\label{oneIDmultiEH_2rev3}
&\det({\bf I}_{M} +  {\bf C}_{\bar k}+ \tilde{\bf H}_{21}^{(k)}{\bf Q}_k (\tilde{\bf H}_{21}^{(k)})^H) = \det({\bf I}_{M} +  {\bf \Sigma}_{C}+ {\bf U}_{C}^H\tilde{\bf H}_{21}^{(k)}{\bf Q}_k (\tilde{\bf H}_{21}^{(k)})^H{\bf U}_{C})&\nonumber\\
&=\det({\bf I}_{M} +  {\bf \Sigma}_{C})\det({\bf I}_M+ ({\bf I}_{M} +  {\bf \Sigma}_{C})^{-1}{\bf U}_{C}^H\tilde{\bf H}_{21}^{(k)}{\bf Q}_k (\tilde{\bf H}_{21}^{(k)})^H{\bf U}_{C}) &\nonumber\\
&= \det({\bf I}_{M} +  {\bf \Sigma}_{C})\det({\bf I}_{m} + {\bf \Sigma}_k^{1/2} {\bf U}_k^H(\tilde{\bf H}_{21}^{(k)})^H{\bf U}_{C}({\bf I}_{M} +  {\bf \Sigma}_{C})^{-1}{\bf U}_{C}^H\tilde{\bf H}_{21}^{(k)}{\bf U}_k{\bf \Sigma}_k^{1/2} ).&
\end{eqnarray}
Then, the first determinant in the right-hand-side of (\ref{oneIDmultiEH_2rev3}) is independent of ${\bf Q}_k$ and, in addition, we can easily find that the second determinant is minimized when
\begin{eqnarray}\label{oneIDmultiEH_2rev4}
[{\bf U}_k]_1 = {\bf v}_I, \quad  P_{k1}= P_k,~ P_{ki}= 0 {\text{ for }} i\neq 1.
\end{eqnarray}
where $ {\bf v}_I$ is the right singular vector associated with the smallest singular value of $(\tilde{\bf H}_{21}^{(k)})^H{\bf U}_{C}({\bf I}_{M} +  {\bf \Sigma}_{C})^{-1}{\bf U}_{C}^H\tilde{\bf H}_{21}^{(k)}$. Furthermore, by letting ${\bf v}_E$ as the right singular vector associated with the largest singular value of $ \tilde{\bf H}_{11}^{(k)}$, then $\bar E_{11}^{(k)} $ in (\ref{oneIDmultiEH_2rev2}) is maximized when
\begin{eqnarray}\label{oneIDmultiEH_2rev5}
[{\bf U}_k]_1 = {\bf v}_E, \quad  P_{k1}= P_k,~ P_{ki}= 0 {\text{ for }} i\neq 1.
\end{eqnarray}
Without loss of generality, ${\bf Q}_k$ can be defined as
\begin{eqnarray}\label{oneIDmultiEH_2rev6}
{\bf Q}_k = \sum_{i=1}^{m}{\bf u}_i{\bf u}_i^H P_{ki},
\end{eqnarray}
by choosing ${\bf u}_1$ such that it is in the range space of $[{\bf v}_I {\bf v}_E]$ and ${\bf u}_i^H{\bf u}_j = \left\{\begin{array}{c} 0{\text{ for }}i\neq j \\ 1{\text{ for }}i = j\end{array} \right. $. That is, ${\bf u}_1 = {\bf P}_{[{\bf v}_I {\bf v}_E]}{\bf u}_1$, where ${\bf P}_{[{\bf v}_I {\bf v}_E]}$ is a projection matrix onto the range space of $[{\bf v}_I {\bf v}_E]$. Therefore, if there exists $m >1 $ such that (\ref{oneIDmultiEH_2rev2}) is satisfied, we can always find $m'=1$ such that ${\bar E}_{11}'^{(k)} \geq \bar E_{11}^{(k)}$ and $\bar R' \geq \bar R$ with $P_{k1}= P_k$ $P_{ki}= 0$ for $i\neq 1$.

Now we are ready to show the corollary. Assuming that $rank({\bf Q}_k)\geq 2$ with some $k$ for (\ref{oneIDmultiEH_1}). From Proposition \ref{prop_oneIDmultiEH}, ${\bf Q}_k$ then becomes a solution for the boundary of (\ref{oneIDmultiEH_2}). However, from the above observation, the optimal ${\bf Q}_i$ at the boundary of the achievable R-E region of (\ref{oneIDmultiEH_2}) has a rank one at most, which is a contradiction.
\end{proof}
In other words, if a covariance matrix of an energy transmitter in the K-user MIMO IFC (multiple EHs and one ID) has a rank ($\geq 2$), then we can always find a rank-one beamforming for that transmitter exhibiting either higher information rate or larger harvested energy. Interestingly, when the interference unknown to the energy transmitter is added to the ID receiver in the two-user MIMO IFC, the rank-one optimality still holds. This is a generalized version of Proposition \ref{prop1}, but from (\ref{oneIDmultiEH_2rev4}), the optimal beamforming direction depends on the covariance matrix of the interference from other energy transmitters (specifically, the beamforming directions of the other energy transmitters).
\end{remark}

\subsection{One EH and multiple IDs in a K-user IFC}\label{sec:single}

Without loss of generality, the transceiver pair $(Tx_1, Rx_1)$ operate in EH mode, while $(Tx_k, Rx_k)$, $k=2,...,K$ in ID mode.

Because information decoding is done only at $Rx_i$, $i=2,...,K$
by letting $R=\sum_{i=2}^{K} R_i$ and $E = E_1$ with $E_1 = \sum_{j=1}^{K} E_{1j}$, we can define
the achievable rate-energy region as:
\begin{eqnarray}\label{multiIDoneEH_1}
\!\!&\!C_{R\!-\!E} (P) \!\triangleq  \!\Biggl\{ \!(R, E) : R \leq
\sum_{i=2}^{K}\log \det({\bf I}_{M} + {\bf H}_{ii}^H{\bf R}_{-i}^{-1}{\bf
H}_{ii}{\bf Q}_i ),\!&\!\nonumber\\
\!\!&\!E \!\leq \! \sum_{\!j\!=\!1}^{\!K} tr ({\bf H}_{1j} {\bf
Q}_j {\bf H}_{1j}^H), tr({\bf Q}_j)\!\leq \!P, {\bf Q}_j\!\succeq
\!{\bf 0}, j\!=\!1,...,\!K\! \Biggr\}\!.\!&\!
\end{eqnarray}

\begin{prop}\label{prop_multiIDoneEH} All the optimal ${\bf Q}_1$ at the boundary
of the achievable rate-energy region for (\ref{multiIDoneEH_1}) become optimal solutions for the boundary of
\begin{eqnarray}\label{multiIDoneEH_4}
\!\!&\!C_{R\!-\!E} (P) \!\triangleq  \!\Biggl\{ \!(R, E) : R \leq
\log \det({\bf I}_{(K-1)M} + \tilde{\bf H}_{22}^H(\tilde{\bf R}_{-2}^{(1)})^{-1}\tilde{\bf
H}_{22}\tilde{\bf Q}_2 ),\!&\!\nonumber\\
\!\!&\!E \!\leq \! tr (\tilde{\bf H}_{11}^{(1)} {\bf
Q}_1 (\tilde{\bf H}_{11}^{(1)})^H) + tr (\tilde{\bf H}_{12} \tilde{\bf
Q}_2 \tilde{\bf H}_{12}^H) , tr({\bf Q}_j)\!\leq \!P, {\bf Q}_j\!\succeq
\!{\bf 0},  j\!=\!1,...,K\!\Biggr\}\!,\!&\!
\end{eqnarray}
where
\begin{eqnarray}\label{multiIDoneEH_5}\nonumber
\tilde{\bf H}_{11}^{(1)} ={\bf H}_{11}, ~~ \tilde{\bf H}_{12} = \left[\begin{array}{ccc}{\bf H}_{12} & \hdots & {\bf H}_{1K}  \end{array}\right] \in \mathbb{C}^{M\times (K-1)M},
\end{eqnarray}
\begin{eqnarray}\label{multiIDoneEH_5_1}\nonumber
\tilde{\bf H}_{21}^{(1)} = \left[\begin{array}{c}{\bf H}_{21} \\ \vdots \\ {\bf H}_{K1}  \end{array}\right]\in \mathbb{C}^{(K-1)M\times M},~\tilde{\bf H}_{22} = diag \{{\bf H}_{22},{\bf H}_{33},...,{\bf H}_{KK} \}\in \mathbb{C}^{(K-1)M\times (K-1)M},
\end{eqnarray}
\begin{eqnarray}\label{multiIDoneEH_5_2}\nonumber
\tilde{\bf Q}_{2} =diag\{{\bf Q}_{2}, {\bf Q}_{3},..., {\bf Q}_{K} \}, ~\tilde{\bf R}_{-2}^{(1)} = {\bf I}_{(K-1)M} + {\bf C}_{I}+\tilde{\bf H}_{21}^{(1)}{\bf Q}_1(\tilde{\bf H}_{21}^{(1)})^H,
\end{eqnarray}
Here,
\begin{eqnarray}\label{multiIDoneEH_8_1_rev3}
{\bf C}_{I} = diag\{\sum_{j\neq 1,2}^{K}{\bf H}_{2j}{\bf Q}_{j}{\bf H}_{2j}^H ,\sum_{j\neq 1,3}^{K}{\bf H}_{3j}{\bf Q}_{j}{\bf H}_{3j}^H ,...,\sum_{j\neq 1,K}^{K}{\bf H}_{Kj}{\bf Q}_{j}{\bf H}_{Kj}^H  \} \succeq 0,
\end{eqnarray}
which is a block diagonal matrix. 
\end{prop}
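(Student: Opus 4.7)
The plan is to mirror the proof of Proposition \ref{prop_oneIDmultiEH}: match the energy expression and the per-ID interference-plus-noise covariance in (\ref{multiIDoneEH_1}) with their counterparts in the two-user surrogate (\ref{multiIDoneEH_4}) block by block, and then argue that the EH transmitter's boundary-optimal covariance carries over. First I would verify the energy identity: because $\tilde{\bf Q}_2 = diag\{{\bf Q}_2,\ldots,{\bf Q}_K\}$ and $\tilde{\bf H}_{12}=[{\bf H}_{12},\ldots,{\bf H}_{1K}]$, direct block-multiplication gives $tr(\tilde{\bf H}_{12}\tilde{\bf Q}_2\tilde{\bf H}_{12}^H)=\sum_{j=2}^K tr({\bf H}_{1j}{\bf Q}_j{\bf H}_{1j}^H)$, and since $\tilde{\bf H}_{11}^{(1)}={\bf H}_{11}$ supplies the $j=1$ term, the energy expressions in (\ref{multiIDoneEH_1}) and (\ref{multiIDoneEH_4}) coincide exactly.

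Next I would unpack $\tilde{\bf R}_{-2}^{(1)}$ block by block. The $(i,i)$-block of $\tilde{\bf H}_{21}^{(1)}{\bf Q}_1(\tilde{\bf H}_{21}^{(1)})^H$ equals ${\bf H}_{i1}{\bf Q}_1{\bf H}_{i1}^H$ and ${\bf C}_{I}$ is block-diagonal carrying the inter-ID interference $\sum_{j\neq 1,i}{\bf H}_{ij}{\bf Q}_j{\bf H}_{ij}^H$, so the $(i,i)$-block of $\tilde{\bf R}_{-2}^{(1)}$ reads ${\bf I}_M+{\bf H}_{i1}{\bf Q}_1{\bf H}_{i1}^H+\sum_{j\neq 1,i}{\bf H}_{ij}{\bf Q}_j{\bf H}_{ij}^H = {\bf R}_{-i}$. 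Thus the diagonal blocks of the surrogate's interference covariance reproduce the per-user covariances of the original sum-rate problem, while the off-diagonal blocks ${\bf H}_{i1}{\bf Q}_1{\bf H}_{j1}^H$ are precisely the extra coupling terms that a joint (cooperative) receiver---but not the individual ones---would see. Combined with the energy identity, this lets us interpret (\ref{multiIDoneEH_4}) as a stacked two-user MIMO IFC in which the $K-1$ ID receivers are lumped into a single cooperative receiver with $(K-1)M$ antennas, and the non-cooperative cross-ID interference is absorbed into ${\bf C}_{I}$.

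To finish, I would invoke Lemma \ref{lem3} to transfer the optimality of ${\bf Q}_1$. Both problems depend on ${\bf Q}_1$ only through the ${\bf Q}_1$-dependent terms $\{{\bf H}_{i1}{\bf Q}_1{\bf H}_{i1}^H\}$ in (\ref{multiIDoneEH_1}) and $\tilde{\bf H}_{21}^{(1)}{\bf Q}_1(\tilde{\bf H}_{21}^{(1)})^H$ in (\ref{multiIDoneEH_4}), while ${\bf C}_{I}$ and the information-transmitter covariances do not depend on ${\bf Q}_1$. By Lemma \ref{lem3}, maximizing the rate over ${\bf Q}_1$ at a fixed energy contribution from $Tx_1$ reduces in each case to minimizing a determinant of the form $\det({\bf I}+(\text{interference}))$; the two subproblems have the same ${\bf Q}_1$-variable part, so they share their set of minimizers. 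Hence every boundary-optimal $\bar{\bf Q}_1$ of (\ref{multiIDoneEH_1}) is also boundary-optimal for (\ref{multiIDoneEH_4}) paired with the block-diagonal $\tilde{\bf Q}_2$ built from $\bar{\bf Q}_2,\ldots,\bar{\bf Q}_K$, as in the shift-on-the-$E$-axis argument of Proposition \ref{prop_oneIDmultiEH}. The main obstacle, which distinguishes this proof from that of Proposition \ref{prop_oneIDmultiEH}, is the Fischer--Hadamard gap between $\sum_i \log\det({\bf R}_{-i})$ and $\log\det(\tilde{\bf R}_{-2}^{(1)})$ produced by the off-diagonal blocks of $\tilde{\bf H}_{21}^{(1)}{\bf Q}_1(\tilde{\bf H}_{21}^{(1)})^H$: unlike the one-ID setting, the sum rate and the surrogate's joint $\log\det$ are not literally equal. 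I would handle this by verifying that the gap depends on ${\bf Q}_1$ only through the same PSD quantity $\tilde{\bf H}_{21}^{(1)}{\bf Q}_1(\tilde{\bf H}_{21}^{(1)})^H$, so it is monotone in the same direction as each per-user $\det({\bf R}_{-i})$ and therefore does not move the location of the ${\bf Q}_1$-optimum.
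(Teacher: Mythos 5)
Your overall route is the same as the paper's: verify the energy identity, observe that the diagonal blocks of $\tilde{\bf R}_{-2}^{(1)}$ reproduce the per-user covariances ${\bf R}_{-i}$, rewrite the sum rate as a single $\log\det$ over the stacked $(K-1)M$-dimensional system, and then use Lemma \ref{lem3} to turn the rate/energy trade-off at a boundary point into a determinant comparison. Up to that point your steps are correct and match the paper's proof, which writes $f({\bf Q}_1)=\log\det({\bf I}+\tilde{\bf H}_{22}^H({\bf I}+\bar{\bf C}_I+{\bf S}({\bf Q}_1))^{-1}\tilde{\bf H}_{22}\bar{\tilde{\bf Q}}_2)$ with the \emph{block-diagonal} ${\bf S}({\bf Q}_1)=diag\{{\bf H}_{21}{\bf Q}_1{\bf H}_{21}^H,\ldots,{\bf H}_{K1}{\bf Q}_1{\bf H}_{K1}^H\}$, so that this $f$ is literally the sum rate of (\ref{multiIDoneEH_1}).

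The gap is in your last step. You correctly isolate the crux --- the discrepancy between $\sum_i\log\det({\bf R}_{-i})$ (equivalently $\det({\bf I}+\bar{\bf C}_I+{\bf S}({\bf Q}_1))$) and $\log\det(\tilde{\bf R}_{-2}^{(1)})$, which additionally contains the off-diagonal blocks ${\bf H}_{i1}{\bf Q}_1{\bf H}_{j1}^H$ --- but your proposed resolution, that both quantities ``depend on ${\bf Q}_1$ only through $\tilde{\bf H}_{21}^{(1)}{\bf Q}_1(\tilde{\bf H}_{21}^{(1)})^H$ and are monotone in the same direction, hence do not move the optimum,'' is an assertion rather than an argument. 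Both functions are nondecreasing in the Loewner order on ${\bf Q}_1$, but the variations one must compare at a boundary point (e.g.\ a re-steering of the beam or a redistribution of power across eigen-directions at a given transferred energy) are not Loewner-ordered, and for a general PSD block matrix an increase of the product of diagonal-block determinants implies nothing about the full determinant: Fischer's inequality gives only $\det(A)\le\prod_i\det(A_{ii})$, and the off-diagonal blocks can grow and pull $\det(A)$ the other way. What the paper actually does here is a two-link chain: Lemma \ref{lem3} converts ``the sum rate decreases'' into ``$\det({\bf I}+\bar{\bf C}_I+{\bf S}({\bf Q}_1))$ increases''; a separate determinant comparison between the block-diagonal compression and the full matrix ${\bf I}+\bar{\bf C}_I+\tilde{\bf H}_{21}^{(1)}{\bf Q}_1(\tilde{\bf H}_{21}^{(1)})^H$ (justified by citing Ipsen and Horn--Johnson, and leaning on the fact that the off-diagonal blocks are not free parameters but the cross terms generated by the same ${\bf Q}_1$) transfers that increase to $\det(\tilde{\bf R}_{-2}^{(1)})$; and a second application of Lemma \ref{lem3} converts this into a decrease of the surrogate rate $f'$. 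That middle determinant comparison is precisely the missing ingredient in your write-up; without supplying or citing it, the transfer of boundary optimality from (\ref{multiIDoneEH_1}) to (\ref{multiIDoneEH_4}) does not go through.
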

\begin{proof}
Let us consider the boundary point $(\bar R, \bar E)$ of (\ref{multiIDoneEH_1}) with any $\bar{\bf Q}_i$, $i=2,...,K$, the covariance matrices of the information transmitters on the boundary point. In addition, the corresponding covariance matrix of the energy transmitter at the boundary point is denoted as $\bar{\bf Q}_1$. From (\ref{multiIDoneEH_1}), the boundary of the harvested energy is defined as $g({\bf Q}_1)$ with respect to the covariance matrix of the first transmitter given as
\begin{eqnarray}\label{multiIDoneEH_rev2_1}
\!\!g({\bf Q}_1) =  tr ({\bf H}_{11} \bar{\bf
Q}_1 {\bf H}_{11}^H) +\sum_{\!j\!=\!2}^{\!K} tr ({\bf H}_{1j} \bar{\bf
Q}_j {\bf H}_{1j}^H) = tr (\tilde{\bf H}_{11}^{(1)} {\bf
Q}_1 (\tilde{\bf H}_{11}^{(1)})^H) + tr (\tilde{\bf H}_{12} {\bar{\tilde{\bf
Q}}}_2 \tilde{\bf H}_{12}^H) ,\!\!
\end{eqnarray}
where ${\bar{\tilde{\bf Q}}}_2 = diag\{\bar{\bf Q}_{2}, \bar{\bf Q}_{3},..., \bar{\bf Q}_{K} \}$ and $g({\bf Q}_1)$ is the exactly same form with the boundary of (\ref{multiIDoneEH_4}). Now we define functions $f({\bf Q}_1)$ with respect to ${\bf Q}_1$ as
\begin{eqnarray}\label{multiIDoneEH_8_1}
\!\!f({\bf Q}_1)\!\!\!&\!\!\!=\!\!\!&\!\!\sum_{i\!=\!2}^{K}\!\log \!\det({\bf I}_{M} \!+\! {\bf H}_{ii}^H\!{\bf R}_{-i}^{-1}{\bf H}_{ii}\bar{\bf Q}_i \!)\!,\\
 \!\!\!\!&\!\!\!=\!\!&\!\!  \log \!\det({\bf I}_{(K\!-\!1)M}\! +\! \tilde{\bf H}_{22}^H\!({\bf I}_{(K\!-\!1)M} \!+ \bar{\bf C}_{I}+\! {\bf S}({\bf Q}_1)\!)^{-\!1}\!\tilde{\bf
H}_{22}{\bar{\tilde{\bf Q}}}_2 ),\!\nonumber
\end{eqnarray}
where $\bar{\bf C}_{I} = diag\{\sum_{j\neq 1,2}^{K}{\bf H}_{2j}\bar{\bf Q}_{j}{\bf H}_{2j}^H ,\sum_{j\neq 1,3}^{K}{\bf H}_{3j}\bar{\bf Q}_{j}{\bf H}_{3j}^H ,...,\sum_{j\neq 1,K}^{K}{\bf H}_{Kj}\bar{\bf Q}_{j}{\bf H}_{Kj}^H  \}$ which is analogous to (\ref{multiIDoneEH_8_1_rev3}) and ${\bf S}({\bf Q}_1) =  diag\{ {\bf H}_{21}{\bf Q}_{1}{\bf H}_{21}^H ,...,  {\bf H}_{K1}{\bf Q}_{1}{\bf H}_{K1}^H \}$. Then, the boundary point $(\bar R, \bar E)$ of (\ref{multiIDoneEH_1}) at any boundary can be rewritten as $(\bar R, \bar E) = (f(\bar{\bf Q}_1), g(\bar{\bf Q}_1))$. Therefore, if $g(\bar{\bf Q}'_1) > g(\bar{\bf Q}_1)$ with $\bar{\bf Q}_1$ of the boundary point,
\begin{eqnarray}\label{multiIDoneEH_8_1_rev1}
f(\bar{\bf Q}'_1) < f(\bar{\bf Q}_1).
\end{eqnarray}
That is, the variation on $\bar{\bf Q}_1$ that increases $g(\bar{\bf Q}_1)$ always incurs a loss in $f(\bar{\bf Q}_1)$ at the boundary. From Lemma \ref{lem3}, (\ref{multiIDoneEH_8_1_rev1}) implies that
\begin{eqnarray}\label{multiIDoneEH_8_1_rev2}
\det({\bf I}_{(K\!-\!1)M} \! + \bar{\bf C}_I+\! {\bf S}(\bar{\bf Q}'_1)\!) >\det({\bf I}_{(K\!-\!1)M}+\bar{\bf C}_I \!+\! {\bf S}(\bar{\bf Q}_1)\!).
\end{eqnarray}
In addition, because the block diagonal entries of ${\bf S}(\bar{\bf Q}_1)$ correspond to those of $\tilde{\bf H}_{21}^{(1)}\bar{\bf Q}_1(\tilde{\bf H}_{21}^{(1)})^H$ and $\bar{\bf C}_{I}$ is also the block diagonal, from \cite{Ipsen} and Section 6.2 of \cite{RHorn2}, (\ref{multiIDoneEH_5_2}) implies that $\det({\bf I}_{(K-1)M} +\bar{\bf C}_{I} +\tilde{\bf H}_{21}^{(1)}\bar{\bf Q}'_1(\tilde{\bf H}_{21}^{(1)})^H) > \det({\bf I}_{(K-1)M}+\bar{\bf C}_{I} +\tilde{\bf H}_{21}^{(1)}\bar{\bf Q}_1(\tilde{\bf H}_{21}^{(1)})^H)$.  Accordingly, by letting
\begin{eqnarray}\label{multiIDoneEH_8_4}
\!\!&f'({\bf Q}_1)\! =\!\log\! \det({\bf I}_{(K-1)M} + \tilde{\bf H}_{22}^H({\bf I}_{(K-1)M}+\bar{\bf C}_{I} +\tilde{\bf H}_{21}^{(1)}{\bf Q}_1(\tilde{\bf H}_{21}^{(1)})^H)^{-1}\tilde{\bf
H}_{22}{\bar{\tilde{\bf Q}}}_2 ),&\nonumber
\end{eqnarray}
it can be found that, if $f(\bar{\bf Q}'_1) < f(\bar{\bf Q}_1)$,
\begin{eqnarray}\label{multiIDoneEH_8_4_rev1}
f'(\bar{\bf Q}'_1) < f'(\bar{\bf Q}_1).
\end{eqnarray}
That is, the variation on $\bar{\bf Q}_1$ that increases $g(\bar{\bf Q}_k)$ always incurs a loss in $f'(\bar{\bf Q}_1)$.
Therefore, all the optimal $\bar{\bf Q}_1$ also becomes the solutions for the boundary of (\ref{multiIDoneEH_4}).
Accordingly, all the optimal $\bar{\bf Q}_1$ for $(\bar R, \bar E)$ of (\ref{multiIDoneEH_1}) become the solutions for the boundary of
(\ref{multiIDoneEH_4}).
\end{proof}
\begin{remark}\label{remark_multiIDoneEH}
Note that (\ref{multiIDoneEH_4}) can be regarded as the achievable R-E tradeoff region for two-user MIMO
IFC with an effective channel set of ($\tilde{\bf H}_{11}^{(1)}$, $\tilde{\bf H}_{12}$, $\tilde{\bf H}_{21}^{(1)}$, $\tilde{\bf H}_{22}$), the additional constraint on the covariance matrix of $\tilde{\bf x}_2$, and the covariance matrix of the interference ${\bf C}_I$ that is unknown to the energy transmitter as in Fig. \ref{figK_user_OneEH} (a). Because the rank-one optimality of the effective two-user MIMO IFC with the interference unknown to the energy transmitter in Corollary \ref{cor_oneIDmultiEH} is still valid even with the block-diagonal structure on the covariance matrix at the information transmitter, i.e. $\tilde{\bf Q}_2$, we can have the following corollary.
\begin{cor}\label{cor_multiIDoneEH} (Multiple IDs and one EH) The optimal ${\bf Q}_1$ at the boundary
of the achievable rate-energy region (\ref{multiIDoneEH_1}) has a rank one at most. That
is, $rank ({\bf Q}_1) \leq 1$.
\end{cor}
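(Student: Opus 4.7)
The plan is to proceed by contradiction, reducing the multiple-ID/one-EH problem to the effective two-user MIMO IFC setting of Proposition \ref{prop_multiIDoneEH} and then invoking the rank-one argument already established in Corollary \ref{cor_oneIDmultiEH}. Concretely, suppose to the contrary that there is a boundary point $(\bar R,\bar E)$ of the region (\ref{multiIDoneEH_1}) whose optimal energy covariance $\bar{\bf Q}_1$ satisfies $rank(\bar{\bf Q}_1)\geq 2$. By Proposition \ref{prop_multiIDoneEH}, this $\bar{\bf Q}_1$ is also an optimal solution for the boundary of the transformed region (\ref{multiIDoneEH_4}), which has exactly the form of a two-user MIMO IFC with the effective channels $(\tilde{\bf H}_{11}^{(1)},\tilde{\bf H}_{12},\tilde{\bf H}_{21}^{(1)},\tilde{\bf H}_{22})$ and with an additional external interference covariance ${\bf C}_I$ at the ID side that is unknown to the energy transmitter.

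Next, I would verify that the argument used in the proof of Corollary \ref{cor_oneIDmultiEH} applies verbatim to this new two-user instance. The key observation is that that proof derives the rank-one optimality of the energy covariance from two ingredients: (i) Lemma \ref{lem3}, which turns the rate maximization at the boundary into minimization of $\det({\bf I}+{\bf C}_I+\tilde{\bf H}_{21}^{(1)}{\bf Q}_1(\tilde{\bf H}_{21}^{(1)})^H)$; and (ii) the manipulations (\ref{oneIDmultiEH_2rev3})--(\ref{oneIDmultiEH_2rev6}), which show that any ${\bf Q}_1$ of rank $m\geq 2$ can be replaced by a rank-one beam supported on the span of ${\bf v}_I$ and ${\bf v}_E$ without decreasing either the harvested energy $\bar E_{11}^{(1)}=tr(\tilde{\bf H}_{11}^{(1)}{\bf Q}_1(\tilde{\bf H}_{11}^{(1)})^H)$ or the rate. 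Neither ingredient uses the structure of the information transmitter's covariance, so the block-diagonal constraint $\tilde{\bf Q}_2=diag\{{\bf Q}_2,\ldots,{\bf Q}_K\}$ plays no role in the rank reduction step.

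Applying this to $\bar{\bf Q}_1$, I obtain a rank-one ${\bf Q}_1'$ that yields $\bar E_{11}'^{(1)}\geq \bar E_{11}^{(1)}$ together with a rate no smaller than $\bar R$ in the effective two-user problem. Tracing back through Proposition \ref{prop_multiIDoneEH}, the same ${\bf Q}_1'$ produces a feasible point of (\ref{multiIDoneEH_1}) that dominates $(\bar R,\bar E)$, contradicting the assumed optimality of $\bar{\bf Q}_1$. Hence $rank({\bf Q}_1)\leq 1$.

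The step I expect to require the most care is the reconciliation between the effective block-diagonal interference ${\bf C}_I$ in (\ref{multiIDoneEH_8_1_rev3}) and the diagonalization used in (\ref{oneIDmultiEH_2rev3}); since ${\bf C}_I$ is positive semi-definite with an EVD ${\bf C}_I={\bf U}_C{\bf\Sigma}_C{\bf U}_C^H$, the same determinant factorization goes through, and the optimal direction ${\bf v}_I$ is simply the right singular vector of $(\tilde{\bf H}_{21}^{(1)})^H{\bf U}_C({\bf I}+{\bf\Sigma}_C)^{-1}{\bf U}_C^H\tilde{\bf H}_{21}^{(1)}$ associated with its smallest singular value. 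Beyond this bookkeeping, the argument is a direct consequence of Proposition \ref{prop_multiIDoneEH} and Corollary \ref{cor_oneIDmultiEH}, so no new analytic machinery is needed.
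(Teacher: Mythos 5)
Your proposal is correct and follows essentially the same route as the paper: it combines Proposition \ref{prop_multiIDoneEH} (mapping the optimal $\bar{\bf Q}_1$ onto the boundary of the effective two-user region (\ref{multiIDoneEH_4})) with the rank-one reduction argument of Corollary \ref{cor_oneIDmultiEH}, observing that neither Lemma \ref{lem3} nor the determinant factorization (\ref{oneIDmultiEH_2rev3})--(\ref{oneIDmultiEH_2rev6}) depends on the block-diagonal structure of $\tilde{\bf Q}_2$. The paper states this reasoning only in one sentence inside Remark \ref{remark_multiIDoneEH}; your write-up fills in the same steps explicitly, including the correct handling of the EVD of ${\bf C}_I$.
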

\end{remark}
\begin{figure}
\centering 
 \subfigure[]
  {\includegraphics[height=3.3cm]{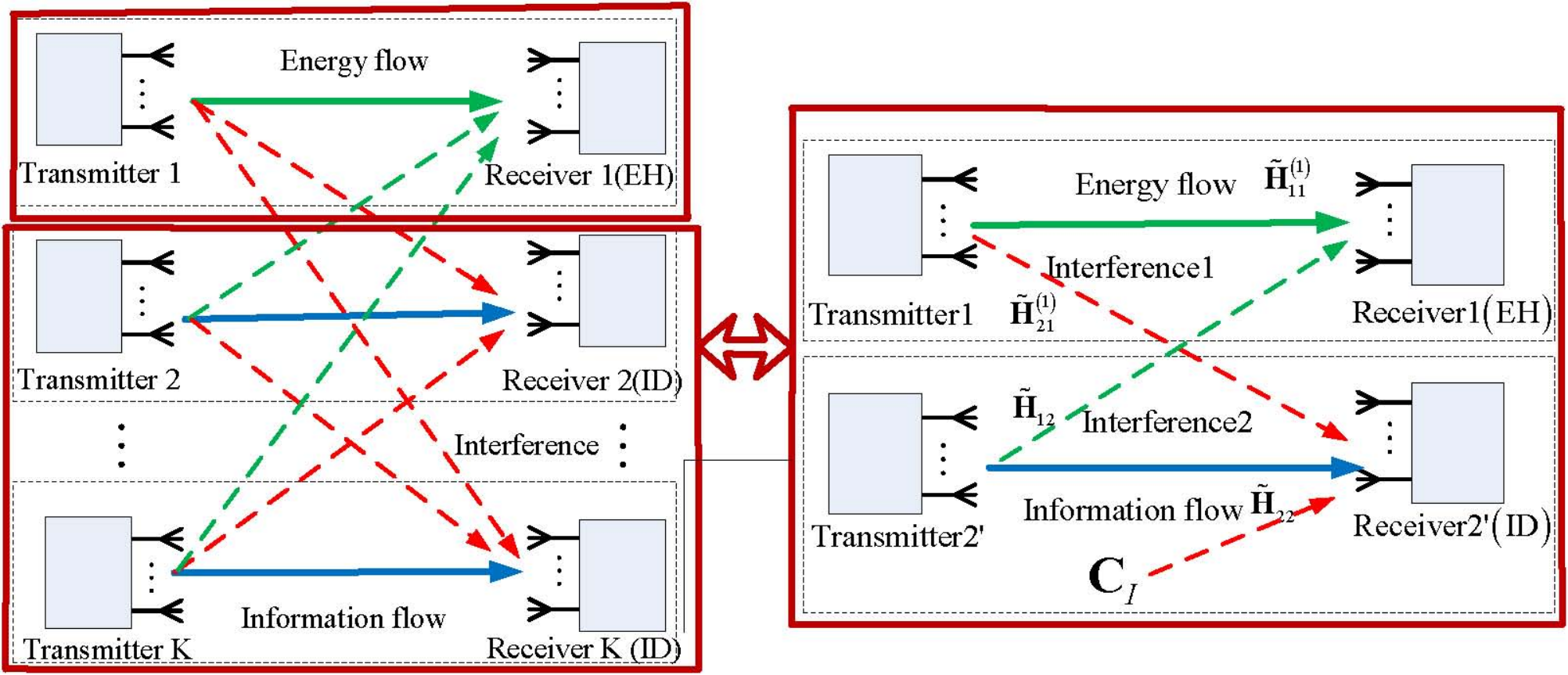}}
 \subfigure[]
  {\includegraphics[height=4.0cm]{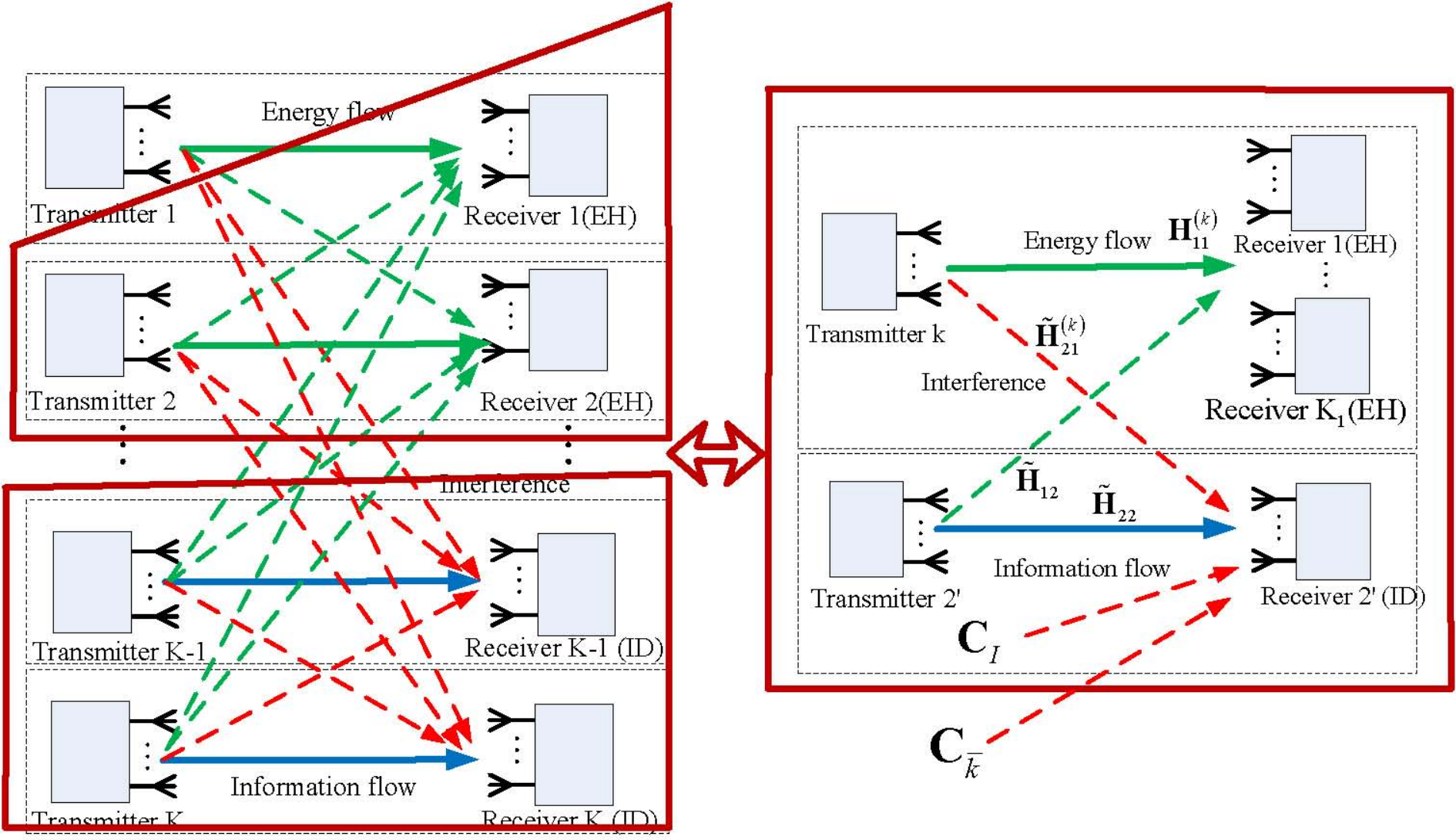}}
 \caption{(a) One EH and multiple IDs and (b) Multiple EHs and multiple IDs in a K-user IFC.} \label{figK_user_OneEH}
\end{figure}

\subsection{multiple EHs and multiple IDs in a K-user IFC}\label{sec:multiplemultiple}

Now let us consider that multiple EHs and multiple IDs coexist. That is, the transceiver pair $(Tx_i, Rx_i)$ for $i=1,...,K_1$ operate in EH mode, while $(Tx_i, Rx_i)$, $k=K_1+1,...,K$ in ID mode. By letting $R=\sum_{i=K_1+1}^{K} R_k$ and $E =\sum_{i=1}^{K_1} E_i$ with $E_i = \sum_{j=1}^{K} E_{ij}$, we can define
the achievable rate-energy region as:
\begin{eqnarray}\label{multiIDmultiEH_1}
\!\!&\!C_{R\!-\!E} (P) \!\triangleq  \!\Biggl\{ \!(R, E) : R \leq
\sum_{i=K_1+1}^{K}\log \det({\bf I}_{M} + {\bf H}_{ii}^H{\bf R}_{-i}^{-1}{\bf
H}_{ii}{\bf Q}_i ),\!&\!\nonumber\\
\!\!&\!E \!\leq \!  \sum_{\!i\!=\!1}^{\!K_1}\sum_{\!j\!=\!1}^{\!K} tr ({\bf H}_{ij} {\bf
Q}_j {\bf H}_{ij}^H), tr({\bf Q}_j)\!\leq \!P, {\bf Q}_j\!\succeq
\!{\bf 0}, j\!=\!1,...,\!K\! \Biggr\}\!.\!&\!
\end{eqnarray}
Then we have the following proposition.
\begin{prop}\label{prop_multiIDmultiEH} All the optimal ${\bf Q}_k$ at the boundary
of the achievable rate-energy region for (\ref{multiIDmultiEH_1}) also become the optimal solutions for the boundary of
\begin{eqnarray}\label{multiIDmultiEH_2}
\!\!&\!C_{R\!-\!E,k} (P) \!\triangleq  \!\Biggl\{ \!(R, E) : R \leq
\log \det({\bf I}_{(K-K_1)M} + \tilde{\bf H}_{22}^H(\tilde{\bf R}_{-2}^{(k)})^{-1}\tilde{\bf
H}_{22}\tilde{\bf Q}_2 ),\!&\!\nonumber\\
\!\!&\!E \!\leq \! tr (\tilde{\bf H}_{11}^{(k)} {\bf
Q}_{k} (\tilde{\bf H}_{11}^{(k)})^H) + tr (\tilde{\bf H}_{12} \tilde{\bf
Q}_2 \tilde{\bf H}_{12}^H) , tr({\bf Q}_j)\!\leq \!P, {\bf Q}_j\!\succeq
\!{\bf 0},  j\!=\!k,K_1+1,...,K\!\Biggr\}\!,\!&\!
\end{eqnarray}
for all $k=1,...,K_1$,
where
\begin{eqnarray}\label{multiIDmultiEH_4}
\tilde{\bf H}_{11}^{(k)} = \left[\begin{array}{c}{\bf H}_{1k} \\ \vdots \\ {\bf H}_{K_1k}  \end{array}\right], ~~\tilde{\bf H}_{12} =\left[\tilde{\bf H}_{12}^{(K_1+1)},...,\tilde{\bf H}_{12}^{(K)}\right] = \left[\begin{array}{ccc}{\bf H}_{1K_1+1} & \hdots & {\bf H}_{1K} \\ \vdots & \ddots & \vdots\\ {\bf H}_{K_1K_1+1} &\hdots &{\bf H}_{K_1K}  \end{array}\right]\in\mathbb{C}^{K_1M\times(K-K_1)M},
\end{eqnarray}
\begin{eqnarray}\label{multiIDmultiEH_5}
\tilde{\bf H}_{21}^{(k)} = \left[\begin{array}{c}{\bf H}_{K_1+1k} \\ \vdots \\ {\bf H}_{Kk}  \end{array}\right], ~~\tilde{\bf H}_{22} =diag\{{\bf H}_{K_1+1K_1+1}, {\bf H}_{K_1+2K_1+2},...,{\bf H}_{KK} \}\in\mathbb{C}^{(K-K_1)M\times(K-K_1)M},
\end{eqnarray}
and
$ \tilde{\bf Q}_{2} = diag\{{\bf Q}_{K_1+1},...,{\bf Q}_{K} \}, ~~\tilde{\bf R}_{-2}^{(k)} = {\bf I}_{(K-K_1)M} + {\bf C}_I +{\bf C}_{\bar k}+ \tilde{\bf H}_{21}^{(k)}{\bf Q}_k(\tilde{\bf H}_{21}^{(k)})^H \in\mathbb{C}^{(K-K_1)M\times(K-K_1)M}$. Here, ${\bf C}_{\bar k} =  \sum_{\!j\!=\!1, j\!\neq\! k}^{\!K_1}   \tilde{\bf H}_{21}^{(j)} {\bf
Q}_j (\tilde{\bf H}_{21}^{(j)})^H$ and $\displaystyle{{\bf C}_{I} = diag\left\{\sum_{\substack{j=K_1+1,\\ j\neq K_1+1}}^{K}{\bf H}_{K_1+1j}{\bf Q}_{j}{\bf H}_{K_1+1j}^H ,...,\sum_{\substack{j=K_1+1,\\ j\neq K}}^{K}{\bf H}_{Kj}{\bf Q}_{j}{\bf H}_{Kj}^H  \right\}}$.
\end{prop}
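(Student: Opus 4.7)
The plan is to combine the proof techniques used in Proposition~\ref{prop_oneIDmultiEH} and Proposition~\ref{prop_multiIDoneEH}, since the multi--ID/multi--EH case inherits the two difficulties in isolation: (i) the presence of other energy transmitters whose beams are unknown to $Tx_k$, and (ii) the presence of multiple ID receivers whose rates must be aggregated into a single (block diagonal) log--determinant expression. So the whole strategy is to pin down one arbitrary boundary point $(\bar R,\bar E)$ of (\ref{multiIDmultiEH_1}), freeze all covariance matrices except ${\bf Q}_k$ at their optimal values $\bar{\bf Q}_j$ ($j\neq k$, including all information covariances $\bar{\bf Q}_{K_1+1},\ldots,\bar{\bf Q}_K$), and then view the remaining one--variable problem in ${\bf Q}_k$ as an instance of the region (\ref{multiIDmultiEH_2}) up to constants that are independent of ${\bf Q}_k$.

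First I would rewrite the sum--rate in (\ref{multiIDmultiEH_1}) as a single log--determinant by stacking the ID channels and the ID covariances as in (\ref{multiIDmultiEH_5}), exactly as done in the proof of Proposition~\ref{prop_multiIDoneEH}; the interference covariance at the ``stacked'' ID receiver then splits naturally into a block diagonal piece ${\bf C}_I$ coming from the other information transmitters, a piece ${\bf C}_{\bar k}=\sum_{j\neq k,\, j\le K_1}\tilde{\bf H}_{21}^{(j)}\bar{\bf Q}_j(\tilde{\bf H}_{21}^{(j)})^H$ coming from the other energy transmitters (unknown to $Tx_k$), and the contribution $\tilde{\bf H}_{21}^{(k)}{\bf Q}_k(\tilde{\bf H}_{21}^{(k)})^H$ from $Tx_k$. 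In parallel, the harvested--energy constraint decomposes as
\begin{eqnarray}\nonumber
\sum_{i=1}^{K_1}\sum_{j=1}^{K} tr({\bf H}_{ij}{\bf Q}_j{\bf H}_{ij}^H)
= tr(\tilde{\bf H}_{11}^{(k)}{\bf Q}_k(\tilde{\bf H}_{11}^{(k)})^H)
+ tr(\tilde{\bf H}_{12}\bar{\tilde{\bf Q}}_2\tilde{\bf H}_{12}^H)
+ \!\!\!\sum_{j\neq k,\,j\le K_1}\!\!\!tr(\tilde{\bf H}_{11}^{(j)}\bar{\bf Q}_j(\tilde{\bf H}_{11}^{(j)})^H),
\end{eqnarray}
where only the first two terms depend on ${\bf Q}_k$ and ${\bar{\tilde{\bf Q}}}_2$, while the third is a constant offset (which merely translates the R--E curve along the $E$--axis as already depicted in Fig.~\ref{figK_user_OneID}).

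Next, following the argument below (\ref{multiIDoneEH_8_1}), any perturbation $\bar{\bf Q}_k\to\bar{\bf Q}'_k$ that strictly increases the harvested energy $g({\bf Q}_k)$ must strictly decrease the rate $f({\bf Q}_k)$ at the boundary; Lemma~\ref{lem3} then translates this into the statement
\begin{eqnarray}\nonumber
\det\!\left({\bf I}_{(K-K_1)M}+\bar{\bf C}_I+{\bf C}_{\bar k}+{\bf S}(\bar{\bf Q}'_k)\right)
>\det\!\left({\bf I}_{(K-K_1)M}+\bar{\bf C}_I+{\bf C}_{\bar k}+{\bf S}(\bar{\bf Q}_k)\right),
\end{eqnarray}
where ${\bf S}({\bf Q}_k)=diag\{{\bf H}_{K_1+1,k}{\bf Q}_k{\bf H}_{K_1+1,k}^H,\ldots,{\bf H}_{K,k}{\bf Q}_k{\bf H}_{K,k}^H\}$ is block diagonal and the block diagonal blocks of ${\bf S}({\bf Q}_k)$ coincide with those of $\tilde{\bf H}_{21}^{(k)}{\bf Q}_k(\tilde{\bf H}_{21}^{(k)})^H$. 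Since $\bar{\bf C}_I$ is also block diagonal by construction (see (\ref{multiIDoneEH_8_1_rev3})) and ${\bf C}_{\bar k}$ is added identically in both determinants, I can invoke the same block--diagonal comparison inequality from \cite{Ipsen} and Section~6.2 of \cite{RHorn2} used in Proposition~\ref{prop_multiIDoneEH} to transfer the inequality to the full (non--block--diagonal) determinant $\det({\bf I}+\bar{\bf C}_I+{\bf C}_{\bar k}+\tilde{\bf H}_{21}^{(k)}{\bf Q}_k(\tilde{\bf H}_{21}^{(k)})^H)$. One more application of Lemma~\ref{lem3} in reverse then yields the analogous rate inequality for the two--user region~(\ref{multiIDmultiEH_2}), proving that $\bar{\bf Q}_k$ is also on the boundary of (\ref{multiIDmultiEH_2}) for every $k=1,\ldots,K_1$.

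The main obstacle I anticipate is the simultaneous bookkeeping of ${\bf C}_{\bar k}$ and ${\bf C}_I$: unlike Proposition~\ref{prop_multiIDoneEH}, the effective noise at the stacked ID receiver now contains a genuinely non--block--diagonal term ${\bf C}_{\bar k}$ (because $\tilde{\bf H}_{21}^{(j)}\bar{\bf Q}_j(\tilde{\bf H}_{21}^{(j)})^H$ for $j\ne k$ has non--zero off--diagonal blocks). I therefore need to be careful that the block--diagonal monotonicity step is applied to the \emph{difference} between the two determinants, treating $\bar{\bf C}_I+{\bf C}_{\bar k}$ as a single positive semi--definite additive offset and using the same argument as in the proof of Proposition~\ref{prop_multiIDoneEH} (where ${\bf I}$ played the role of that offset). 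Once this subtle point is handled, the rest of the proof is a direct mechanical extension of the $K_1=1$ and $K-K_1=1$ special cases already established.
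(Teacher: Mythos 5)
Your proposal is correct and follows essentially the same route as the paper: the paper likewise reduces the multi-EH/multi-ID case by first invoking the energy-axis-shift argument of Proposition~\ref{prop_oneIDmultiEH} (dropping the other energy transmitters' contributions to $E$ as a constant offset) and then stacking the ID links into the block-diagonal log-determinant of Proposition~\ref{prop_multiIDoneEH}, applying Lemma~\ref{lem3} and the block-diagonal determinant comparison from \cite{Ipsen} and \cite{RHorn2} with $\bar{\bf C}_I+{\bf C}_{\bar k}$ playing the role of the additive offset. The subtlety you flag --- that ${\bf C}_{\bar k}$ has nonzero off-diagonal blocks while only its diagonal blocks enter the per-receiver rates --- is present in the paper's argument as well and is handled there exactly as you propose.
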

\begin{proof}
Let us consider the boundary point $(\bar R, \bar E)$ of the achievable rate-energy for (\ref{multiIDmultiEH_1}) for any given $\bar{\bf Q}_{K_1+1},...,\bar{\bf Q}_{K}$. In addition, the corresponding covariance matrix of the $k$th energy transmitter at the boundary point are denoted as $\bar{\bf Q}_k$. Then, from Proposition \ref{prop_oneIDmultiEH}, $\bar{\bf Q}_k$ for all $(\bar R, \bar E)$ of (\ref{multiIDmultiEH_1}) also become the solutions for the boundary of
\begin{eqnarray}\label{multiIDmultiEH_8}
\!\!&\!C_{R\!-\!E,k}' (P) \!\triangleq  \!\Biggl\{ \!(R, E) : R \leq
\sum_{i=K_1+1}^{K}\log \det({\bf I}_{M}+ {\bf H}_{ii}^H{\bf R}_i^{-1}{\bf
H}_{ii}{\bf Q}_i ),\!&\!\nonumber\\
\!\!&\!E \!\leq \!\sum_{\!i\!=\!1}^{\!K_1} (tr({\bf H}_{ik} {\bf
Q}_k {\bf H}_{ik}^H) + \sum_{\!j\!=\!K_1+1}^{\!K} tr ({\bf H}_{ij} {\bf
Q}_j {\bf H}_{ij}^H) ), tr({\bf Q}_j)\!\leq \!P, {\bf Q}_j\!\succeq
\!{\bf 0}, j\!=\!1,...,\!K\! \Biggr\}\!\!&\!\nonumber\\
&\!= \!\Biggl\{ \!(R, E) : R \leq
\sum_{i=K_1+1}^{K}\log \det({\bf I}_{M}+ {\bf H}_{ii}^H{\bf R}_i^{-1}{\bf
H}_{ii}{\bf Q}_i ),\!&\!\nonumber\\
\!\!&\!E \!\leq \!tr(\tilde{\bf H}_{11}^{(k)}{\bf
Q}_k (\tilde{\bf H}_{11}^{(k)})^H)+\sum_{\!j\!=\!K_1+1}^{\!K} tr ( \tilde{\bf H}_{12}^{(j)} {\bf
Q}_j (\tilde{\bf H}_{12}^{(j)})^H)  , tr({\bf Q}_j)\!\leq \!P, {\bf Q}_j\!\succeq
\!{\bf 0}, j\!=\!k,K_1+1,...,\!K\! \Biggr\}\!.\!&\!
\end{eqnarray}
In (\ref{multiIDmultiEH_8}),
\begin{eqnarray}\label{multiIDmultiEH_7}
&\log \det({\bf I}_{M} + {\bf H}_{ii}^H{\bf R}_{-i}^{-1}{\bf
H}_{ii}{\bf Q}_i ) =  &\nonumber\\ &\log \det({\bf I}_{M}+ {\bf H}_{ii}^H({\bf I}_M + \sum_{\substack{ j=1 \\
j\neq k}}^{K_1}{\bf H}_{ij}{\bf Q}_j{\bf H}_{ij}^H+ \sum_{\substack{ j=K_1+1 \\
j\neq i}}^{K}{\bf H}_{ij}{\bf Q}_j{\bf H}_{ij}^H+{\bf H}_{ik}{\bf Q}_k{\bf H}_{ik}^H)^{-1}{\bf
H}_{ii}{\bf Q}_i ).&
\end{eqnarray}
Here, $\sum_{\substack{ j=1 \\
j\neq k}}^{K_1}{\bf H}_{ij}{\bf Q}_j{\bf H}_{ij}^H$ and $\sum_{\substack{ j=K_1+1 \\
j\neq i}}^{K}{\bf H}_{ij}{\bf Q}_j{\bf H}_{ij}^H$ are the interferences from the energy-transferring transmitters except the $k$th transmitter and from other information-transferring transmitters, respectively. In addition, they are equal to the $i$th $M\times M$ diagonal block entry of ${\bf C}_{\bar k}$ and ${\bf C}_{I}$, respectively. Therefore, the R-E region of (\ref{multiIDmultiEH_8}) is equivalent with the case of one EH and $K-K_1$ IDs in (\ref{multiIDoneEH_1}) where the interference whose covariance matrix is given as the $i$th $M\times M$ diagonal block entry of ${\bf C}_{\bar k}$ is added to the $i$th ID receiver. Therefore, similarly to what is done in the proof of Proposition \ref{prop_multiIDoneEH}, $\bar{\bf Q}_k$ for all $(\bar R, \bar E)$ of (\ref{multiIDmultiEH_8}) also become the solutions for the boundary of (\ref{multiIDmultiEH_2}). This implies that the optimal $\bar{\bf Q}_k$, $k=1,...,K_1$ for $(\bar R, \bar E)$ of (\ref{multiIDmultiEH_1}) also yields the boundary of (\ref{multiIDmultiEH_2}). See also Fig. \ref{figK_user_OneEH} (b).
\end{proof}
\begin{cor}\label{cor_multiIDmultiEH} (Multiple IDs and multiple EHs) The optimal ${\bf Q}_k$, $k=1,...,K_1$ at the boundary
of the achievable rate-energy region (\ref{multiIDmultiEH_1}) has a rank one at most. That
is, $rank ({\bf Q}_k) \leq 1$ for $k=1,...,K_1$.
\end{cor}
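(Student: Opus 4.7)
The plan is to prove this corollary by contradiction, reusing the reduction to the effective two-user MIMO IFC established in Proposition \ref{prop_multiIDmultiEH} and then invoking the rank-one argument developed in the proof of Corollary \ref{cor_oneIDmultiEH}. This mirrors the strategy already used to derive Corollary \ref{cor_multiIDoneEH} from Corollary \ref{cor_oneIDmultiEH} via Proposition \ref{prop_multiIDoneEH}.

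First, I would suppose that at some boundary point $(\bar R, \bar E)$ of (\ref{multiIDmultiEH_1}) the optimal covariance $\bar{\bf Q}_k$ of some energy transmitter $k\in\{1,\ldots,K_1\}$ has $\mathrm{rank}(\bar{\bf Q}_k)\geq 2$. By Proposition \ref{prop_multiIDmultiEH}, the same $\bar{\bf Q}_k$ is optimal for some boundary point of the effective two-user MIMO IFC problem (\ref{multiIDmultiEH_2}), whose effective channels are $(\tilde{\bf H}_{11}^{(k)},\tilde{\bf H}_{12},\tilde{\bf H}_{21}^{(k)},\tilde{\bf H}_{22})$, whose information covariance $\tilde{\bf Q}_2$ is constrained to be block diagonal, and whose ID receiver sees the aggregate unknown interference covariance ${\bf C}_I+{\bf C}_{\bar k}$ (unknown to the $k$th energy transmitter). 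The goal is then to show that in this effective problem no optimal ${\bf Q}_k$ can have rank larger than one.

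Next, I would import the reasoning from the proof of Corollary \ref{cor_oneIDmultiEH}. By Lemma \ref{lem3}, at the boundary of (\ref{multiIDmultiEH_2}) the matrix $\det(\tilde{\bf R}_{-2}^{(k)})=\det({\bf I}+{\bf C}_I+{\bf C}_{\bar k}+\tilde{\bf H}_{21}^{(k)}{\bf Q}_k(\tilde{\bf H}_{21}^{(k)})^H)$ must be minimized subject to the harvested-energy constraint $\mathrm{tr}(\tilde{\bf H}_{11}^{(k)}{\bf Q}_k(\tilde{\bf H}_{11}^{(k)})^H)=\bar E_{11}^{(k)}$. Performing the EVD ${\bf C}_I+{\bf C}_{\bar k}={\bf U}_C{\bf \Sigma}_C{\bf U}_C^H$ and factoring as in (\ref{oneIDmultiEH_2rev3}) shows that the rate-maximizing beam direction is the smallest right singular vector ${\bf v}_I$ of $(\tilde{\bf H}_{21}^{(k)})^H{\bf U}_C({\bf I}+{\bf \Sigma}_C)^{-1}{\bf U}_C^H\tilde{\bf H}_{21}^{(k)}$, while the energy-maximizing direction is the dominant right singular vector ${\bf v}_E$ of $\tilde{\bf H}_{11}^{(k)}$. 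The same power-concentration argument then produces a rank-one beam lying in $\mathrm{range}[{\bf v}_I\;{\bf v}_E]$ that simultaneously achieves at least $\bar R$ and at least $\bar E_{11}^{(k)}$, contradicting $\mathrm{rank}(\bar{\bf Q}_k)\geq 2$.

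The step requiring the most care is checking that the rank-one argument of Corollary \ref{cor_oneIDmultiEH} really survives the two structural differences of the current setting. First, $\tilde{\bf Q}_2$ here is block diagonal and $\tilde{\bf H}_{22}$ is block diagonal, but the rank-one reduction acts only on ${\bf Q}_k$, leaving $\tilde{\bf Q}_2$ untouched, so this constraint is automatically preserved; moreover Lemma \ref{lem3} only requires the positive-definite matrix $S=\tilde{\bf H}_{22}\tilde{\bf Q}_2\tilde{\bf H}_{22}^H$ to be fixed, not to have any particular structure. Second, the unknown interference at the ID receiver is now ${\bf C}_I+{\bf C}_{\bar k}$ rather than just ${\bf C}_{\bar k}$, but these enter the analysis only as an additive positive-semidefinite term and can be absorbed into a single EVD, exactly as in (\ref{oneIDmultiEH_2rev3}). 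With both points verified, the rank-one conclusion carries over, yielding the desired contradiction and proving $\mathrm{rank}({\bf Q}_k)\leq 1$ for every $k=1,\ldots,K_1$.
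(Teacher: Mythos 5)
Your proposal is correct and follows essentially the same route as the paper: the paper leaves Corollary~\ref{cor_multiIDmultiEH} unproved precisely because it is meant to follow from Proposition~\ref{prop_multiIDmultiEH} combined with the rank-one argument of Corollary~\ref{cor_oneIDmultiEH}, with the aggregate interference ${\bf C}_I+{\bf C}_{\bar k}$ absorbed into a single EVD (exactly as the paper itself notes in Remark~\ref{remark_multiIDmultiEH} via the matrix in (\ref{multiIDmultiEH_7_rev2})) and the block-diagonal structure of $\tilde{\bf Q}_2$ being irrelevant to the reduction acting on ${\bf Q}_k$ alone. Your explicit verification of those two structural points is exactly the content the paper relies on implicitly.
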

\begin{remark}\label{remark_multiIDmultiEH}
From Proposition \ref{prop_multiIDmultiEH}, when transferring the energy in K-user MIMO IFC, the energy transmitters' optimal strategy is a rank-one beamforming with a proper power allocation, which is a generalized version of Proposition \ref{prop1} for two-user MIMO IFC \cite{ParkBruno}. That is, if the covariance matrix of an energy transmitter in the general K-user MIMO IFC has a rank ($\geq 2$), a rank-one beamforming for that transmitter exhibiting either higher information rate or larger harvested energy can be found. However, as observed in (\ref{oneIDmultiEH_2rev4}) of Corollary \ref{cor_oneIDmultiEH}, the optimal beamforming depends on the covariance matrix of the interference from other energy/information transmitters (specifically, the beamforming directions of the other energy transmitters and the covariance matrices of information transmitters). For example, from (\ref{oneIDmultiEH_2rev4}) and $\tilde{\bf R}_{-2}^{(k)} $ of (\ref{multiIDmultiEH_5}), the beamforming vector minimizing the interference, $\bar{\bf v}_I$, is given as the right singular vector associated with the smallest singular value of
\begin{eqnarray}\label{multiIDmultiEH_7_rev2}
(\tilde{\bf H}_{21}^{(k)})^H{\bf U}_{\tilde C}({\bf I}_{(K-K_1)M} +  {\bf \Sigma}_{\tilde C})^{-1}{\bf U}_{\tilde C}^H\tilde{\bf H}_{21}^{(k)},
\end{eqnarray}
where $\tilde{\bf C}_{\bar k} \triangleq {\bf C}_{\bar k} + {\bf C}_{I} = {\bf U}_{\tilde C}{\bf \Sigma}_{\tilde C} {\bf U}_{\tilde C}^H$. Note that when $\tilde{\bf C}_{\bar k}= {\bf 0}$, $\bar{\bf v}_I$ becomes the singular vector associated with the smallest singular value of the effective channel $\tilde{\bf H}_{21}^{(k)}$ in two-user MIMO IFC. This is important because, if the transmitters know the global CSI and the {\it{centralized}} optimization is possible, they can align the energy beams properly in cooperation with other transmitters (i.e., beam alignment) and we can further improve the R-E region, which is out of scope of this paper. Instead, in the next section, we will present the distributed rank-one beamforming strategies and, in the simulation, we show that the system performance (achievable rate and harvested energy) can be further improved by using beam tilting strategy jointly with the distributed rank-one beamforming.
\end{remark}

\section{Distributed Rank-one Beamforming design and Achievable R-E region}\label{sec:rankoneBF_REregion}

In this section, we propose distributed rank-one beamforming methods based on Proposition \ref{prop_multiIDmultiEH}, and then propose an iterative algorithm that computes the achievable R-E trade-off curves for the K-user MIMO IFC with different beamforming schemes. Again, let us consider that the transceiver pair $(Tx_i, Rx_i)$ for $i=1,...,K_1$ operate in EH mode, while $(Tx_i, Rx_i)$, $k=K_1+1,...,K$ operate in ID mode, without loss of generality.

\subsection{Distributed Rank-one Beamforming Design}\label{ssec:rankoneBF}

Because there exists multiple EH receivers, each energy transferring transmitter steers its signal to maximize the energy transferred to all EH receivers. Therefore, because ${\bf Q}_k$ for $k=1,...,K_1$ has a rank one from Proposition \ref{prop_multiIDmultiEH}, it can be given by
\begin{eqnarray}\label{MEB}
{\bf Q}_k = P_k [\tilde{\bf V}_{11}^{(k)}]_1[\tilde{\bf V}_{11}^{(k)}]_1^H,
\end{eqnarray}
where $\tilde{\bf V}_{11}^{(k)}$ is a $M \times M$ unitary matrix obtained
from the SVD of $\tilde{\bf H}_{11}^{(k)}$ and $0\leq P_k \leq P$. That is,
$\tilde{\bf H}_{11}^{(k)} = \tilde{\bf U}_{11}^{(k)} \tilde{\bf \Sigma}_{11}^{(k)} (\tilde{\bf V}_{11}^{(k)})^H$,
where $\tilde{\bf \Sigma}_{11}^{(k)} = diag\{\sigma_{1}(\tilde{\bf H}_{11}^{(k)}),..., \sigma_{
 M}(\tilde{\bf H}_{11}^{(k)}) \}$. That is, the $k$th transmitter's beamforming is analogous to the maximum-energy beamforming (MEB) on the two-user MIMO IFC (as in \cite{ParkBruno}) but applied to the effective channel $\tilde{\bf H}_{11}^{(k)}$. Here,
the energy harvested from the $k$th transmitter is given by $P_k
(\sigma_{1}(\tilde{\bf H}_{11}^{(k)}))^2$.

From an ID perspective
at the ID receivers, the $k$th transmitter should steer its
signal to minimize the interference power to all the ID receivers.
That is, the corresponding transmit
covariance matrix ${\bf Q}_k$ is then given by
\begin{eqnarray}\label{MLB}
{\bf Q}_k = P_k [\tilde{\bf V}_{21}^{(k)}]_M[\tilde{\bf V}_{21}^{(k)}]_M^H,
\end{eqnarray}
where $\tilde{\bf V}_{21}^{(k)}$ is a $M \times M$ unitary matrix obtained
from the SVD of $\tilde{\bf H}_{21}^{(k)}$ and $0\leq P_k\leq P$. That is,
$\tilde{\bf H}_{21}^{(k)} = \tilde{\bf U}_{21}^{(k)} \tilde{\bf \Sigma}_{21}^{(k)} (\tilde{\bf V}_{21}^{(k)})^H$,
where $\tilde{\bf \Sigma}_{21}^{(k)} = diag\{\sigma_{1 }(\tilde{\bf H}_{21}^{(k)}),..., \sigma_{
M }(\tilde{\bf H}_{21}^{(k)}) \}$. The $k$th transmitter's beamforming that minimizes the interference to the effective channel $\tilde{\bf H}_{21}^{(k)}$ is also analogous to the {\it{minimum-leakage beamforming (MLB)}} in the two-user MIMO IFC \cite{ParkBruno}.
Then, the energy harvested from the $k$th transmitter is given by $P_k\|\tilde{\bf H}_{11,k} [\tilde{\bf V}_{21,k}]_M\|^2$.

Because MEB and MLB strategies are developed according to different aims - either
maximizing transferred energy to EH or minimizing interference
(or, leakage) to ID, respectively, they have their own weaknesses - causing either large interference to ID receivers or harvesting insufficient energy at the EH receivers.

\subsubsection{Energy-regularized SLER-maximizing beamforming}
\label{ssec:SLER_maximizing}

To maximize the transferred energy to EH and
simultaneously minimize the leakage to ID, we define a new
performance metric, signal-to-leakage-and-harvested energy ratio
(SLER) at the $k$th transmitter as \cite{ParkBruno}
\begin{eqnarray}\label{GSVD1}
SLER_k &=&\frac{\|\tilde{\bf H}_{11}^{(k)}{\bf v}_k \|^2}{\|\tilde{\bf H}_{21}^{(k)}{\bf v}_k \|^2
+ max(\bar E/K_1 -P\|\tilde{\bf H}_{11}^{(k)}\|^2 ,0)} \nonumber\\
&=&\frac{{\bf v}_k^H(\tilde{\bf H}_{11}^{(k)})^H\tilde{\bf H}_{11}^{(k)}{\bf v}_k }{{\bf
v}_k^H\left((\tilde{\bf H}_{21}^{(k)})^H\tilde{\bf H}_{21}^{(k)} +{max({\bar E}/{K_1 P} -\|\tilde{\bf
H}_{11}^{(k)}\|^2 ,0)}{\bf I}_M \right){\bf v}_k }.
\end{eqnarray}
The beamforming vector ${\bf v}_k$ that maximizes SLER of
(\ref{GSVD1}) is then given by
\begin{eqnarray}\label{GSVD3}
{\bf v}_k = \sqrt{P_k}\frac{ \tilde{\bf v}_k}{\|\tilde{\bf v}_k\|},
\end{eqnarray}
where $\tilde{\bf v}_k$ is the generalized eigenvector associated with
the largest generalized eigenvalue of the matrix pair
\begin{eqnarray}\label{GSVD4}
((\tilde{\bf
H}_{11}^{(k)})^H\tilde{\bf H}_{11}^{(k)}, (\tilde{\bf H}_{21}^{(k)})^H\tilde{\bf H}_{21}^{(k)} +{max(\bar E
/{K_1 P} -\|\tilde{\bf H}_{11}^{(k)}\|^2 ,0)}{\bf I}_M).
\end{eqnarray}
Here, $\tilde{\bf v}_k$
can be efficiently computed by using a GSVD algorithm \cite{JPark, JPark2}.

\begin{remark}\label{remark_SLERBF1} Note that the SLER metric is comparable with the signal-to-leakage-and-noise (SLNR) ratio \cite{JPark} which is widely utilized in the precoding design for the information transfer in the multi-user MIMO system. That is, the noise power contributes to the denominator of SLNR in the beamforming design \cite{JPark} because the noise at the receiver together with the leakage to other receivers affects the system performance degradation for the information transfer. In contrast, the contribution of the minimum required harvested energy is added in SLER, because the required harvested energy minus the energy directly harvested from the $k$th transmitter is the main performance barrier of the EH receiver. Interestingly, from (\ref{GSVD1}), when the required harvested
energy at the EH receiver is large, the matrix $(\!\tilde{\bf H}_{21}^{\!(\!k\!)\!}\!)^H\!\tilde{\bf H}_{21}^{\!(\!k\!)\!}\! +\!{max({\bar E}/{K_1 \!P} \!-\!\|\!\tilde{\bf
H}_{11}^{\!(\!k\!)\!}\!\|^2 ,0)}{\bf I}_M $ in the denominator
of (\ref{GSVD1}) approaches an identity matrix multiplied by a
scalar. Accordingly, the SLER maximizing beamforming is equivalent
with the MEB in (\ref{MEB}). That is, ${\bf v}_k$
becomes $\sqrt{P_k}[\tilde{\bf V}_{11}^{(k)}]_1$. In contrast, as the required
harvested energy becomes smaller, ${\bf v}$ is steered such that
less interference is leaked into the ID receiver to reduce the
denominator of (\ref{GSVD1}). That is, ${\bf v}$ approaches the
MLB weight vector in (\ref{MLB}). Therefore, the
proposed SLER maximizing beamforming balances both metrics -
energy maximization to EH and leakage minimization to ID, which has been confirmed in \cite{ParkBruno}.
\end{remark}

\subsection{Achievable R-E region}\label{ssec:REregion}

Note that the achievable sum-rate is unknown for the general K-user MIMO IFC and, accordingly, the optimal region for (\ref{multiIDmultiEH_2}) is not easily identified. Instead, motivated by Corollary \ref{cor_multiIDmultiEH} that each $k$th transmitter for $k=1,...,K_1$ transfers its signal with a rank-one beamforming at most, we propose an iterative algorithm that optimizes the transmit powers $P_k$, $k=1,...,K_1$ and ${\bf Q}_k$, $k=K_1+1,...,K$ simultaneously. The energy transmitters can choose their covariance matrices among (\ref{MEB}), (\ref{MLB}), (\ref{GSVD3}), or other rank-one covariance matrices. In this paper, we assume that they adopt the same beamforming strategy among MLB, MEB, and SLER beamforming and compare their performance by simulation.
Given ${\bf Q}_k$ as in (\ref{MEB}), (\ref{MLB}), or (\ref{GSVD3}),
the achievable rate-energy region is then given as:
\begin{eqnarray}\label{REregion_1}
\!\!&\!C_{R\!-\!E} (P) \!=  \!\Biggl\{ \!(R, E) : R \leq
\sum_{i=K_1+1}^{K}\log \det({\bf I}_{M} + {\bf H}_{ii}^H{\bf R}_{-i}^{-1}{\bf
H}_{ii}{\bf Q}_i ),\!&\!\nonumber\\
\!\!&\!E \!\leq \!\sum_{\!j\!=\!K_1+1}^{\!K} tr ( \tilde{\bf H}_{12}^{(j)} {\bf
Q}_j (\tilde{\bf H}_{12}^{(j)})^H) + E_{11}, tr({\bf Q}_j)\!\leq \!P, {\bf Q}_j\!\succeq
\!{\bf 0}, j\!=\!1,...,\!K\! \Biggr\}\!.\!&\!
\end{eqnarray}
where
\begin{eqnarray}\label{REregion_2}
E_{11} = \sum_{\!j\!=\!1}^{\!K_1}tr(\tilde{\bf H}_{11}^{(j)}{\bf
Q}_k (\tilde{\bf H}_{11}^{(j)})^H)= \sum_{\!j\!=\!1}^{\!K_1} \omega_j P_j,
\end{eqnarray}
with
\begin{eqnarray}\label{REregion_3}
\omega_j  = \left\{\begin{array}{c}\|\tilde{\bf H}_{11}^{(j)} [\tilde{\bf V}_{11}^{(j)}]_1 \|^2  \quad{\text{for MEB}}\\
\|\tilde{\bf H}_{11}^{(j)} [\tilde{\bf V}_{21}^{(j)}]_M\|^2 \quad {\text{for MLB}} \\ \|\tilde{\bf H}_{11}^{(j)}\frac{ \tilde{\bf v}_j}{\|\tilde{\bf v}_j\|}\|^2 \quad {\text{for SLER beamforming}} \end{array} \right. ,
\end{eqnarray}
and
\begin{eqnarray}\label{REregion_4}
{\bf R}_{-i} = {\bf I}_{M}+ \sum_{ j=1}^{K_1}P_j{\bf \Omega}_{ij}+ \sum_{\substack{ j=K_1+1 \\
j\neq i}}^{K}{\bf H}_{ij}{\bf Q}_j{\bf H}_{ij}^H,
\end{eqnarray}
with
\begin{eqnarray}\label{REregion_5}
{\bf \Omega}_{ij} = \left\{\begin{array}{c}{\bf H}_{ij} [\tilde{\bf V}_{11}^{(j)}]_1 [\tilde{\bf V}_{11}^{(j)}]_1^H{\bf H}_{ij}^H  \quad{\text{for MEB}}\\
{\bf H}_{ij}[\tilde{\bf V}_{21}^{(j)}]_M[\tilde{\bf V}_{21}^{(j)}]_M^H{\bf H}_{ij}^H\quad {\text{for MLB}} \\{\bf H}_{ij} \tilde{\bf v}_j \tilde{\bf v}_j^H{\bf H}_{ij}^H /\|\tilde{\bf v}_j\|^2 \quad {\text{for SLER beamforming}} \end{array} \right. .
\end{eqnarray}
Accordingly, we have the following optimization problem for the rate-energy region of (\ref{REregion_1})
\begin{eqnarray}\label{REregion_6}
\!\!(\!P1\!)\! \underset{\substack{{\bf Q}_{k},k=K_1+1,...,K\\P_i,i=1,...,K_1}}{\text{ maximize}}& J\triangleq \sum_{i=K_1+1}^{K}\log \det({\bf I}_{M} + {\bf H}_{ii}^H{\bf R}_{-i}^{-1}{\bf
H}_{ii}{\bf Q}_i )\\\label{REregion_7} \!\!{\text{subject to}}\!&\!\!\sum_{\!j\!=\!K_1+1}^{\!K} tr ( \tilde{\bf H}_{12}^{(j)} {\bf
Q}_j (\tilde{\bf H}_{12}^{(j)})^H) \geq \max(\bar E \!-\!
E_{11},0)\!\\\label{REregion_8} &tr({\bf Q}_i) \leq P,~{\bf
Q}_i \succeq{\bf 0}, i=K_1,...,K\\
& P_j\leq P, j = 1,...,K_1,
\end{eqnarray}
where $\bar E$ can take any value less than $E_{\max}$ denoting
the maximum energy transferred from all the transmitters. Here, it
can be easily derived that $E_{\max}$ is given as
\begin{eqnarray}\label{REregion_9}
E_{\max} =P\left(  \sum_{\!j\!=\!1}^{\!K_1} \omega_j +\sum_{j=K_1+1}^{K} \sigma_1(\tilde{\bf H}_{12}^{(j)})\right),
\end{eqnarray}
which is obtained when all the information transmitters steer their signals such that their transferred energy is maximized on the cross link $\tilde{\bf H}_{12}^{(j)}$.
The optimization problem (P1) is obviously non-convex due to the coupled variables in the objective function $J$. That is, because of
the interference at each ID receiver from other information transmitters, ${\bf Q}_k$ are coupled in the objective function. Accordingly, here we develop a sub-optimal iterative algorithm for (P1).

Before we proceed with (P1), let us consider a simplified optimization problem by removing the interferences from other information transmitters by assuming that the cross-channel gain among the information transceivers is very small as
\begin{eqnarray}\label{REregion_10}\!\!(\!P1-UP\!)\! \underset{\substack{{\bf Q}_{k},k=K_1+1,...,K\\P_i,i=1,...,K_1}}{\text{ maximize}}& J^{UP}\triangleq\sum_{i=K_1+1}^{K}\log \det({\bf I}_{M} + {\bf H}_{ii}^H\bar{\bf R}_{-i}^{-1}{\bf
H}_{ii}{\bf Q}_i )\\\label{REregion_11} \!\!{\text{subject to}}\!&\!\!\sum_{\!j\!=\!K_1+1}^{\!K} tr ( \tilde{\bf H}_{12}^{(j)} {\bf
Q}_j (\tilde{\bf H}_{12}^{(j)})^H) \geq \max(\bar E \!-\!
E_{11},0)\!\\\label{REregion_12} &tr({\bf Q}_i) \leq P,~{\bf
Q}_i \succeq{\bf 0}, i=K_1,...,K\\\label{REregion_12_1}
& P_j\leq P, j = 1,...,K_1,
\end{eqnarray}
where $\bar{\bf R}_{-i} = {\bf I}_{M}+ \sum_{ j=1}^{K_1}P_j{\bf \Omega}_{ij}$. Note that, because the interferences from all other information transmitters are removed, (P1-UP) can be an upper-bound of the original R-E region. Even though it is not tight, it gives an insight on how to develop the iterative algorithm for the original problem.

By letting ${\bf P}=[P_1,...,P_{K_1}]^T$, using that $\frac{d}{dx}\log \det({\bf A}(x)) = tr\left({\bf A}(x)^{-1}\frac{d {\bf A}(x)}{dx}\right)$ \cite{Magnus} and Sylvester's determinant theorem \cite{DHarville}, $ \nabla_{{\bf P}} J^{UP}({\bf P}, {\bf Q}_{K_1+1},...,{\bf Q}_K) \in \mathbb{R}^{K_1 \times 1}$ is given as
\begin{eqnarray}\label{REregion_13}
\nabla_{{\bf P}} J^{UP}({\bf P}, {\bf Q}_{K_1+1},...,{\bf Q}_K) = \left[\begin{array}{c} tr\left(\sum_{i=K_1+1}^{K}\left(( {\bf H}_{ii}{\bf Q}_i{\bf H}_{ii}^H +\bar{\bf R}_{-i})^{-1} - ( \bar{\bf R}_{-i})^{-1}\right){\bf \Omega}_{i1} \right) \\\vdots\\tr\left(\sum_{i=K_1+1}^{K}\left(( {\bf H}_{ii}{\bf Q}_i{\bf H}_{ii}^H +\bar{\bf R}_{-i})^{-1} - ( \bar{\bf R}_{-i})^{-1}\right){\bf \Omega}_{iK_1} \right) \end{array}\right].
\end{eqnarray}
Because, from (\ref{REregion_13}), the objective function in (P1-UP) is monotonically decreasing with respect to ${\bf P}$, we iteratively optimize their values using the steepest descent method as:

\vspace*{5pt}Algorithm 1. {\it{\underline{Identification of the
achievable R-E region for P1-UP:}}}
\begin{enumerate}
\item Initialize $n=0$, ${\bf P}^{[0]}=[P,...,P]^T$,
\begin{eqnarray}\label{eqnRevise_Algo_2_1}
E_{11}^{[0]} = \sum_{\!j\!=\!1}^{\!K_1} \omega_j ({\bf P}^{[0]})_j,\quad \bar{\bf R}_{-i}^{[0]} = {\bf I}_{M}+ \sum_{ j=1}^{K_1}({\bf P}^{[0]})_j{\bf \Omega}_{ij}.
\end{eqnarray}
\item For $n=0:N_{max}$
\begin{enumerate}
\item Solve the optimization problem (P1-UP) for ${\bf Q}_{K_1+1}^{[n]},...,{\bf Q}_K^{[n]}$ as
a function of
$E_{11}^{[n]}$ and $\bar{\bf R}_{-i}^{[n]}$.
\item If $\sum_{\!j\!=\!K_1+1}^{\!K} tr ( \tilde{\bf H}_{12}^{(j)} {\bf
Q}_j^{[n]} (\tilde{\bf H}_{12}^{(j)})^H) +
E_{11}^{[n]}
>\bar E$
\begin{eqnarray}\label{eqnRevise_Algo_2_3} {\bf P}^{[n+1]} =max\left({\bf P}^{[n]} +\Delta\cdot \nabla_{{\bf P}} J^{UP}({\bf P}^{[n]}, {\bf Q}_{K_1+1}^{[n]},...,{\bf Q}_K^{[n]}), {\bf 0}\right),
\end{eqnarray}
where the step size $\Delta$ is given by a fixed value on $[0, \Delta_{\max}]$ with
\begin{eqnarray}\label{eqnRevise_Algo_2_4}\Delta_{\max} = \frac{ \bar E - \sum_{\!j\!=\!K_1+1}^{\!K} tr ( \tilde{\bf H}_{12}^{(j)} {\bf
Q}_j (\tilde{\bf H}_{12}^{(j)})^H) - {\boldsymbol \omega}^T {\bf P}^{[n]}}{{\boldsymbol \omega}^T \nabla_{{\bf P}} J^{UP}({\bf P}^{[n]}, {\bf Q}_{K_1+1}^{[n]},...,{\bf Q}_K^{[n]}) },
\end{eqnarray}
where ${\boldsymbol \omega} = [\omega_1,...,\omega_{K_1}]^T$. Then, update
$E_{11}^{[n+1]}$ and ${\bf R}_{-i}^{[n+1]}$ with ${\bf P}^{[n+1]}$
similarly to (\ref{eqnRevise_Algo_2_1}).
\end{enumerate}
\item Finally, the boundary point of the achievable R-E region is given as
 \begin{eqnarray}\label{eqnRevise_Algo_2_5}
R &=&\sum_{i=K_1+1}^{K}\log \det({\bf I}_{M} + {\bf H}_{ii}^H(\bar{\bf R}_{-i}^{[N_{max}+1]})^{-1}{\bf
H}_{ii}{\bf Q}_i^{[N_{max}+1]} ),\nonumber\\\nonumber
E& =& E_{11}^{[N_{max}+1]} + \sum_{\!j\!=\!K_1+1}^{\!K} tr ( \tilde{\bf H}_{12}^{(j)} {\bf
Q}_j^{[N_{max}+1]} (\tilde{\bf H}_{12}^{(j)})^H).
\end{eqnarray}
\end{enumerate}
 \vspace*{5pt}
In (\ref{eqnRevise_Algo_2_4}), if the total transferred energy is larger than the required harvested energy $\bar
E$, the transmitters transferring the energy reduce their transmit power to lower the interference to the ID receivers. Furthermore,
the maximum allowable step size in (\ref{eqnRevise_Algo_2_4}) is computed from (\ref{eqnRevise_Algo_2_3}) and the fact that $E_{11}^{[n+1]} = \bar E -\sum_{\!j\!=\!K_1+1}^{\!K} tr ( \tilde{\bf H}_{12}^{(j)} {\bf Q}_j^{[n]} (\tilde{\bf H}_{12}^{(j)})^H) $, which leads
 \begin{eqnarray}\label{REregion_14}
E_{11}^{[n+1]} = {\boldsymbol \omega}^T {\bf P}^{[n+1]} = {\boldsymbol \omega}^T {\bf P}^{[n]} + \Delta_{\max}\cdot {\boldsymbol \omega}^T\nabla_{{\bf P}} J^{UP}({\bf P}^{[n]}, {\bf Q}_{K_1+1}^{[n]},...,{\bf Q}_K^{[n]}).
\end{eqnarray}
Note that, if the
energy harvested by the EH receivers from the information transmitters
($\sum_{\!j\!=\!K_1+1}^{\!K} tr ( \tilde{\bf H}_{12}^{(j)} {\bf
Q}_j^{[n]} (\tilde{\bf H}_{12}^{(j)})^H)$) is larger
than $\bar E$, the energy transmitters do not transmit any signal.
That is, $rank({\bf Q}_k) =0$.

To complete Algorithm 1, we now show how to solve the optimization
problem (P1-UP) for ${\bf Q}_k^{[n]}$, $k=K_1+1,...,K$ in Step 2 of Algorithm 1.
For given $E_{11}$ and ${\bf Q}_i$, $i=1,...,K_1$, $ J^{UP}({\bf P}^{[n]}, {\bf Q}_{K_1+1}^{[n]},...,{\bf Q}_K^{[n]})$ can be derived as
 \begin{eqnarray}\label{REregion_15}\nonumber
J^{UP}&=&\sum_{i=K_1+1}^{K}\log \det({\bf I}_{M} + {\bf H}_{ii}^H(\bar{\bf R}_{-i}^{[n]})^{-1}{\bf
H}_{ii}{\bf Q}_i )\\&=&\sum_{i=K_1+1}^{K}\log \det({\bf I}_{M} +(\bar{\bf R}_{-i}^{[n]})^{-1/2}{\bf
H}_{ii}{\bf Q}_i {\bf H}_{ii}^H(\bar{\bf R}_{-i}^{[n]})^{-1/2}).
\end{eqnarray}
By letting $\bar{\bf H}_{ii} = (\bar{\bf R}_{-i}^{[n]})^{-1/2}{\bf H}_{ii}$, the Lagrangian function of (P1-UP) can then be written as
\begin{eqnarray}\label{REregion_16}\nonumber
\!&L({\bf Q}_{i}, \lambda, \mu_i, i=K_{1}+1,...,K) = \sum_{i=K_1+1}^{K}\log \det({\bf I}_{M} +\bar{\bf
H}_{ii}{\bf Q}_i \bar{\bf H}_{ii}^H)\quad\quad\quad\quad\quad\quad\quad&
\\&  +\lambda( \sum_{\!i\!=\!K_1+1}^{\!K} tr ( \tilde{\bf H}_{12}^{(i)} {\bf
Q}_i (\tilde{\bf H}_{12}^{(i)})^H) -
(\bar E \!-\! E_{11})  ) - \sum_{i=K_1+1}^{K} \mu_{i} (tr({\bf Q}_{i}) - P),&
\end{eqnarray}
and the corresponding dual function is then given by \cite{SBoyd,
Zhang1}
\begin{eqnarray}\label{REregion_17}
g(\lambda, \mu_i, i= K_1+1,...,K) = \underset{{\bf Q}_i \succeq{\bf 0}}{\max} L({\bf Q}_{i},\lambda, \mu_i, i= K_1+1,...,K ).
\end{eqnarray}
Here the optimal solution $\mu_i'$, $\lambda'$, and ${\bf Q}_i$ can
be found through the iteration of the following steps \cite{SBoyd}

\vspace*{5pt}Algorithm 2. {\it{\underline{Optimization algorithm for P1-UP given ${\bf P}$:}}}
\begin{enumerate}
\item The maximization of $L({\bf Q}_{i}, \lambda, \mu_i, i = K_1+1,...,K)$ over ${\bf Q}_i$ for given $\lambda, \mu_i$.
\item The minimization of $g(\lambda, \mu_i, i= K_1+1,...,K) $ over $\lambda, \mu_i$ for given ${\bf Q}_i$.
\end{enumerate}
\vspace*{5pt}
Note that, for given $\lambda, \mu_i$, $i=K_1,...,K$, the maximization of $L({\bf
Q}_2, \lambda, \mu)$ in Step 1) can be derived as
\begin{eqnarray}\label{REregion_18}
\!\underset{{\bf Q}_i \succeq{\bf 0}}{\max} L({\bf Q}_{i}, \lambda, \mu_i, i = K_1+1,...,K) &=& \sum_{i=K_1+1}^{K}\log \det({\bf I}_{M} +\bar{\bf
H}_{ii}{\bf Q}_i \bar{\bf H}_{ii}^H) +\lambda( \sum_{\!i\!=\!K_1+1}^{\!K} tr ( \tilde{\bf H}_{12}^{(i)} {\bf
Q}_i (\tilde{\bf H}_{12}^{(i)})^H))\nonumber\\& & - \sum_{i=K_1+1}^{K} \mu_{i} (tr({\bf Q}_{i})),\nonumber\\
&=& \sum_{i=K_1+1}^{K}\left(\log \det({\bf I}_{M} +\bar{\bf
H}_{ii}{\bf Q}_i \bar{\bf H}_{ii}^H) -tr({\bf A}_i{\bf Q}_i)   \right),
\end{eqnarray}
where ${\bf A}_i= {\mu}_i{\bf I}_M -\lambda(\tilde{\bf H}_{12}^{(i)})^H\tilde{\bf H}_{12}^{(i)}$. Note that, due to the assumption that the interferences from other information transmitters is nulled out in (P1-UP), (\ref{REregion_18}) can be easily decoupled into the point-to-point
MIMO capacity optimization with a single weighted power constraint
and the solution for ${\bf Q}_i$ is then given by \cite{SBoyd, Zhang1}
\begin{eqnarray}\label{REregion_19}
{\bf Q}_i &=& {\bf A}_i^{-1/2}\bar{\bf V}'_{ii}\bar{\boldsymbol
\Lambda}_i'\bar{\bf V}_{ii}'^H{\bf A}_i^{-1/2},
\end{eqnarray}
where $\bar{\bf V}'_{ii}$ is obtained from the SVD of the matrix
$\bar{\bf H}_{ii}{\bf A}_i^{-1/2}$, i.e., $\bar{\bf H}_{ii}{\bf
A}_i^{-1/2} =\bar{\bf U}'_{ii}\bar{\boldsymbol \Sigma}'_{ii}
\bar{\bf V}_{ii}'^H$. Here, $\bar{\boldsymbol \Sigma}'_{ii} =
diag\{ \sigma_{1}(\bar{\bf H}_{ii}{\bf
A}_i^{-1/2}),...,\sigma_{M}(\bar{\bf H}_{ii}{\bf
A}_i^{-1/2}) \}$ and
$\bar{\boldsymbol \Lambda}_i' = diag\{ \bar p_{i,1},...,\bar p_{i,M}
\}$ with $\bar p_{i,j} = (1-1/\sigma_{j}^2(\bar{\bf H}_{ii}{\bf
A}_i^{-1/2}))^+ $,
$j=1,...,M$.

In step 2), the parameters $\mu_i$ and $\lambda$ minimizing
$g(\lambda, \mu_i, i= K_1+1,...,K)$ can be solved by the subgradient-based
method \cite{Zhang1, XZhao}, where the the subgradient of
$g(\lambda, \mu_i, i= K_1+1,...,K)$ is given by
\begin{eqnarray}\label{REregion_20}
\nabla g(\lambda, \mu_i, i= K_1+1,...,K) =\biggl( \sum_{\!i\!=\!K_1+1}^{\!K} tr ( \tilde{\bf H}_{12}^{(i)} {\bf
Q}_i (\tilde{\bf H}_{12}^{(i)})^H) -
(\bar E \!-\! E_{11}) ,  \ P - tr({\bf Q}_{K_1+1}), ...,P - tr({\bf Q}_{K})\biggr).
\end{eqnarray}

\begin{remark}\label{remark2nd_revised}
Due to the fact that each element in (\ref{REregion_13}) always has a negative value and the step size in (\ref{eqnRevise_Algo_2_4}) has non-negative values, we can find that the power of the energy transmitters converges monotonically. In addition, because (\ref{REregion_10}) is concave over ${\bf Q}_i$ and monotonically decreasing with respect to ${\bf P}$, we can easily find that every superlevel set $\{{\bf Q}_i, {\bf P}| J^{UP}({\bf Q}_i,i=K_1+1,...,K, {\bf P}) \geq \alpha\}$ for $\alpha \in \mathbb{R}$ is convex. That is, from the definition of the quasi-concavity (Section 3.4.1 of \cite{SBoyd}), (\ref{REregion_10}) is quasi-concave. Furthermore, the
constraints, (\ref{REregion_11}), (\ref{REregion_12}) and
(\ref{REregion_12_1}), are the convex set of ${\bf Q}_i$ and
$P_j$. Therefore, the converged solution of Algorithm 1 is
globally optimal \cite{Bazaraa_book}. Note that, given ${\bf P}$ (that monotonically converges), $J^{UP}$ in (\ref{REregion_15}) is concave and satisfies the slater's condition \cite{SBoyd}, it has a zero duality gap.
\end{remark}

Now let us consider the original problem (P1). Because (P1) is non-convex, the optimal solution cannot be easily computed, but motivated by Algorithm 1, we can also develop a sub-optimal iterative algorithm. Note that $ \nabla_{{\bf P}} J({\bf P}, {\bf Q}_{K_1+1},...,{\bf Q}_K) \in \mathbb{R}^{K_1 \times 1}$ has the same form as $ \nabla_{{\bf P}} J^{UP}({\bf P}, {\bf Q}_{K_1+1},...,{\bf Q}_K)$ except that $\bar{\bf R}_{-i}$ is replaced by ${\bf R}_{-i}$. Therefore, the objective function in (P1) is also monotonically decreasing with respect to ${\bf P}$, regardless of ${\bf R}_{-i}$, and ${\bf P}$ for (P1) can also be optimized using the steepest descent method as Algorithm 1, in which $\bar{\bf R}_{-i}^{[n]}$ is replaced by
\begin{eqnarray}\label{REregion_21}
{\bf R}_{-i}^{[n]} = {\bf I}_{M}+ \sum_{ j=1}^{K_1}({\bf P}^{[n]})_j{\bf \Omega}_{ij} + \sum_{\substack{ j=K_1+1 \\
j\neq i}}^{K}{\bf H}_{ij}{\bf Q}_j^{[n]}{\bf H}_{ij}^H.
\end{eqnarray}
In addition, we formulate the Lagrangian function similarly to (\ref{REregion_16})
\begin{eqnarray}\label{REregion_21_1}\nonumber
\!&L'({\bf Q}_{i}, \lambda, \mu_i, i=K_{1}+1,...,K) = \sum_{i=K_1+1}^{K}\log \det({\bf I}_{M} + {\bf H}_{ii}^H({\bf R}_{-i}^{[n]})^{-1}{\bf
H}_{ii}{\bf Q}_i )\quad\quad\quad\quad\quad\quad\quad&
\\\nonumber&  +\lambda( \sum_{\!i\!=\!K_1+1}^{\!K} tr ( \tilde{\bf H}_{12}^{(i)} {\bf
Q}_i (\tilde{\bf H}_{12}^{(i)})^H) -
(\bar E \!-\! E_{11})  ) - \sum_{i=K_1+1}^{K} \mu_{i} (tr({\bf Q}_{i}) - P),&
\end{eqnarray}
and the corresponding dual function is then given by
\begin{eqnarray}\label{REregion_17_1}
g'(\lambda, \mu_i, i= K_1+1,...,K) = \underset{{\bf Q}_i \succeq{\bf 0}}{\max} L'({\bf Q}_{i},\lambda, \mu_i, i= K_1+1,...,K ).
\end{eqnarray}
Note that the only difference between $L'({\bf Q}_{i}, \lambda, \mu_i, i=K_{1}+1,...,K) $ and $L({\bf Q}_{i}, \lambda, \mu_i, i=K_{1}+1,...,K) $ is the interference at the $k$th ID receiver due to the other information transmitters, which hinders finding the optimal solution for the maximization of $L'({\bf Q}_{i}, \lambda, \mu_i, i=K_{1}+1,...,K) $ over ${\bf Q}_i$ for given $\lambda$ and $\mu_i$. That is, for given $\lambda, \mu_i$, $i=K_1,...,K$, the maximization of $L'({\bf
Q}_2, \lambda, \mu)$, analogous with Step 1) of Algorithm 2, can be derived as
\begin{eqnarray}\label{REregion_18_1}
\!\!\underset{{\bf Q}_i \succeq{\bf 0}}{\max} L'({\bf Q}_{i}, \lambda, \mu_i, i = K_1+1,...,K) &\!\!=\!\!& \sum_{i=K_1+1}^{K}\log \det({\bf I}_{M}+ {\bf H}_{ii}^H({\bf R}_{-i}^{[n]})^{-1}{\bf
H}_{ii}{\bf Q}_i ) \nonumber\\\!\!&\!\!\!\! & +\lambda( \sum_{\!i\!=\!K_1+1}^{\!K} tr ( \tilde{\bf H}_{12}^{(i)} {\bf
Q}_i (\tilde{\bf H}_{12}^{(i)})^H)) - \sum_{i=K_1+1}^{K} \mu_{i} (tr({\bf Q}_{i})),\nonumber\\
\!\!&\!\!=\!\!& \!\sum_{i=K_1+1}^{K}\left(\log \det({\bf I}_{M}+ {\bf H}_{ii}^H({\bf R}_{-i}^{[n]})^{-1}{\bf
H}_{ii}{\bf Q}_i ) -tr({\bf A}_i{\bf Q}_i)   \right)\!.\!
\end{eqnarray}
Letting $ \hat{\bf Q}_i = {\bf A}_i^{1/2}{\bf Q}_i{\bf A}_i^{1/2}$ similarly to \cite{ZhangLiangCui1},
(\ref{REregion_18_1}) can be rewritten as
\begin{eqnarray}\label{REregion_19_1}
\!\underset{{\bf Q}_i \!\succeq\!{\bf 0}}{\max} L'({\bf Q}_{i}, \lambda, \mu_i, i\! =\! K_1\!+\!1,...,K) \!=\!\underset{\hat{\bf Q}_i\! \succeq\!{\bf 0}}{\max} \!\sum_{i=K_1\!+\!1}^{K}\!\left(\!\log\! \det({\bf I}_{M} + {\bf A}_i^{-\!1/2}{\bf H}_{ii}^H({\bf R}_{-i}^{[n]})^{-\!1}{\bf
H}_{ii}{\bf A}_i^{-\!1/2}\hat{\bf Q}_i ) -tr(\hat{\bf Q}_i) \!  \right),
\end{eqnarray}
which becomes a conventional rate maximization problem in MIMO IFC subject to individual power constraints \cite{NegroShenoy, Scutari}. Accordingly, for (P1), ${\bf Q}_i$, $\lambda$, and $\mu_i$ can be optimized based on the iterative waterfilling algorithm \cite{Scutari} with effective channel ${\bf
H}_{ii}{\bf A}_i^{-1/2}$. 
Note that, even though the iterative waterfilling cannot achieve the global optimum for the non-convex (\ref{REregion_19_1}), its convergence to the Nash equilibrium (local optimum) is guaranteed for the nonsingular channel matrices ${\bf H}_{kk},k=1,...,K$ \cite{Scutari, YeBlum}. We also note that, recently, the convergence to the global optimum can be achieved by the global optimization methods such as the {\it{difference of two convex functions (D.C.) programming}} \cite{AlShatri, XuLeNgoc}. However, they would require a {\it{centralized}} optimization process (i.e., the explicit coordination among the nodes and the complete knowledge of all channel responses which are not available in our system model).
Furthermore, the objective function in (P1) is also monotonically decreasing with respect to ${\bf P}$, regardless of ${\bf R}_{-i}$, and accordingly, we can also find that the power of energy transmitters converges monotonically. Therefore, the convergence of the proposed iterative algorithm for (P1) is guaranteed.

\begin{remark}\label{remark_largenumberofIDs}
In (P1), if the number of information transmitters becomes large, the interference from energy transmitters is smaller than that from information transmitters. That is, (\ref{REregion_4}) can be approximated as
\begin{eqnarray}\label{Rem_largenumberID}
{\bf R}_{-i} \approx {\bf I}_{M}+  \sum_{\substack{ j=K_1+1 \\
j\neq i}}^{K}{\bf H}_{ij}{\bf Q}_j{\bf H}_{ij}^H,
\end{eqnarray}
and the achievable rate in (\ref{REregion_1}) is independent of the interference from energy transmitters. That is, the energy transmitter signals can be designed by caring about their own links, not caring about
the interference link to the ID receivers. Accordingly, as the number of information transmitters increases, the optimal transmission strategy at energy transmitters becomes a rank-one MEB method with a power control.
\end{remark}

%
%

\subsection{$K_1$-EH Selection algorithm in K-user MIMO IFC}\label{ssec:userselection}
Motivating that the SLER value indicates how suitable the current channel is to either EH mode or ID mode, we propose $K_1$-EH selection method in K-user MIMO IFC. That is, higher SLER implies that the transmitter can transfer more energy to its associated EH receiver and/or incur less interference to the ID receiver. Note that the $k$th SLER in (\ref{GSVD1}) depends on $(K_1 -1 )$ EH receivers. Therefore, in our proposed selection, to choose $K_1$ transceiver pairs jointly, we evaluate sum of SLERs of $K_1$ transceiver pairs for $_{K}C_{K_1}$ possible candidates and choose one candidate having the maximum SLER sum. That is, by letting
\begin{eqnarray}\label{userselect1}
I_j = \{ (Tx_{j_1}, Rx_{j_1}),...,(Tx_{j_{K_1}}, Rx_{j_{K_1}})| j_{k} \in \{1,...,K\}, {\text{ for }} k=1,...,K_1\},
\end{eqnarray}
the set of $K_1$ energy harvesting EHs is then selected as:
\begin{eqnarray}\label{userselect2}
I_{\max} &=&\underset{I_j}{\arg}\max \sum_{i\in I_j} SLER_i.
\end{eqnarray}

\section{Simulation Results}
\label{sec:simulation}

Computer simulations have been performed to evaluate the R-E
tradeoff of various transmission strategies in the K-user MIMO
IFC. In the simulations, the normalized channel ${\bf H}_{ij}$ is
generated such as ${\bf H}_{ij}=10^{-3/2}
\frac{\sqrt{\alpha_{ij}M}}{\|\tilde{\bf H}_{ij}\|_F}\tilde{\bf
H}_{ij}$, where the elements of $\tilde{\bf H}_{ij}$ are
independent and identically distributed (i.i.d.) zero-mean complex
Gaussian random variables (RVs) with a unit variance. The term
$10^{-3/2}$ is due to the path loss with a power path loss
exponent $3$ and $10m$ distance between Tx $i$ and Rx $i$ ($-30dB
= 10\log_{10}10^{-3}$). The maximum transmit power is set as
$P=50mW$ and the noise power is $1 \mu W$, unless otherwise
stated.

 Figs. \ref{Fig:R_E_tradeoff_multipleEH} (a)-(c) show R-E tradeoff curves for three different rank-one beamforming -
the MEB, the MLB, and the SLER maximizing beamforming described in Section \ref{ssec:rankoneBF} with $M =
4$ when $K=\{2,3,4\}$ and $K-K_1 =1$. Here, the path loss scale factor is set as $\alpha_{ii} = 1$, and $\alpha_{ij} = 0.6$ for $i,j=1,2,...,K$ and
$i \neq j$. That is, while only the $K$th transmitter transfers information to its corresponding receivers, the remaining transmitters transfer the energy to the remaining EH receivers with rank-one beamforming. As expected, as the number of energy transmitters increases, total harvested energy increases. Interestingly, in the regions where the energy is less
than a certain threshold $[40,100]\mu W$, all the energy transmitters
do not transmit any signals to reduce the interference to the
ID receiver. That is, the energy transferred from the
information transmitter is sufficient to satisfy the energy constraint
at the EH receivers. Note that the threshold is linearly proportional to the number of EH receivers that can harvest energy from information transmitter's signal. The dashed lines in  Figs. \ref{Fig:R_E_tradeoff_multipleEH} (a)-(c) indicate the R-E curves of the time-sharing of
the full-power rank-one beamforming and the no transmission at the energy transmitters. Accordingly, the information
transmitter switches between the beamforming on $\tilde{\bf H}_{12}^{(K)}$ and the water-filling on ${\bf H}_{KK}$
in the corresponding time slots. For MLB, ``water-filling-like'' approach (\ref{REregion_19})
exhibits higher R-E performance than the time-sharing scheme.
However, for MEB, when the required energy is less than a certain value, the
time-sharing exhibits better performance than the approach
(\ref{REregion_19}). This observation is more apparent as the number of energy transmitters increases.
That is, because the energy transmitters with the MEB cause large
interference to the ID receiver, it is desirable that, for the low
required harvested energy, the energy transmitters turn off their
power in the time slots where the information transmitter is assigned
to exploit the water-filling method on ${\bf H}_{KK}$. Instead, in the remaining time slots,
the energy transmitters opt for a MEB with full power and the
information transmitter transfers its information to the ID receiver by
steering its beam on EH receiver's channel $\tilde{\bf H}_{12}^{(K)}$ to help the EH operation.
In contrast, with SLER maximizing beamforming, ``water-filling-like'' approach (\ref{REregion_19})
exhibits higher R-E performance than the time-sharing scheme. Furthermore, its R-E region covers most of those of both MEB and
MLB.

\begin{figure}
\centering 
 \subfigure[]
  {\includegraphics[height=4.6cm]{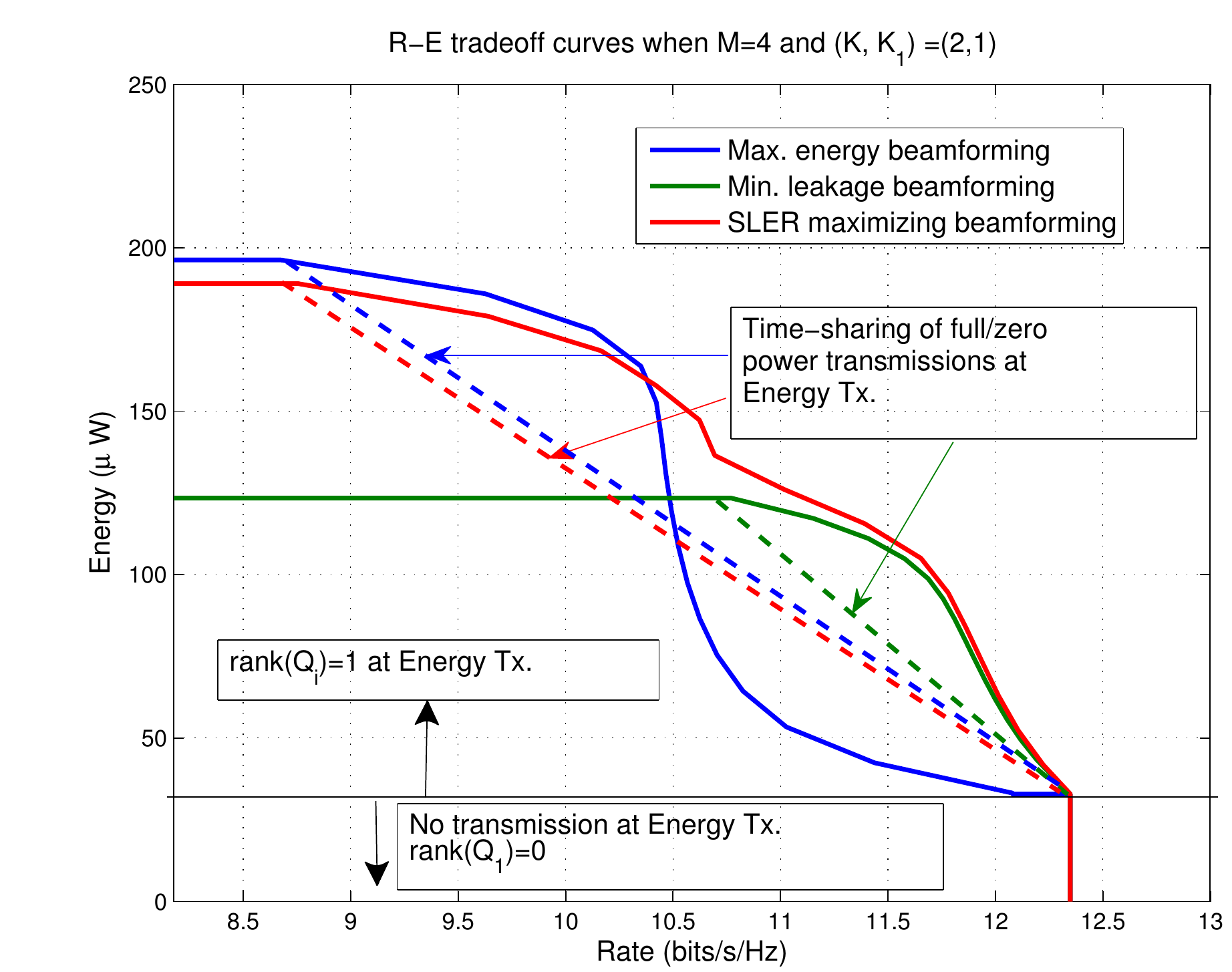}}
 \subfigure[]
  {\includegraphics[height=4.6cm]{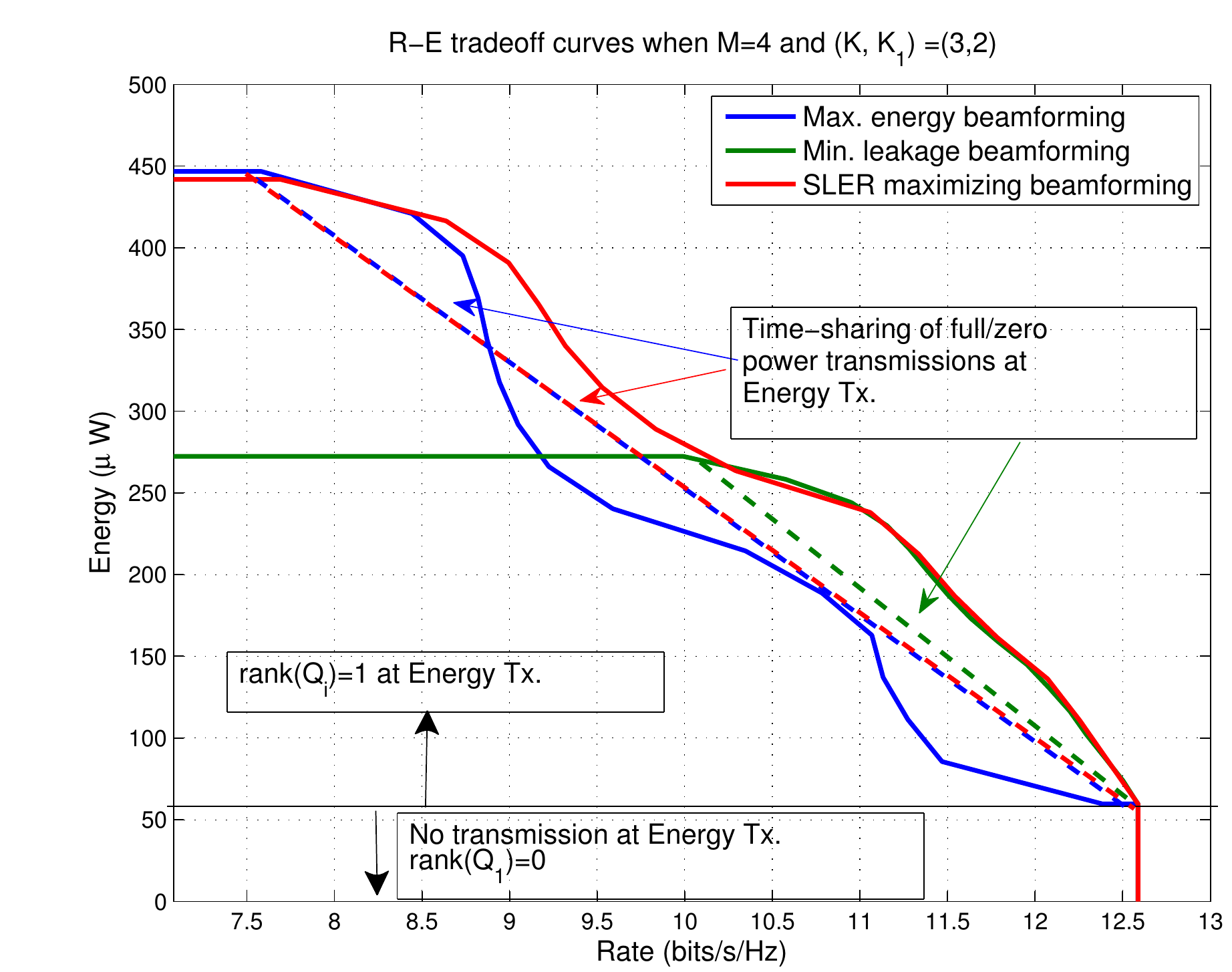}}
   \subfigure[]
  {\includegraphics[height=4.6cm]{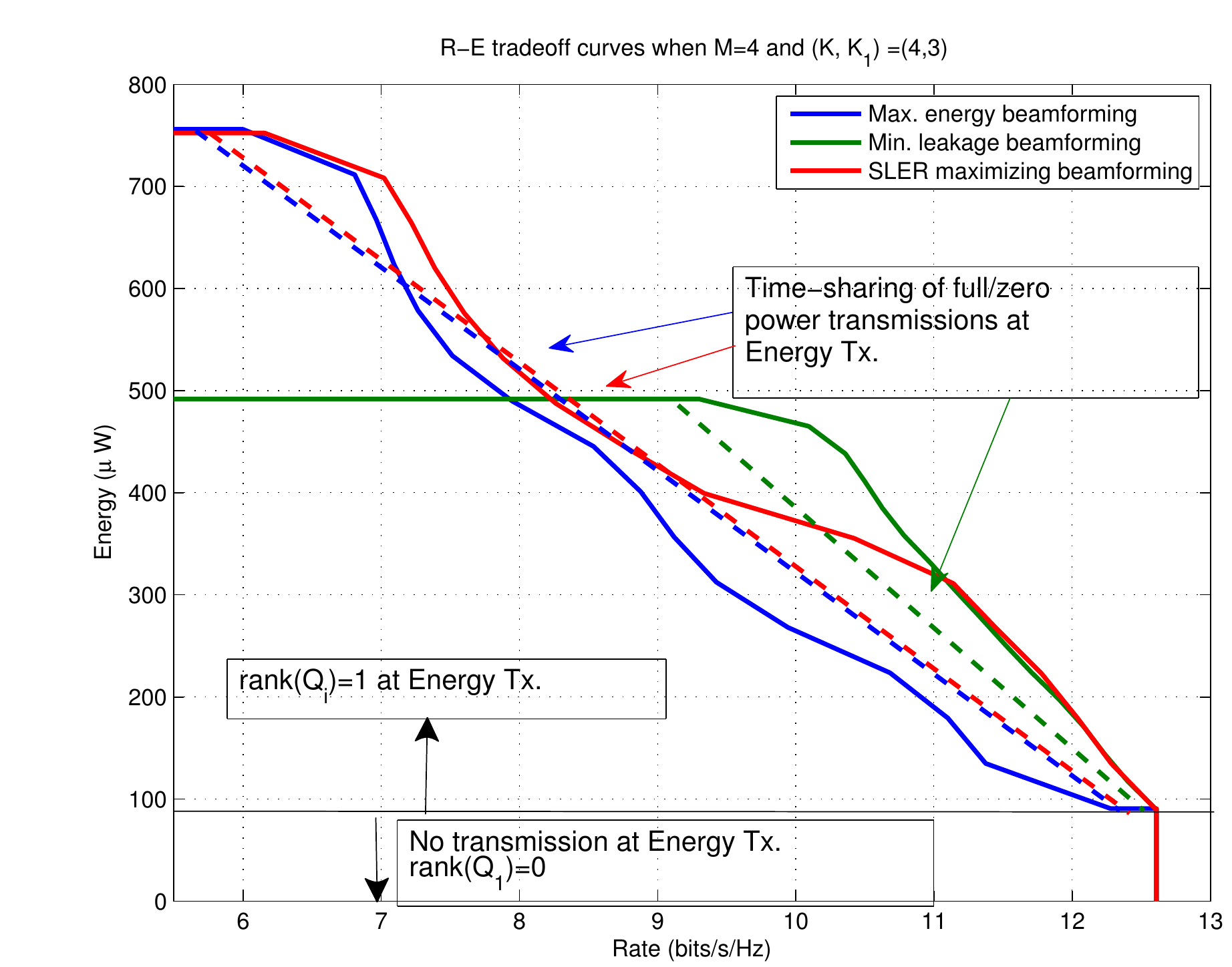}}
 \caption{R-E tradeoff curves for
MEB, MLB, and SLER maximizing beamforming when (a) $(K,K_1) =(2,1)$, (b) $(K,K_1) =(3,2)$, and (c) $(K,K_1) =(4,3)$. Here, $M=4$ and $\alpha_{ij}= 0.6$ for $i\neq
j$.} \label{Fig:R_E_tradeoff_multipleEH}
\end{figure}

In Figs. \ref{Fig:R_E_tradeoff_multipleID} (a) and (b), we have additionally
included the R-E tradeoff curves when $(K, K_1)= (3, 1)$ and $(4,1)$. Together with Fig. \ref{Fig:R_E_tradeoff_multipleEH} (a), we can find the trend of the R-E region when
the number of information transmitters increases, while only the first transmitter transfers energy. Note that, as the number of information transmitters increases, the maximum harvested energy is also increased. However, because the interferences are increasing proportional to the number of information transmitters, the maximum information rate is not drastically increasing, which implies that the system becomes interference-limited. Furthermore, because the interference due to the information transmitters is dominant, MEB at the energy transmitter becomes a more attractive strategy, resulting in a wider R-E region compared to that for the MLB. That is, compared to Fig. \ref{Fig:R_E_tradeoff_multipleEH}, the R-E region of MEB covers almost that of MLB (see Fig. \ref{Fig:R_E_tradeoff_multipleID} (b)), which is consistent with Remark \ref{remark_largenumberofIDs}.

To see the effect of interference on the information rate, we evaluate the R-E tradeoff curves in Fig. \ref{Fig:R_E_tradeoff_multipleEHmultipleID} when $(K, K_1)= (4, 2)$ (multiple EHs and multiple IDs) with different $\alpha_{ij} = \{0.6, 0.3\}$ for $i\neq j$. Because a smaller $\alpha_{ij}$ implies less interference at each receiver, we can find that R-E region for $\alpha_{ij} = 0.6$ exhibits a larger harvested energy but a lower information rate.
\begin{figure}
\centering 
 \subfigure[]
  {\includegraphics[height=4.6cm]{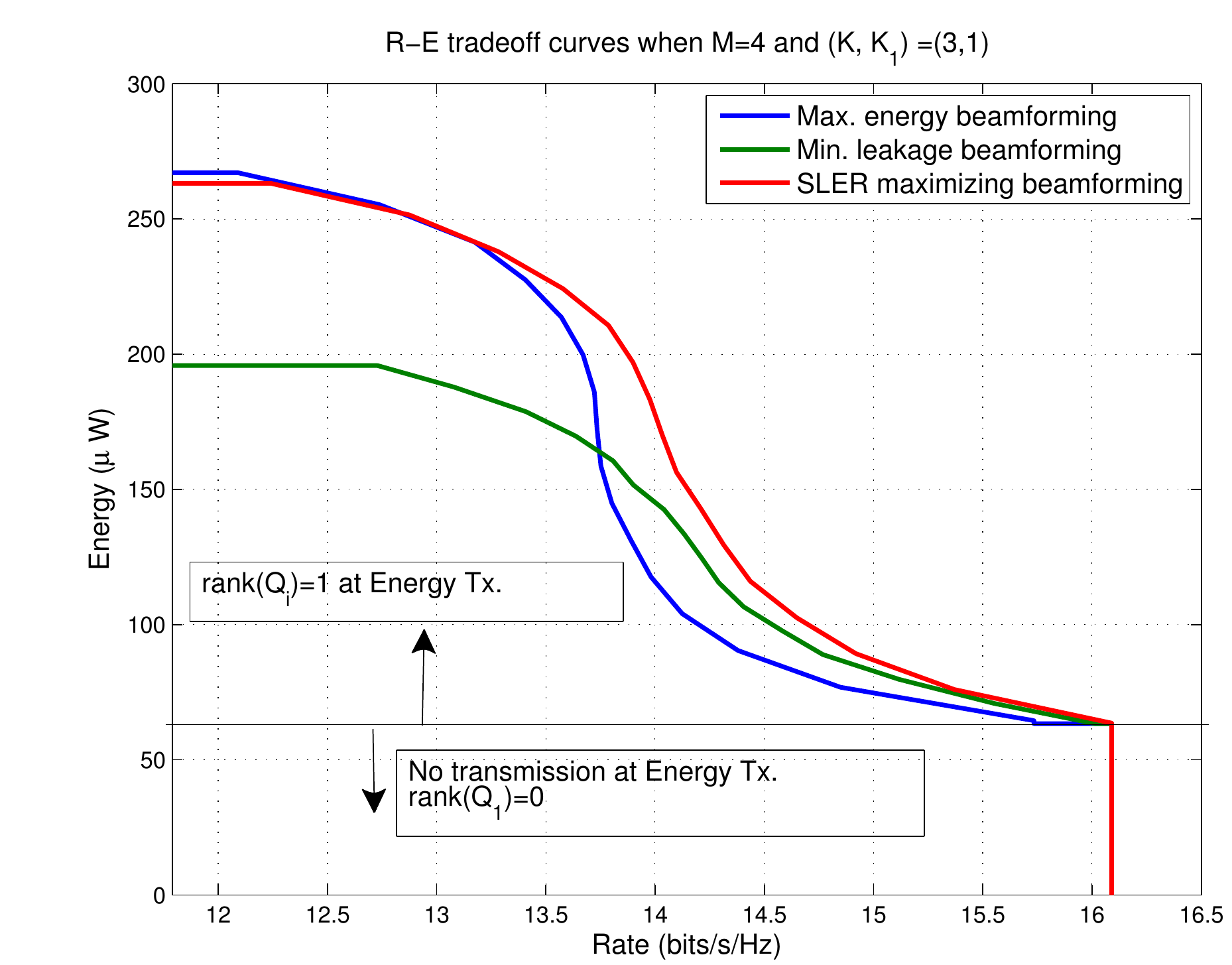}}
 \subfigure[]
  {\includegraphics[height=4.6cm]{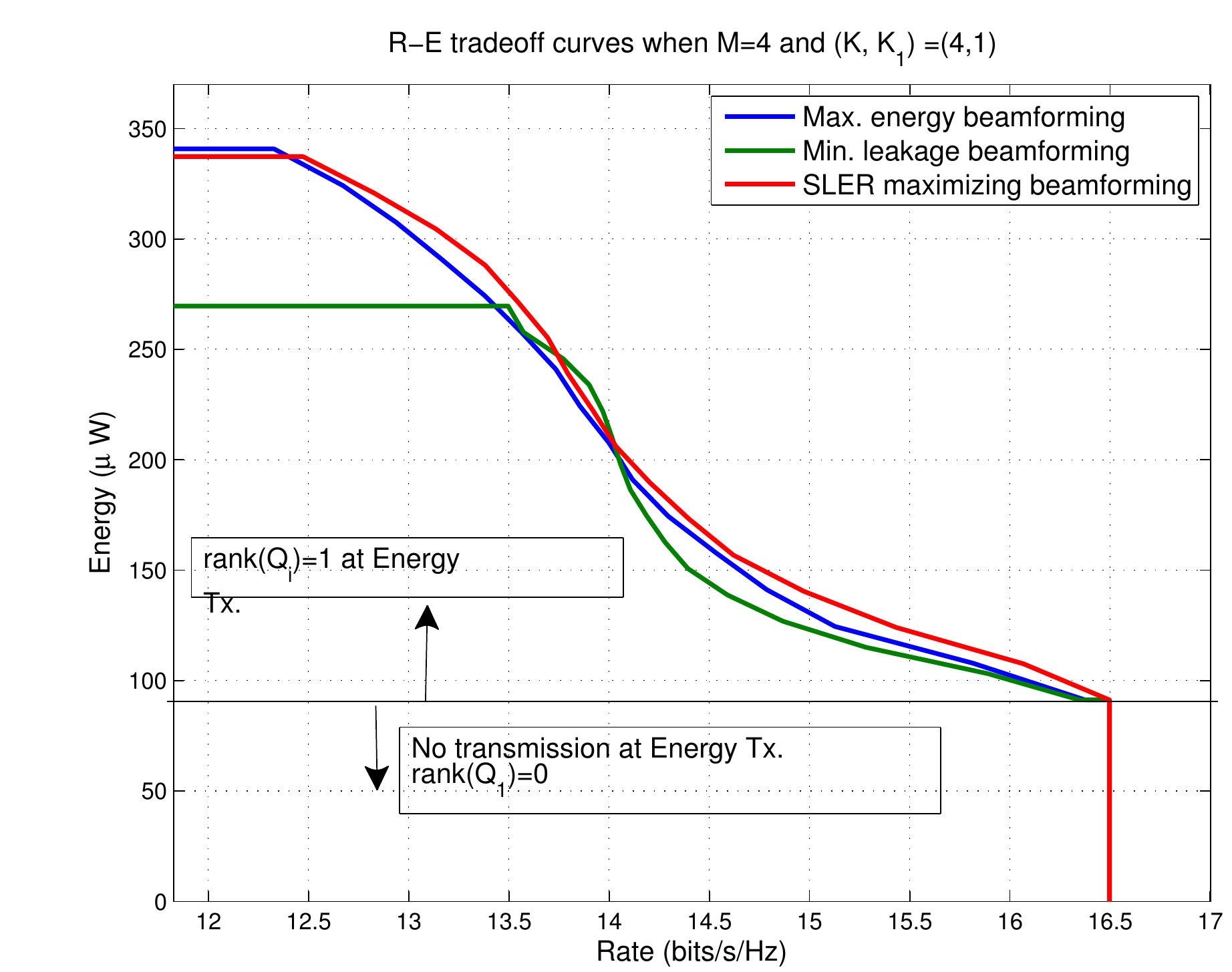}}
 \caption{R-E tradeoff curves for
MEB, MLB, and SLER maximizing beamforming when (a) $(K,K_1) =(3,1)$ and (b) $(K,K_1) =(4,1)$. Here, $M=4$ and $\alpha_{ij}= 0.6$ for $i\neq
j$.} \label{Fig:R_E_tradeoff_multipleID}
\end{figure}

\begin{figure}
\centering 
 \subfigure[]
  {\includegraphics[height=4.6cm]{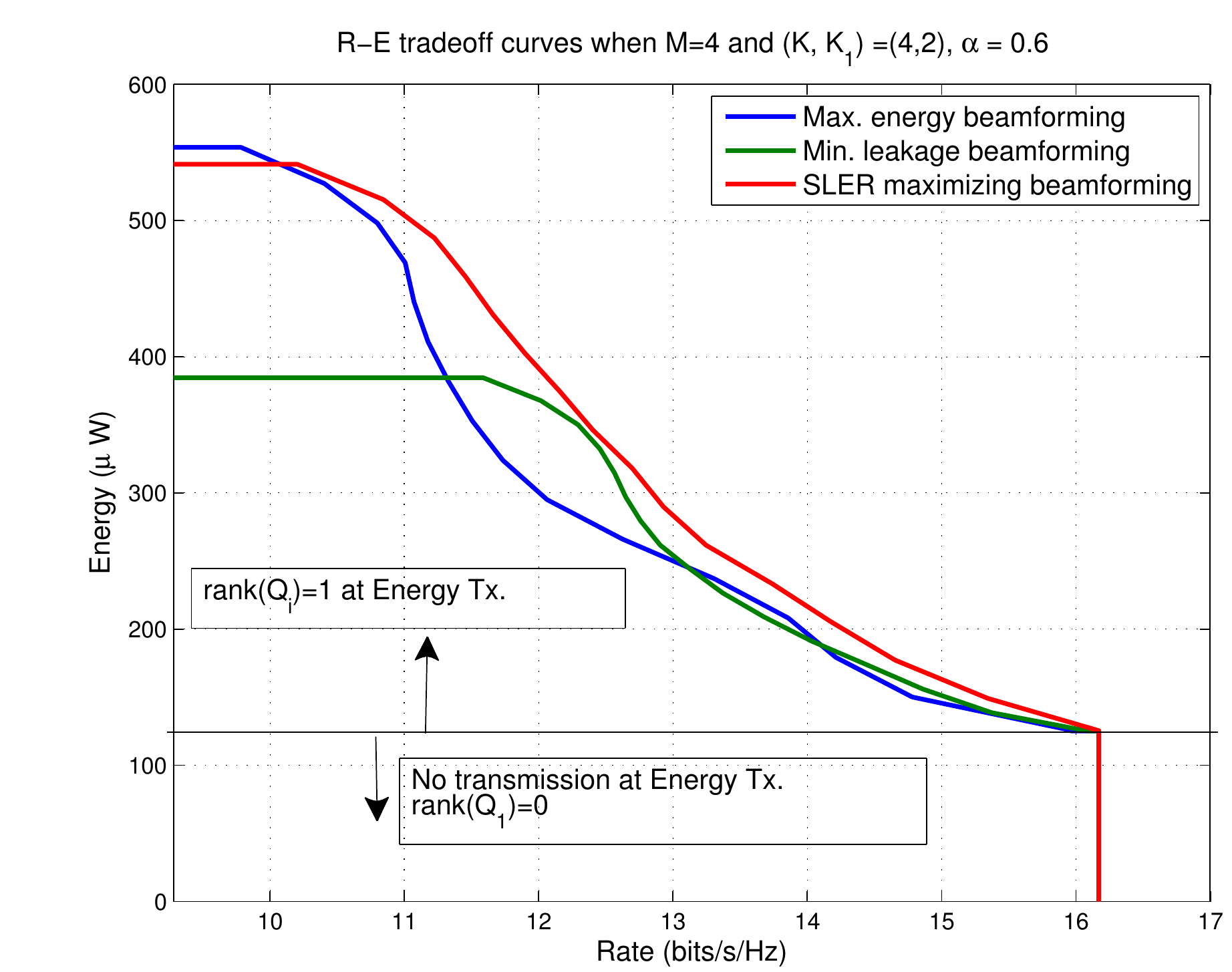}}
 \subfigure[]
  {\includegraphics[height=4.6cm]{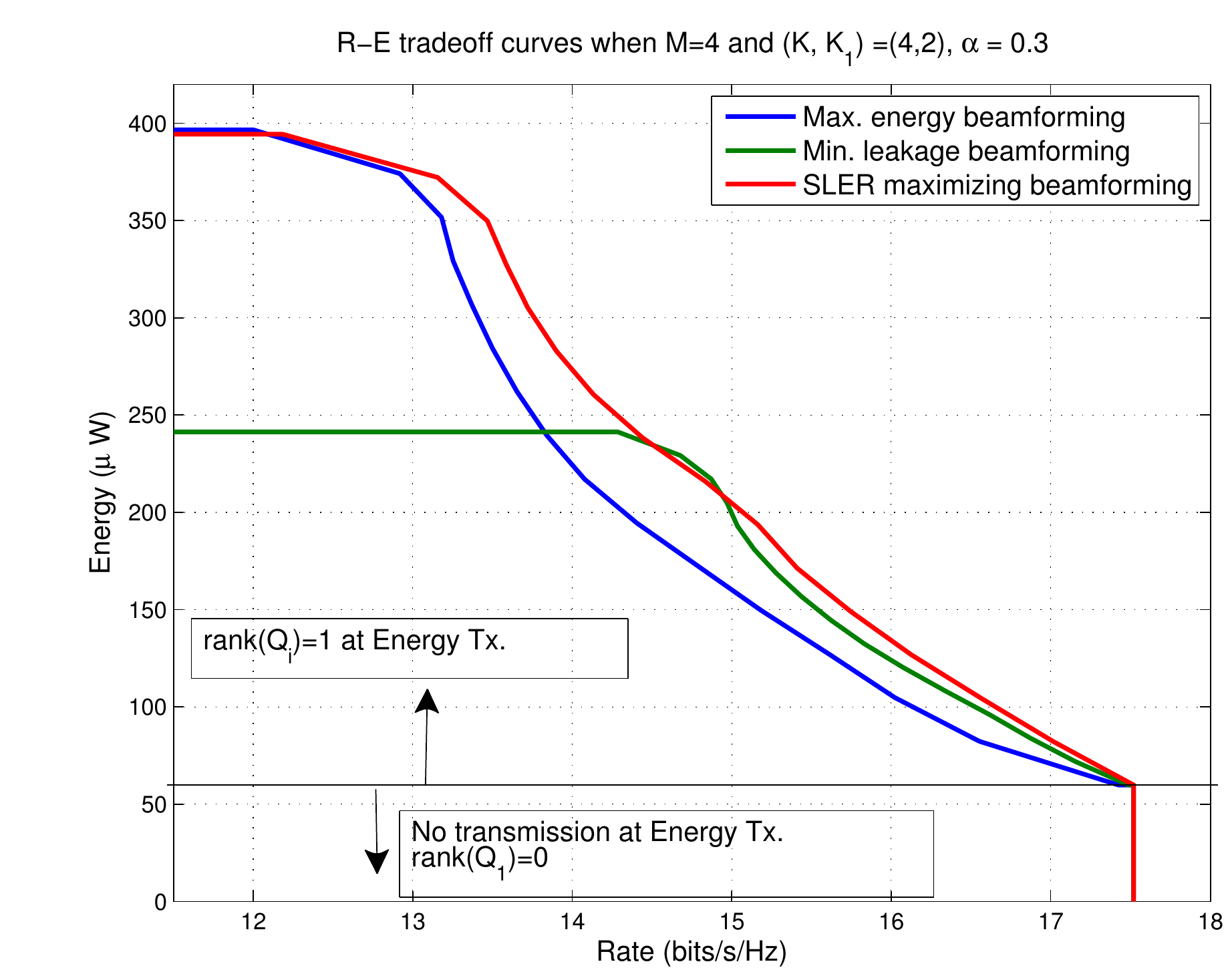}}
 \caption{R-E tradeoff curves for
MEB, MLB, and SLER maximizing beamforming when (a) $(K,K_1) =(4,2)$ and $\alpha_{ij}= 0.6$ for $i\neq
j$ and (b) $(K,K_1) =(4,2)$ and $\alpha_{ij}= 0.3$ for $i\neq
j$. Here, $M=4$.} \label{Fig:R_E_tradeoff_multipleEHmultipleID}
\end{figure}

Fig. \ref{R_E_tradeoff_GSVD_scheduling} shows the R-E tradeoff
curves for SLER maximizing beamforming with/without SLER-based
user selection described in Section \ref{ssec:userselection} when $(K,K_E)=(4,2)$ with $M=4$. Note that the
case with $\alpha_{ij} = 0.3 $ has weaker cross-link channel
(inducing less interference) than that with $\alpha_{ij} = 0.6 $.
The SLER-based user selection extends the achievable R-E region for
both $\alpha_{ij} \in \{0.3,0.6\}$, but the improvement for
$\alpha_{ij}=0.6$ is slightly more apparent. That is, the SLER-based
scheduling becomes more effective when strong interference exists
in the system. Note that the case with $\alpha_{ij} = 0.6 $ exhibits
a slightly lower achievable rate than that with $\alpha_{ij}=0.3$,
while achieving a larger harvested energy, which is a similar observation as that found in Fig. \ref{Fig:R_E_tradeoff_multipleEHmultipleID}. That is, a strong
interference degrades the information decoding performance but it
can be effectively utilized in the energy-harvesting.

\begin{figure}
\centering 
 \subfigure[]
  {\includegraphics[height=4.6cm]{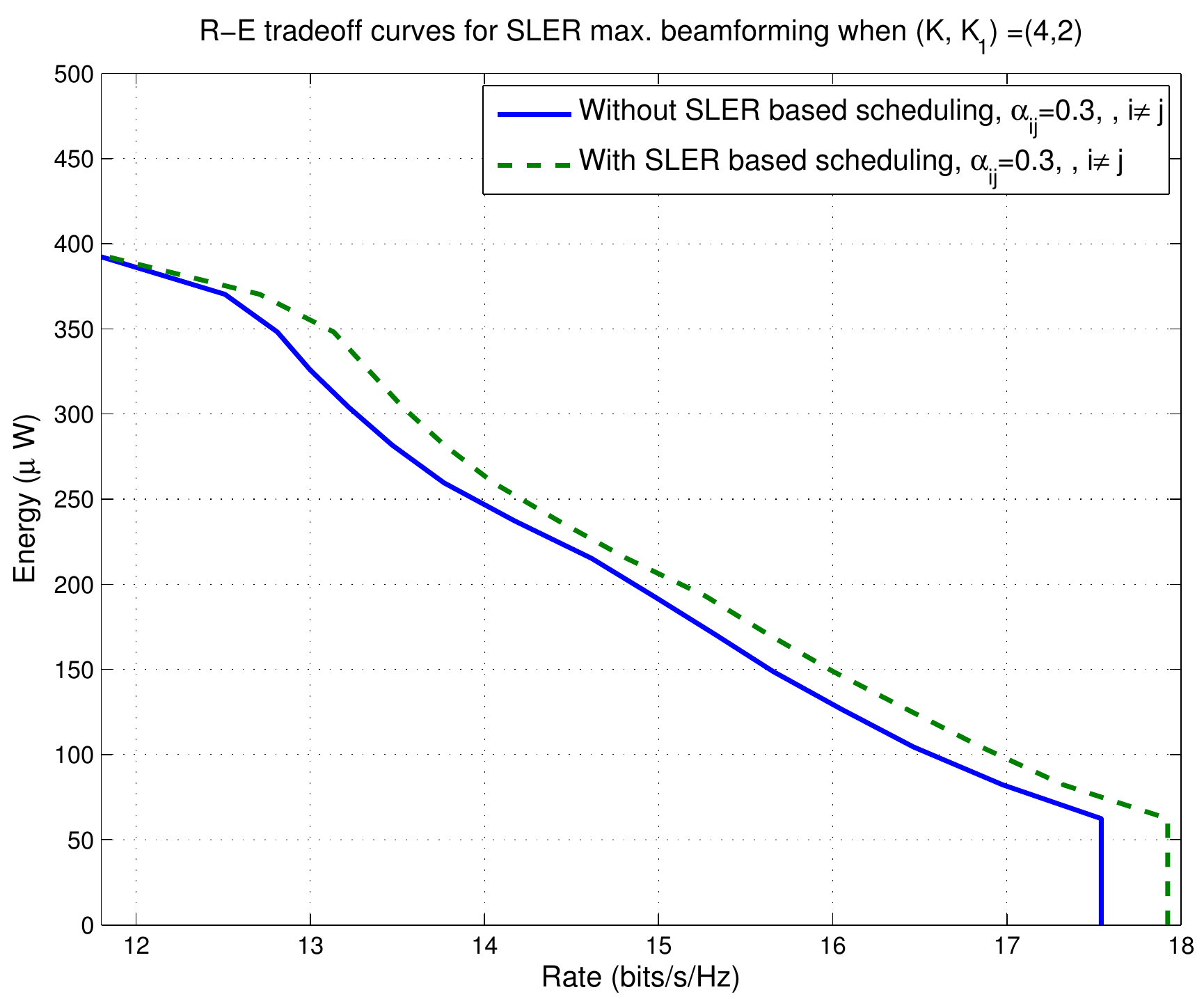}}
 \subfigure[]
  {\includegraphics[height=4.6cm]{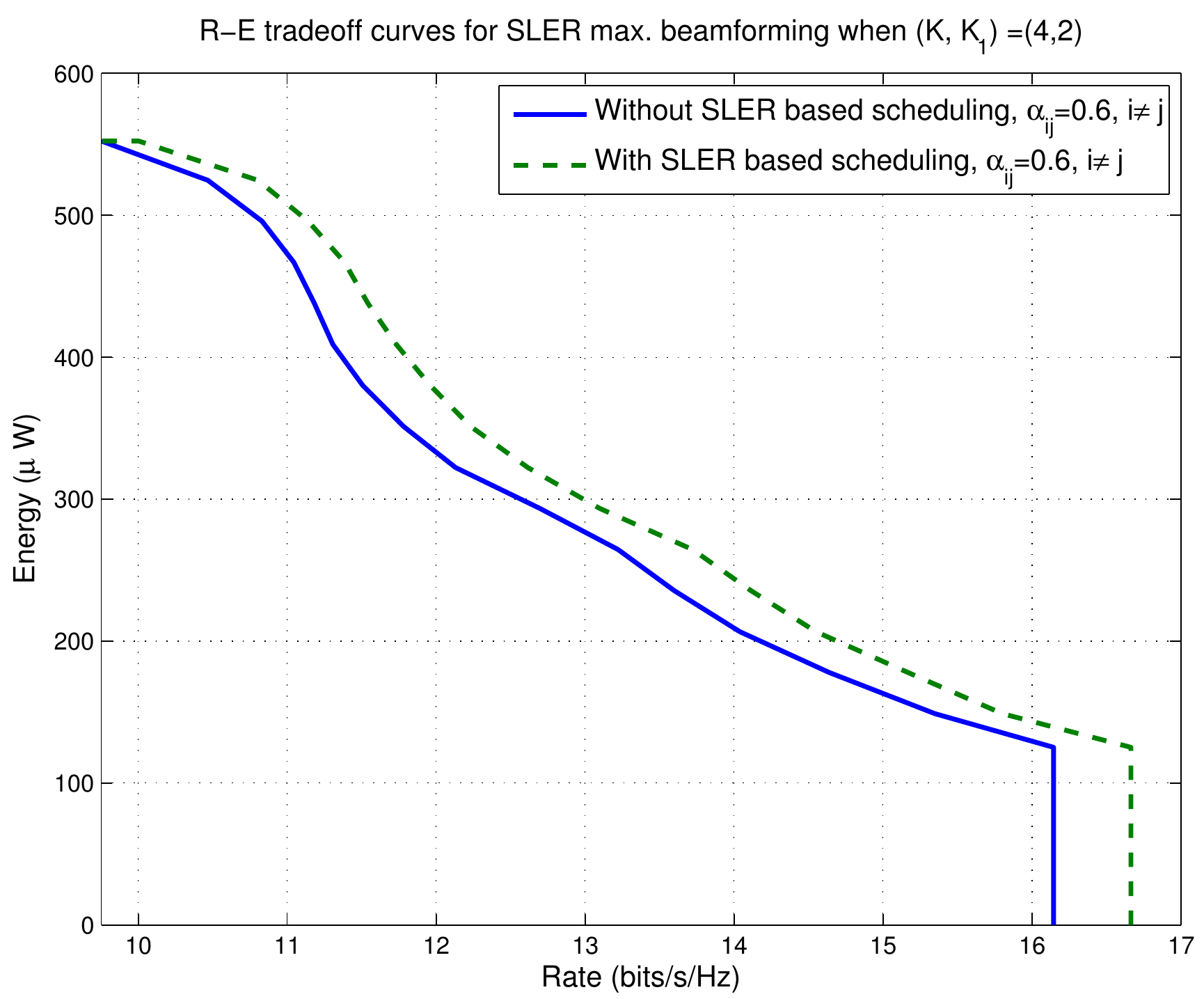}}
 \caption{   R-E tradeoff curves for
SLER maximizing beamforming with/without SLER-based user selection
when (a) $\alpha_{ij}= 0.3$ and (b) $\alpha_{ij}= 0.6$ for $i\neq
j$. Here, $M=4$.} \label{R_E_tradeoff_GSVD_scheduling}
\end{figure}

\begin{remark}\label{remark_largenumberofEHs}
In Fig. \ref{Fig:R_E_tradeoff_multipleEH} (c), SLER maximizing beamforming is outperformed by MLB at around a harvested energy of 450$\mu W$. This comes from the fact that the beam directions of energy transmitters are determined independently, because they do not share the CSIT.
Accordingly, even though the beam directions are determined to maximize their own SLERs, the aggregate interference may not be optimized (see also Remark \ref{remark_multiIDmultiEH}). Because the aggregate interference at the ID receivers is determined by both the directions and the powers of the energy beams, the information rate performance becomes increasingly sensitive to the beam steering and the power reduction as the number of energy transmitters increases. Therefore, in Fig. \ref{Fig:R_E_tradeoff_multipleEH} (c), the SLER maximizing beamforming with a small transmit power may be outperformed by the MLB with a large transmit power under the same harvesting energy (e.g., 450$\mu W$). Note that, in our simulations, we first fixed the beam directions of all energy transmitters as in Section \ref{ssec:rankoneBF}, and then reduce the powers of the energy transmitters in Algorithm 1. However, as stated in Remark \ref{remark_SLERBF1}, in the SLER maximizing beamforming, the beam direction softly bridges MEB and MLB depending on the scalar value multiplied by the identity matrix of the denominator in (\ref{GSVD1}). Therefore, if the harvested energy is enough in Step 2.b of Algorithm 1, before reducing the power of energy transmitters, we can tilt the beam to reduce the interference to the ID receivers by updating the energy beamforming vectors. Here, they can be updated by computing the GSVD of the matrix pair
\begin{eqnarray}\label{GSVD1_tilt}\nonumber
((\tilde{\bf
H}_{11}^{(k)})^H\tilde{\bf H}_{11}^{(k)}, (\tilde{\bf H}_{21}^{(k)})^H\tilde{\bf H}_{21}^{(k)} +\alpha^{n}{max(\bar E
/{K_1 P} -\|\tilde{\bf H}_{11}^{(k)}\|^2 ,0)}{\bf I}_M),
\end{eqnarray}
with a decaying factor $\alpha \in (0,1)$. In Fig. \ref{R_E_tradeoff_GSVD_tilt}, the R-E curve of a new SLER beamforming with beam tilting and power allocation is compared with that of SLER beamforming with only a power allocation when $(K, K_1)= (4,3), (5, 4)$. In our simulation, $\alpha$ is fixed as $0.9$.
%
We can see that the new SLER beamforming scheme exhibits better performance than the SLER without beam tilting and the effect of beam tilting is more apparent for $(K, K_1) =(5, 4)$. These evaluations show that further beamforming enhancements are possible by better jointly designing beam directions and power in K-user MIMO IFC.
\begin{figure}
\centering 
   \subfigure[]
  {\includegraphics[height=4.6cm]{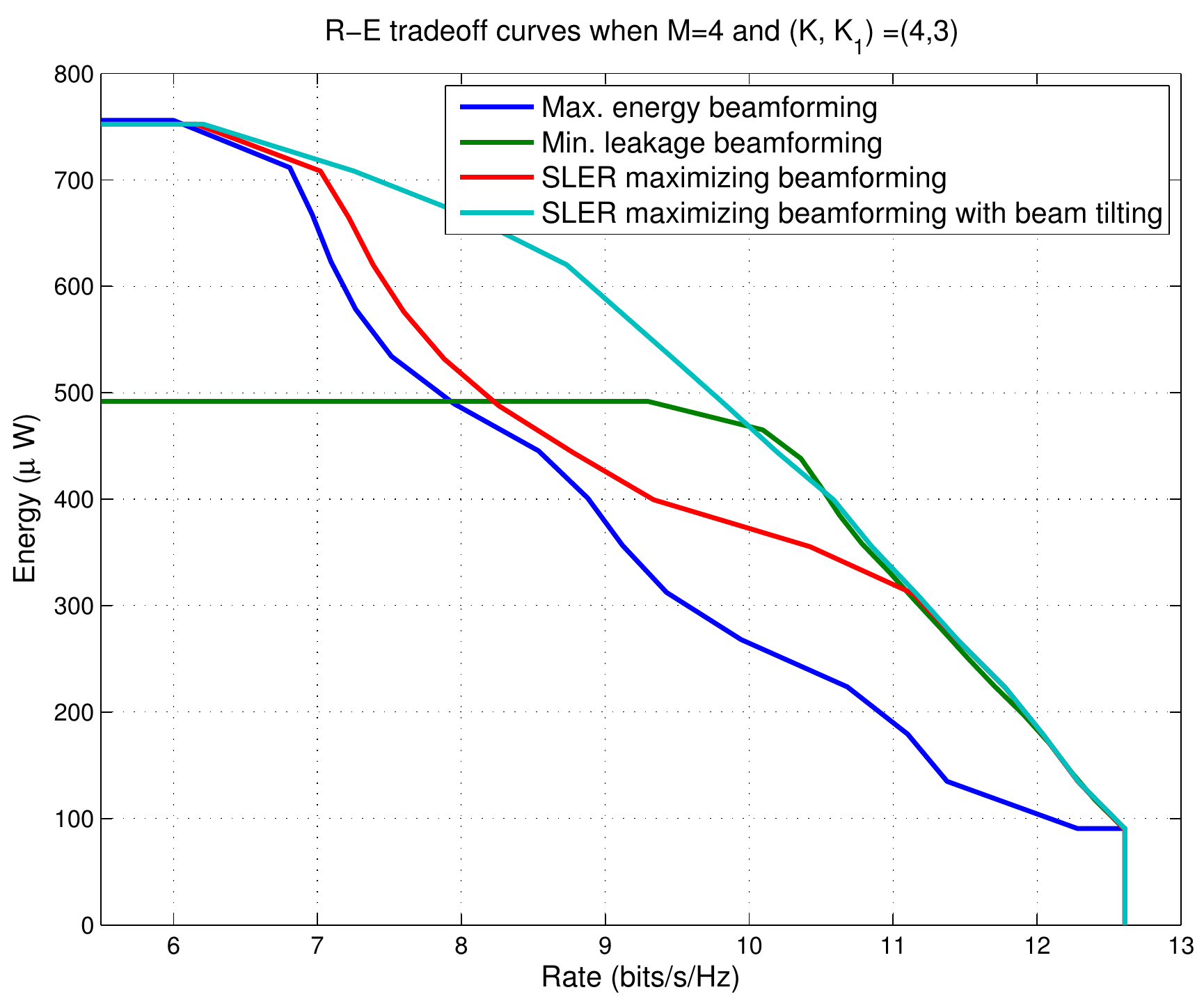}}
 \subfigure[]
  {\includegraphics[height=4.6cm]{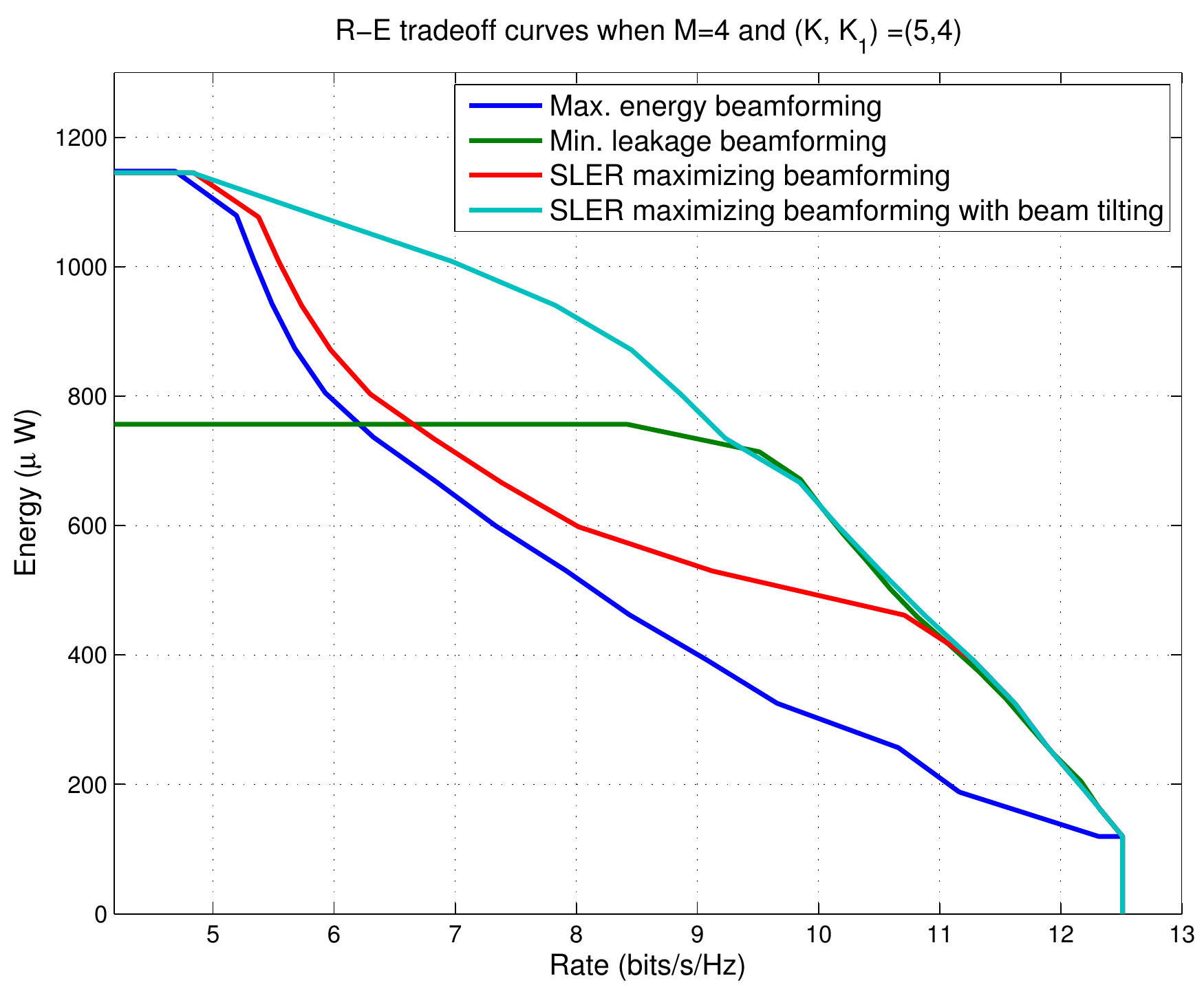}}
 \caption{R-E tradeoff curves for
SLER maximizing beamforming with/without beam tilting when (a) $(K,K_1) =(4,3)$ (b) $(K,K_1) =(5,4)$.} \label{R_E_tradeoff_GSVD_tilt}
\end{figure}
\end{remark}

\section{Conclusion}
\label{sec:conc} In this paper, we have investigated the joint
wireless information and energy transfer in K-user MIMO IFC.
The exact R-E curve for general K-user MIMO IFC is not known, but we have shown that the optimal energy transmitter's strategies for three different scenarios - i) multiple EH receivers and a single ID receiver, ii) multiple IDs and a single
EH, and iii) multiple IDs and multiple EHs - also become optimal for the {\it{properly-transformed}} two-user MIMO IFC. Accordingly,
we have found a common necessary condition of the optimal transmission strategy, in which all the transmitters transferring energy exploit a
rank-one energy beamforming. Furthermore, given the rank-one beamforming at the energy transmitters, we have
also developed the iterative algorithm for the non-convex optimization problem of the achievable rate-energy region. By comparing three different rank-one beamforming - MEB, MLB, and SLER maximizing beamforming, we can find that MEB and MLB either maximize the harvested energy or the information rate, but SLER maximizing beamforming scheme pursues both in a well-balanced way, showing a wider R-E region than that achieved by both MEB and MLB. Interestingly, when the number of information transmitters increases (interference-limited information transfer system where the interference due the information transmitters is dominant), the optimal strategy at the energy transmitters becomes close to MEB method. In contrast, when the number of energy transmitters increases, the beam steering as well as the power reduction affects the information rate performance, which lead us to develop the SLER maximizing beamforming with beam tilting. If the transmitters know the global CSI and a centralized optimization is possible, previous results motivate the energy beam alignment in cooperation with other transmitters. Finally, we have proposed an efficient SLER-based EH transceiver selection method which improves the achievable rate-energy region further.

Motivated by the information transfer \cite{QShi}, our approach can be extended to the MIMO interference broadcast channel (IBC). In addition, if we consider the power splitting method, new variables for the power splitting ratio at the receivers should be optimized in conjunction with the transmission strategy, which will be another challenging future work.

\renewcommand\baselinestretch{0.7}
\bibliographystyle{IEEEtran}
\bibliography{IEEEabrv,myref}

\end{document}